\documentclass[10pt, numbers]{sigplanconf}
\usepackage[utf8]{inputenc} 
\usepackage[T1]{fontenc}
\usepackage{microtype}
\usepackage{amssymb,amsmath}
\usepackage{enumerate,paralist,enumitem}
\usepackage[ruled,vlined]{algorithm2e}
\usepackage{lipsum}
\usepackage{soul}
\usepackage{color}
\usepackage[lofdepth,lotdepth,caption=false]{subfig}
\usepackage{multicol}
\usepackage{tabularx}
\usepackage{url}
\usepackage{xargs}
\usepackage{adjustbox}
\usepackage{multirow}
\usepackage[table,xcdraw]{xcolor}
\usepackage{etoolbox}
\usepackage{amsthm}
\usepackage{tikz}
\usetikzlibrary{tikzmark}

\newcommand*{\mytt}{\fontfamily{lmtt}\selectfont}
\newcommand*{\code}[1]{{\mytt #1}}

\newtheorem{defn}{Definition}
\newtheorem{thm}{Theorem}





%


\usepackage[colorinlistoftodos,prependcaption,textsize=tiny]{todonotes}
\newcommandx{\unsure}[2][1=]{\todo[linecolor=red,backgroundcolor=red!25,bordercolor=red,#1]{#2}}
\newcommandx{\change}[2][1=]{\todo[linecolor=blue,backgroundcolor=blue!25,bordercolor=blue,#1]{#2}}
\newcommandx{\info}[2][1=]{\todo[linecolor=OliveGreen,backgroundcolor=OliveGreen!25,bordercolor=OliveGreen,#1]{#2}}
\newcommandx{\improvement}[2][1=]{\todo[linecolor=Plum,backgroundcolor=Plum!25,bordercolor=Plum,#1]{#2}}
\newcommandx{\thiswillnotshow}[2][1=]{\todo[disable,#1]{#2}}

\newcommand\mc{\multicolumn}

\newcommand\set[1]{\{#1\}}

\newcommand\mx{\sqcup}
\newcommand\cle{\sqsubseteq}

\newcommand\cnflct{\asymp}

\newcommand{\ids}{=^\sig}

\newcommand{\lcp}{\prec_{\textit{CP}}}
\newcommand{\lcps}{\prec_{\textit{CP}}^\sig}
\newcommand{\cp}{\le_{\textit{CP}}}
\newcommand{\cps}{\le_{\textit{CP}}^\sig}

\newcommand{\nlw}{\nprec_{\textit{WCP}}}
\newcommand{\lwcp}{\prec_{\textit{WCP}}}
\newcommand{\wcp}{\le_{\textit{WCP}}}
\newcommand{\wcps}{\le_{\textit{WCP}}^\sig}
\newcommand{\lws}{\prec_{\textit{WCP}}^\sig}

\newcommand{\hb}{\le_{\textit{HB}}}
\newcommand{\rel}[1]{{\mathtt{rel(#1)}}}
\newcommand{\acq}[1]{{\mathtt{acq(#1)}}}
\newcommand{\sync}[1]{{\mathtt{sync(#1)}}}
\newcommand{\acrl}[1]{{\mathtt{acrl(#1)}}}
\newcommand{\cs}{\textup{CS}}
\renewcommand\l{\mathtt{l}}

\newcommand{\hbs}{\le_{\textit{HB}}^\sig}

\newcommand{\trs}{<_{\textit{tr}}^\sig}
\newcommand{\tr}{<_{\textit{tr}}}
\newcommand{\tres}{\le_{\textit{tr}}^\sig}
\newcommand{\tre}{\le_{\textit{tr}}}
\newcommand{\tos}{<_{\textit{TO}}^\sig}
\newcommand{\tho}{<_{\textit{TO}}}
\newcommand{\toes}{\le_{\textit{TO}}^\sig}
\newcommand{\toe}{\le_{\textit{TO}}}

\newcommand{\mtc}{\textit{match}}

\newcommand{\W}[1]{{\mytt w(#1)}}
\newcommand{\R}[1]{{\mytt r(#1)}}

\newcommand\proj{\upharpoonright}

\newcommand\sig{\sigma}

\renewcommand\H{\mathbb{H}}
\renewcommand\P{\mathbb{P}}
\renewcommand\L{\mathbb{L}}
\newcommand{\cN}{\mathbb{N}}
\newcommand\C{\mathbb{C}}
\newcommand{\Rc}{\mathbb{R}}
\newcommand{\Wc}{\mathbb{W}}

\newcommand{\tool}{{\textsc{Rapid}}}
\newcommand{\rvpredict}{$\mathsf{RVPredict}$}

\newtoggle{techreport}
\toggletrue{techreport}


\usepackage{booktabs}

\title{Dynamic Race Prediction in Linear Time}
\authorinfo{Dileep Kini\and Umang Mathur\and Mahesh Viswanathan}
 	{University of Illinois at Urbana-Champaign, USA}
 	{dileeprkini@gmail.com, umathur3@illinois.edu, vmahesh@illinois.edu}


\usepackage{flushend}

\begin{document}

\toappear{}


\maketitle

\begin{abstract} Writing reliable concurrent software remains a huge challenge
for today's programmers. Programmers rarely reason about their code by
explicitly considering different possible inter-leavings of its execution.  We
consider the problem of detecting data races from individual executions in a
sound manner. The classical approach to solving this problem has been to use
Lamport's happens-before (HB) relation. Until now HB remains the only approach
that runs in linear time. Previous efforts in improving over HB such as
causally-precedes (CP) and maximal causal models fall short due to the fact that
they are not implementable efficiently and hence have to compromise on their
race detecting ability by limiting their techniques to bounded sized fragments
of the execution. We present a new relation weak-causally-precedes (WCP) that
is provably better than CP in terms of being able to detect more races, while
still remaining sound. Moreover it admits a linear time algorithm which works on the
entire execution without having to fragment it. \end{abstract}

\begin{CCSXML}
<ccs2012>
<concept>
<concept_id>10011007.10011074.10011099.10011102.10011103</concept_id>
<concept_desc>Software and its engineering~Software testing and debugging</concept_desc>
<concept_significance>500</concept_significance>
</concept>
<concept>
<concept_id>10011007.10011074.10011099.10011692</concept_id>
<concept_desc>Software and its engineering~Formal software verification</concept_desc>
<concept_significance>300</concept_significance>
</concept>
</ccs2012>
\end{CCSXML}

\ccsdesc[500]{Software and its engineering~Software testing and debugging} 
\ccsdesc[300]{Software and its engineering~Formal software verification}

\keywords
Concurrency, Data race, Prediction


\section{Introduction}\label{sec:intro}

Writing reliable concurrent program remains a huge challenge;
depending on the order in which threads are scheduled, there are a
large number of possible executions. Many of these executions remain
unexplored despite extensive testing. The most common symptom of a
programming error in multi-threaded programs is a data race. A data
race is a pair of \emph{conflicting} memory accesses such that in some
execution of the program, these memory accesses are performed
consecutively; here, by conflicting memory accesses, we mean, pair of
read/write events to the same memory location performed by different
threads, such that at least one of them is a write. The goal of
dynamic race detectors is to discover the presence of a data race in a
program by examining a \emph{single} execution. Given its singular
role in debugging multi-threaded programs, dynamic race detection has
received robust attention from the research community since the
seminal papers~\cite{lamport1978time,savage1997eraser} more than two decades
ago.

All dynamic race detection algorithms can be broadly classified into
three categories. First are the \emph{lock-set} based
approaches~\cite{savage1997eraser,elmas2007goldilocks} that detect \emph{potential} data
races by tracking the set of locks held during each data access. These
methods are fast and have low overhead, but are \emph{unsound} in that
many potential races reported are spurious. The second class of
techniques falls in the category of \emph{predictive runtime analysis}
techniques~\cite{rv2014,Said2011}. Here the race detector explores all possible
reorderings of the given trace, searching for a possible witness that
demonstrates a data race. These techniques are precise --- races
detected are indeed data races, and they are likely to find all such
races. The downside is that they are expensive. A single trace has
potentially exponentially many reorderings. Therefore, these
techniques are applied by slicing the trace in small-sized fragments,
and searching for a race in these short fragments. The last class of
techniques are what we call \emph{partial order} based techniques. In
these, one identifies a partial order $P$ on the events in the trace
such that events unordered by $P$ correspond to ``concurrent
events''. These algorithms are sound (presence of unordered
conflicting events indicates a data race) and have low overhead
(typically polynomial in the size of the trace). However, they are
conservative and may miss anomalies detected by the predictive runtime
analysis techniques. The approach presented in this paper falls in
this last category.

Happens-before (HB)~\cite{lamport1978time} is the simplest, and most commonly used
partial order to detect races. It orders the events in a trace as
follows: (i) two events performed by the same thread are ordered the
way they appear in the trace, and (ii) synchronization events across
threads are also HB-ordered in order of appearance in the trace if
those events access the same synchronization objects. 
Rule (i) says that we cannot reorder events within a thread because we have no
information about the underlying program which allows us to infer an
alternate execution of the thread. Rule (ii) says that we cannot
reorder synchronization events on the same objects (in our case locks)
as they would lead to violation of mutual exclusion (critical sections
on same lock should not overlap---also referred to as \emph{lock
semantics}). Consider for example the trace shown in
Figure~\ref{fig:no-swap}. (In all the example figures we follow the
convention of representing events of the trace top-to-bottom, where
temporally earlier events appear above the later ones. We also use the
syntax of $\acq{l}/\rel{l}$ for acquire/release events of lock
$\tt{l}$ and \R{x}/\W{x} for read/write events on variable $\tt{x}$.)
The events on lines 4 and 5 cannot be interchanged temporally as
mutual exclusion will be violated.  We could consider circumventing
this lock semantics violation by reordering the entire critical
sections (which would also change relative positions of events on
lines 4, 5), but we cannot infer such a move because the \R{x} event
of thread $t_2$ could see a different value which could cause
alternative executions in the underlying program and hence the events following
it might be different.  So in this case the
HB reasoning, of avoiding lock semantics violation correctly, though
unwittingly, prevented the swapping of critical sections (the swapping
does not violate lock semantics). Had the two \W{x} events been absent,
we could have actually swapped the two critical sections temporally to
get a feasible alternate execution. For example, the trace in
Figure~\ref{fig:swap} can be reordered to expose a race on the access
of $\tt{y}$ by performing the critical section of $t_2$ before the
other. Such a race is called \emph{predictable} as the trace that
exposes it can be obtained from the given trace by rearranging the
temporal order of events across threads. 
That is, it can
be predicted from the trace without having to look at the underlying
program.  For Figure~\ref{fig:swap}, HB will still not
declare a race since events on lines 4 and 5 would be ordered.

\begin{figure}[t]
\centering
\subfloat[cannot swap critical sections]{\label{fig:no-swap}
	\begin{tabular}{c}
		{\mytt
		\begin{tabularx}{.2\textwidth}{r|XX|}
		\cline{2-3}
		&\mc{1}{c}{$t_1$} & \mc{1}{c|}{$t_2$}\\
		\cline{2-3}
		1 & acq(l) &\\
		2 & r(x) &\\
		3 & w(x) &\\
		4 & rel(l) &\\
		
		5 && acq(l)\\
		6 && r(x)\\
		7 && w(x)\\
		8 && rel(l)\\
		\cline{2-3}
		\end{tabularx}}\\
	\end{tabular}
}
\subfloat[][can swap critical sections]{\label{fig:swap}
	\begin{tabular}{c}
		{\mytt
		\begin{tabularx}{.2\textwidth}{r|XX|}
		\cline{2-3}
		&\mc{1}{c}{$t_1$} & \mc{1}{c|}{$t_2$}\\
		\cline{2-3}
		1 & w(y) &\\
		2 & acq(l) &\\
		3 & r(x) &\\
		4 & rel(l) &\\
		
		5 && acq(l)\\
		6 && r(x)\\
		7 && rel(l)\\
		8 && r(y)\\
		\cline{2-3}
		\end{tabularx}}\\
	\end{tabular}
}
\caption{Example traces showing when critical sections can/cannot be swapped. }
\label{fig:hb}
\end{figure}
The partial order Causally-Precedes (CP)~\cite{cp2012} was introduced
to detect races missed by HB such as those in Figure~\ref{fig:swap}
while being sound. The CP relation is a subset of HB, which implies
that CP can detect races above and beyond those detected by HB. 
Soundness of CP guarantees that a CP race is either an actual race or a deadlock. 
But there are two
main drawbacks of CP. Firstly CP misses races that are
predictable. Consider the traces shown in Figure~\ref{fig:cp}; the
only difference between the two traces is that lines 6 and 7 have been
swapped. There is no predictable race in Figure~\ref{fig:unpredict}
because \R{x} in $t_2$ (line 6) must be performed after \W{x} in $t_1$
(line 3), which prevents the critical sections from being
reordered. On the other hand, Figure~\ref{fig:predict} has a
predictable race on $\tt{y}$ --- the sequence $e_5,e_6,e_1$ reveals
the race ($e_i$ refers to the event at line $i$). CP, however, does
not detect a race in either trace, because it is agnostic to the
ordering of events within a critical section. The
second drawback of CP is that, while it can be detected in polynomial
time~\cite{cp2012}, there is no known linear time algorithm~\footnote{We
  believe that there is quadratic time lower bound on \emph{any} CP
  algorithm.}. This severely hampers its use
on real-world examples with traces several gigabytes large. 
So any implementation of CP must resort to \emph{windowing} 
where the trace is partitioned into small fragments.
This means that it can only find races within
bounded fragments of the trace, and so detects fewer races than what CP
promises. Our experiments (Section~\ref{sec:experiments}) reveal
windowing to be a serious impediment to race detection in large
examples.
A recent implementation of CP~\cite{raptor2016} performs an online analysis
 while avoiding windowing,
even though, theoretically, its running time is not linear. 
Currently, it seems slower than our implementation of WCP, 
as it processes roughly a few million events in a few hours.


We address the two drawbacks of CP in one shot. We
propose a partial order Weak Causally-Precedes (WCP) which, as the
name suggests, is a weakening of the CP partial order. Thus WCP
detects all races that CP does and even more (like the race in
Figure~\ref{fig:predict} as explained in
Section~\ref{sec:illustrations}). We prove that WCP enjoys the same
soundness guarantees as CP. Additionally, like HB, WCP admits a linear
time Vector-Clock algorithm for race detection, thus solving the main
open problem proposed in~\cite{cp2012}. This is surprising because
when HB was weakened to obtain CP it resulted in detecting more races
but came at the cost of an expensive algorithm. But weakening CP to
WCP not only allows for detecting more races but enables an efficient
algorithm.

\begin{figure}[t]
\centering
\subfloat[no predictable race]{\label{fig:unpredict}
	\begin{tabular}{c}
		{\mytt
		\begin{tabularx}{.2\textwidth}{r|XX|}
		\cline{2-3}
		& \mc{1}{c}{$t_1$} & \mc{1}{c|}{$t_2$}\\
		\cline{2-3}
		1 & w(y) &\\
		2 & acq(l) &\\
		3 & w(x) &\\
		4 & rel(l) &\\
		
		5 && acq(l)\\
		6 && r(x)\\
		7 && r(y)\\
		8 && rel(l)\\
		\cline{2-3}
		\end{tabularx}}\\
	\end{tabular}
}
\subfloat[][predictable race]{\label{fig:predict}
	\begin{tabular}{c}
		{\mytt
		\begin{tabularx}{.2\textwidth}{r|XX|}
		\cline{2-3}
		& \mc{1}{c}{$t_1$} & \mc{1}{c|}{$t_2$}\\
		\cline{2-3}
		1 & w(y) &\\
		2 & acq(l) &\\
		3 & w(x) &\\
		4 & rel(l) &\\
		
		5 && acq(l)\\
		6 && r(y)\\
		7 && r(x)\\
		8 && rel(l)\\
		\cline{2-3}
		\end{tabularx}}\\
	\end{tabular}
}
\caption{Example traces showing how CP misses race due to small change.}
\label{fig:cp}
\end{figure}

Our key contributions are the following:
\vspace{-0.01in}
\begin{enumerate}
\item We define a new \emph{sound} relation weak-causally-precedes (WCP),
  which is weaker than causally-precedes. The definition and its
  soundness proof, as claimed in~\cite{cp2012}, is challenging.
\begin{quote}
``It is worth emphasizing that multiple researchers have fruitlessly
  pursued such a weakening of HB in the past. \ldots (Both the
  definition of CP and our proof of soundness are results of
  multi-year collaborative work, with several intermediate failed
  attempts.)''
\end{quote}
Even though we had the benefit of following the work on CP, our
experience concurs with the above observation. The subtlety in these
ideas is highlighted by the fact that the soundness proof for CP,
presented in~\cite{cp2012}, is \emph{incorrect}; informed readers can
find an explanation of the errors in the CP soundness proof in
\iftoggle{techreport}{Appendix~\ref{sec:cperrors}}{our companion technical report~\cite{techreport}}. 
Our attempts at fixing the CP proof led us to
multiple years of fruitless labor until finally the results presented
here. Our soundness proof for WCP (which also, by definition, applies
to CP) requires significant extensions to the proof ideas outlined
in~\cite{cp2012} (see \iftoggle{techreport}{Appendix~\ref{sec:wcpsound}}{\cite{techreport}} for the full proof).

\item We achieve the holy grail for dynamic race detection algorithms
  --- a \emph{linear} running time. It is based on searching for
  conflicting events that are unordered by WCP. We prove that our
  algorithm is \emph{correct}. 
  Our algorithm uses linear space in the worst case, 
  as opposed to the logarithmic space requirement of happens-before vector clock algorithm.
  However, in our experiments, we did not encounter these worst case bounds.
  Our algorithm scales to traces with hundreds of millions of events
  and the memory usage stays below $3\%$ for most benchmarks (see Table~\ref{tab:exp}).
  We further show that our algorithm is
  \emph{optimal} in terms of its asymptotic running time and memory
  usage.
  
{}
\item Our experiments show the benefits of our algorithm, in terms of
  the number of races detected and its efficiency when compared to
  state of the art tools such as \rvpredict{}. They reveal the
  tremendous power of being able to analyze the entire trace as opposed to
  trace fragments that other sound race detection algorithms (other than
  those based on HB) are forced to be restricted to.

\end{enumerate}

In Section~\ref{sec:relation}
we describe the partial order WCP, and how it is a weakening of the CP relation. 
Section~\ref{sec:algo} describes the Vector-Clock algorithm that implements
WCP faithfully and runs in linear time. 
In Section \ref{sec:experiments} we describe the implementation and the experimental
results. We provide related work in
Section~\ref{sec:related} and give concluding remarks in Section~\ref{sec:conclusion}.


\section{Weak Causal Precedence}
\label{sec:relation}

\subsection{Preliminaries}

In this paper we consider the sequential consistency model for assigning
semantics to concurrent programs wherein the execution is viewed as an
interleaving of operations performed by individual threads. The
possible operations include lock acquire and release ($\acq{l}$, $\rel{l}$) and
variable access which include read and write to variables (\R{x}, \W{x}).

\paragraph{Orderings:} Let $\sig$ be a sequence of events. We say $e_1$ is
\emph{earlier than} $e_2$ according to $\sig$, denoted by $e_1 \trs e_2$, when
$e_1$ is performed before $e_2$ in the trace $\sig$. We shall use $t(e)$ to
denote the thread that performs $e$. We say $e_1$ is \emph{thread ordered
before} $e_2$, denoted by  $e_1 \tos e_2$ to mean that $e_1 \trs e_2$ and
$t(e_1) = t(e_2)$. We use $\ids$ to denote the identity relation on events of
$\sig$ and $\tres,\toes$ to denote the relations $(\trs\cup\ids)$ and
$(\tos\cup\ids)$.  We shall drop $\sig$ from the superscript of the relations
when it is clear from context. We use $\sig\!\proj_t$ to denote the projection
of $\sig$ onto the events by thread $t$.


\paragraph{Lock events:} For a lock acquire/release event $e$, we use $l(e)$ to
denote the lock on which it is operating. For an acquire event $a$ we use
$\mtc(a)$ to denote the earliest release event $r$ such that $l(a) = l(r)$ and
$a \tho r$. We similarly define the match of a release event $r$ to be the
latest acquire $a$ such that $l(a) = l(r)$ and $a \tho r$.  A \emph{critical
section} is the set of events of a thread that are between (and including) an
acquire and its matching release, or if the matching release is absent then
all events that are thread order after an acquire. We use $\cs(e)$ to denote a
critical section starting/ending at an acquire/release event $e$.  For event
$e$ and lock $\ell$ we use $e {\in} \ell$ to denote that $e$ is contained in a
critical section over $\ell$.




\paragraph{Trace:} For a sequence of events $\sig$ to be called a \emph{trace} it needs to satisfy two
properties:
\begin{enumerate} \item \emph{lock semantics:} for any two acquisition events
$a_1$ and $a_2$ if $l(a_1) = l(a_2)$ and $a_1 \tr a_2$ then $r_1 = \mtc(a_1)$
exists in $\sig$ and $r_1 \tr a_2$
\item \emph{well nestedness:} for any critical section $C$,
if there exists an acquire event $a$ such that $a \in C$,
then $r = \mtc(a)$ exists and $r \in C$.
\end{enumerate}
\paragraph{Race:} Two events are said to be \emph{conflicting} if they access
the same variable and at least one of them is a write and the events  are
performed by different threads. We use $e_1 \asymp e_2$ to denote that $e_1$
and $e_2$ are conflicting.  When one can execute a concurrent program such
that conflicting events can be performed next to each other in the trace, we
say that the trace has revealed a \emph{race} in the program.

\paragraph{Deadlock:} When a program is being executed to obtain a trace, the
scheduler picks a thread and performs the ``next event'' in that thread. Note
that when this event is performed the next event of the other threads is not
going to be affected. This concept of next event is needed for understanding a
deadlock. A trace is said to reveal a \emph{deadlock} when a set of threads
$D$ cannot proceed because each of them is trying to acquire a lock that is
held by another thread in that set. In other words, the next event in each
thread in $D$ is an $\acq{l}$ such that the lock $\tt{l}$ is acquired by another
thread in $D$ without having released it. In such a situation none of the
threads in $D$ can proceed because lock semantics will be violated, and no
matter how the rest of the threads proceed the next event of these threads
will not change.

\paragraph{Predictability and Correct Reordering:} In order to formalize the concept of predictable
race/deadlock we need the notion of correct  reordering.  
A trace $\sig'$ is said to be a \emph{correct reordering} of another 
trace $\sig$ if for every thread $t$, $\sig'\!\proj_t$ is a prefix of $\sig\!\proj_t$, 
and the last \W{x} event before any \R{x} event is the same in both $\sig$ and $\sig'$.
This ensures that every read event in $\sig'$ sees/returns the same value as it did in $\sig$.
We say a trace $\sig$ has a \emph{predictable race} (\emph{predictable deadlock}) if there is a correct
reordering of it which exhibits a race (deadlock). 


\subsection{Partial Orders}

\newcommand{\E}{\mathcal{E}}
\newcommand{\po}{\le_P^{\sig}}
\newcommand{\pr}{\parallel}

Given a trace $\sig$, we consider various partial orders on its events.
Formally, let $\E$ be the set of events in $\sig$.
A partial order $P$ is a
binary relation $\po$ on $\E$ (i.e. $\po \subseteq \E{\times}\E$) 
which is reflexive, antisymmetric and transitive. {The relations $\tres,\toes$
defined earlier are examples of partial orders}. 
We say two events $e_1,e_2$
are unordered by a partial order $P$, denoted by $e_1 \pr_P e_2$, when neither
$e_1 \po e_2$, nor $e_2 \po e_1$.


We say a trace $\sig$ exhibits a $P$-race between events $e_1,e_2$ when they
are conflicting ($e_1 \cnflct e_2$) and are unordered by $P$ ($e_1 \pr_P
e_2$). The aim, in this race-detection paradigm of using partial orders, is to
design partial orders $P$ which can guarantee that for any $\sig$, the presence
of $P$-race in $\sig$ implies the presence of a predictable-race in $\sig$.
Such partial orders are said to be $\emph{strongly sound}$. 
A partial order is said to be \emph{weakly sound} if
for any trace containing unordered conflicting events there is a correct
reordering of the trace which reveals either a race or a deadlock. From a
programmer's perspective deadlocks are as undesirable as races. 
Additionally, a weaker notion might allow for detection of additional races
that are otherwise harder to detect. 
For programs which are
guaranteed to be deadlock-free, the two notions coincide.



Before we delve into specific partial orders, we present some intuition
regarding partial orders and alternative executions. Linearizations of a
partial order $\po$ are possible executions of the trace with different
interleavings of the threads. 
If two events are unordered by some partial order, then we can
obtain a linearization in which the two events are placed next to one another
(performed simultaneously). 
But this by itself does not imply soundness. 
This is because a linearization $\rho$ of $\po$ 
(i) might not be a valid trace as it might violate lock semantics, or 
(ii) might not be a correct reordering of $\sig$ as there might be a read event 
\R{x} whose corresponding last \W{x} event in $\rho$ does
not match with that in $\sig$. 
Therefore, two conflicting events unordered by a
partial order only indicates the possibility of, and does not necessarily
guarantee the existence of a predictable race.
The cleverness lies in designing partial orders
for which this possibility is indeed a guarantee.


We will now describe partial orders HB, CP (from literature) and WCP
(our contribution).

\begin{defn}[Happens-Before] Given a trace $\sig$, $\hbs$ is the smallest partial order 
on the events in $\sig$ with $\tos \subseteq \hbs$ that satisfies the following rule:
for a $\rel{l}$ event $r$ and an $\acq{l}$ event $a$, if $r \trs a$, then $r \hbs a$.
\end{defn}

Two salient features of HB are: (i) it is strongly sound, and
(ii) it can be computed in linear time.
Several race detection tools \cite{Pozniansky:2003:EOD:966049.781529,fasttrack} have been
developed using this technique. 
However, as discussed in Section~\ref{sec:intro},
HB can miss many races.
The Causally-Precedes (CP)
partial order was then introduced \cite{cp2012} as an improvement over HB. CP
is a subset of HB i.e it has fewer orderings and  hence more possible
interleavings.
It is thus able to detect more races than HB. CP is
proved to be weakly sound. 
However, it is not known if CP can be
computed in linear time. 
The race detection algorithm for CP is
{polynomial} time but not linear. 

\begin{defn}[Causally-Precedes]
Given a trace $\sig$, $\lcps$ is the smallest relation
satisfying the following rules:

\begin{enumerate}[label=\textup{{(\alph*)}},itemsep=4pt]

\item 
for a $\rel{l}$ event $r$ and an $\acq{l}$ event $a$ with $r \trs a$,
if the critical sections of $r$ and $a$ contain conflicting 
events $(e_1 \in \cs(r)$, $e_2 \in \cs(a)$, $e_1 \cnflct e_2)$, 
then $r \lcps a$.

\item for a $\rel{l}$ event $r$ and an $\acq{l}$ event $a$ with $r \trs a$,
if the critical sections of $r$ and $a$ contain CP-ordered 
events $(e_1 \in \cs(r), e_2 \in \cs(a), e_1 \lcps e_2)$,
 then $r \lcps a$.

\item $\lcps = {(\lcps \circ \hbs)} = {(\hbs \circ \lcps)}$,
i.e., $\lcps$ is closed under composition with $\hb$.

\end{enumerate}
\end{defn}


As seen in the examples in Figure~\ref{fig:cp},
CP is agnostic to the relative order of the events inside 
the same critical sections.
That is, if we were to consider two
read/write events that are enclosed within the same set of critical sections
in the trace, and we interchanged their positions, then the resulting trace
would have exactly the same CP orderings across threads. This is because Rule
(a) of CP, that depends upon the position of read/write events, is only
concerned with whether or not it occurs inside some critical section and not
how they are relatively order within the critical section. 
This constraint prevents CP it from detecting races, as we saw in
Figure~\ref{fig:predict}.

Next we look at WCP, the partial order we introduce in this paper. It is
obtained by weakening rules (a) and (b) of CP as follows:

\begin{defn}[Weak-Causally-Precedes] 

Given a trace $\sig$, $\lws$ is the smallest relation that
satisfies the following rules:

\begin{enumerate}[label=\textup{{(\alph*)}},itemsep=4pt]

\item 
for a $\rel{l}$ event $r$ and a read/write event $e \in \l$ with $r \trs e$,
if $\cs(r)$ contains an event conflicting 
with $e$ $(e' \in \cs(r)$, $e \cnflct e')$,
then $r \lws e$.

\item 
for $\rel{l}$ events $r_1, r_2$ with $r_1 \trs r_2$,
if the critical sections of $r_1, r_2$ contain WCP-ordered events
${(e_1 \in \cs(r_1)}$, $ {e_2 \in \cs(r_2)}$, $e_1 \lws e_2)$, 
then ${r_1 \lws r_2}$.

\item $\lws = {(\lws \circ \hbs)} = {(\hbs \circ \lws)}$, i.e.,
 $\lws$ is closed under composition with $\hbs$,

\end{enumerate}
\end{defn}

Note that, Rule (a) of WCP orders the release event before the
read/write event (and not the acquire as in CP). 
Thus, WCP makes a distinction between events of the same thread 
based on the relative order inside a critical section.

Note that $\lcps$ and $\lws$ are not partial orders as they are not reflexive.
And unlike $\hbs$ they do not contain thread order. 
But if we consider the relations $\cps = (\lcps \cup \toes)$ and $\wcps = (\lws \cup \toes)$, then both
$\cps$ and $\wcps$ are partial orders. When defining races we use these partial
orders. 
When clear from the context, we drop $\sig$ from superscript of the relations.
Rules (a) and (b) in WCP are weaker
versions of rules (a) and (b) in CP respectively. One can prove inductively
that $\wcp \subseteq \cp$ hence any CP-race is also a WCP-race. Next we state the soundness theorem
for WCP whose proof is provided in \iftoggle{techreport}{the Appendix}{\cite{techreport}}.

\begin{thm}[Soundness of WCP]\label{thm:sound} 
WCP is weakly sound, i.e.,
given any trace $\sig$, if $\sig$ exhibits a WCP-race then $\sig$ exhibits a
predictable race or a predictable deadlock.  \end{thm}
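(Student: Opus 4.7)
The plan is to follow the overall template of the CP soundness proof in \cite{cp2012}, but carefully redo the construction since rules (a) and (b) of WCP relate release events to \emph{later} events (rather than to matching acquires), which changes what must be included in a witnessing reordering. I would start by fixing a WCP-race $e_1 \cnflct e_2$ with $e_1 \trs e_2$ (say), and pick this pair to be \emph{earliest} in some well-founded sense, e.g., minimizing the index of $e_2$ in $\sigma$. This minimality will be essential: any conflicting events occurring strictly before $e_2$ in $\sigma$ that are unordered by $\wcp$ would give a smaller witness, so we can assume that within the prefix we work with, conflicts are always $\wcp$-ordered.

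Next I would construct a candidate correct reordering $\rho$ of $\sigma$. Intuitively, $\rho$ should consist of a ``downward-closed'' set of events under some carefully chosen partial order, placed so that $e_1$ is immediately followed by $e_2$. The natural candidate is: take all events $e$ with $e \hbs e_1$, together with the thread-order prefix of $e_2$'s thread up to $e_2$, and then surgically add events so that every \R{x} in the prefix sees the same last \W{x} as in $\sigma$. To linearize this set into a valid execution, I would proceed greedily, scheduling events in an order that extends $\tos$ and the restriction of $\hbs$ to this event set. The two things that can go wrong in such a scheduling are (i) a read event becomes observable by a different writer than in $\sigma$, forcing us to include additional writers in the reordering, and (ii) two critical sections on the same lock overlap, violating lock semantics. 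Handling (i) would require iteratively augmenting $\rho$ with whatever writers/readers are required, and arguing the process terminates because each addition is strictly earlier than $e_2$ in $\sigma$, using minimality.

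The main obstacle is (ii): the construction must either complete without violating lock semantics, or else produce a predictable deadlock. This is where the specific shape of WCP pays off. When two critical sections on the same lock $\l$ get tangled in $\rho$, their releases $r_1 \trs r_2$ in $\sigma$ would, if any read/write event inside one conflicts with one inside the other, force $r_1 \lws$ something inside the later critical section via Rule~(a); and combined with Rule~(c) closure under $\hbs$, this would give $e_1 \lws e_2$ or a similar contradiction with $e_1 \pr_{\wcp} e_2$. The subtlety, and the main work, is to show that the \emph{only} way the greedy scheduling can get stuck is when multiple critical sections on distinct locks are mutually blocking in a cyclic pattern; such a cycle is exactly a predictable deadlock, realized by scheduling each thread up to its blocking acquire.

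Finally I would check that $\rho$ (in the non-deadlock case) is a genuine correct reordering: per-thread projections form prefixes of $\sigma$'s per-thread projections by construction, every read in $\rho$ sees the same last writer as in $\sigma$ by the saturation procedure, and the suffix places $e_1$ and $e_2$ consecutively, exhibiting the race. The hardest technical step is the lock-semantics/deadlock dichotomy, because it requires a careful invariant about which events must be in $\rho$ and how WCP rule (b) propagates ordering constraints backwards through chains of conflicting critical sections; this is precisely the place where, as the authors note, the original CP proof went wrong, so I would expect to spend most of the effort formalizing this invariant and verifying that the weaker rule (a) of WCP (releases before reads/writes, not before acquires) is still strong enough to close the argument.
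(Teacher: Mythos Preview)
Your plan diverges from the paper's in a structural way, and the divergence opens a real gap. You attempt to \emph{build} a witnessing reordering $\rho$ from scratch (downward-close under $\hb$, add the $t(e_2)$-prefix, then iteratively augment for read--write consistency). The paper instead works \emph{within} the space of correct reorderings of $\sigma$, which is nonempty since $\sigma$ reorders itself: it picks an \emph{extremal} correct reordering $\alpha$ that lexicographically minimizes first the $e_1$--$e_2$ distance and then the distances from $e_2$ to the acquires enclosing $e_1$, and then analyzes what obstructs shrinking those distances further. Extremality yields, essentially for free, a battery of structural facts (every event strictly between $e_1$ and $e_2$ is HB-between them but WCP-unordered with both; any unmatched acquire lies in $t_2$; critical sections straddling an acquire enclosing $e_1$ are WCP-after it; etc.) that drive the rest of the argument. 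Your direct construction has no analogue of these lemmas, and without them the later steps do not go through.

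Two concrete places where the gap bites. First, your ``iteratively augment $\rho$ with whatever writers are required'' can cascade: the writer you pull in drags its thread-order prefix, which may contain acquires whose matching releases now collide with critical sections already in $\rho$, forcing further additions; nothing in ``each addition is earlier than $e_2$'' bounds this. The paper never augments --- an RW violation during a candidate move is shown to imply $e_1 \lwcp e_2$ via Rule~(c) composition, a contradiction with the assumed WCP-race. Second, your claim that a stuck greedy scheduler must reveal a cyclic blocking pattern is essentially the theorem itself. The paper spends the bulk of its appendix here: it isolates a distinguished first-out-of-thread event $f$ and an associated release $g'$, separates two regimes (type-1 vs.\ type-2) depending on whether $e_1\in(g,g')$, builds \emph{deadlock chains}, promotes them to \emph{deadlock patterns} carrying a numerical \emph{rank}, and proves by induction on rank (via a ``short-circuiting'' surgery on the pattern) that a genuine predictable deadlock exists --- one that may involve \emph{more than two} threads, a phenomenon absent from the CP setting and not anticipated in your outline. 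Following the CP template will not get you there; the multi-thread chain/pattern machinery and the extremal-trace setup are the missing ideas.
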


\subsection{Illustrations}\label{sec:illustrations} Going back to the example
in Figure~\ref{fig:predict} let us see how WCP is able to detect a race that
CP cannot. Note that the $\rel{l}$ in $t_1$ is CP-ordered before the $\acq{l}$ in
$t_2$ by rule (a) of CP. Further using rule (c) of CP we obtain that the
operations on variable \code{y} are CP-ordered and therefore CP does not
detect the predictable race as uncovered by the trace $e_5,e_1,e_6$. WCP on
the other hand does not order the $\rel{l}$ and $\acq{l}$. Rule (a) of WCP only
orders the \R{x} in $t_2$ after the $\rel{l}$ of $t_1$. If we look at the
example to its left in Figure~\ref{fig:unpredict} the same reasoning can be
used to obtain that CP does not detect any race, and indeed there is no
predictable race. In the case of WCP since the \R{x} in $t_2$ appears above
the \R{y}, it ends up ordering \W{y} and \R{y} thus not declaring any race
either.

The intuitive reason for formulating rule (a) in CP is that 
``\emph{two conflicting events have to occur in the same order in 
every correctly reordered execution if it is the case that they 
do not constitute a race.
Consequently the critical sections containing them need 
to be ordered in their entirety}''~\cite{cp2012}. 
This intuition is correct when we know for sure
that the conflicting events \emph{do occur} in the correctly reordered
execution (a correctly reordered execution need not include all the events in
the given execution, but only prefixes for each thread). In the example in
Figure~\ref{fig:predict} the two operations on \code{x} do not appear in the
set of events that need to be scheduled in order to reveal a race, and hence
any orderings derived from them need not be considered. But then if the \R{x}
is indeed included in some (other) reordering then the $\rel{l}$ to $\acq{l}$
ordering should be respected. Rule (a) of WCP does not enforce this ordering
completely, it only makes sure that the later of the conflicting events is
ordered after the earlier release. It would seem that this ordering is not
sufficient to enforce lock semantics of the traces we would be interested in,
but it turns out that this weaker version is sufficient to guarantee
soundness.

We also weaken rule (b) in the same spirit as above. When considering critical
sections over the same lock containing ordered events, instead of ordering the
critical sections entirely (as CP does by ordering $\rel{l}$ of the earlier CS
and the $\acq{l}$ of the latter) WCP only orders  the earlier $\rel{l}$ before
the latter $\rel{l}$. 

\begin{figure}[h]
\centering
\begin{tabularx}{0.45\textwidth}{r|XXX|}
\cline{2-4}
 & \quad $t_1$ & \quad $t_2$ & \quad $t_3$\\
\cline{2-4}
1 & $\acq{l}$ & & \\
2 & $\sync{x}$ & &\\
3 & \R{z} & & \\
4 & $\rel{l}$ & & \\
5 & & $\sync{x}$ & \\
6 & & $\acq{l}$ & \\
7 & & $\acq{n}$ & \\
8 & & $\rel{n}$ & \\
9 & & $\rel{l}$ &\\
10 & & & $\acq{n}$\\
11 & & & $\rel{n}$\\
12 & & & \W{z}\\
\cline{2-4}
\end{tabularx}
\caption{Example to demonstrate how weakening rule (b) is useful. CP:``No race''. WCP:``Race''}\label{fig:ruleb}
\end{figure}

In Figure~\ref{fig:ruleb} we show a trace in which there
is a predictable race which is not detected by CP but is detected by WCP owing
to weakening of rule (b). We use $e_i$ to refer to event at line number $i$.
Borrowing notation from \cite{cp2012}, the event $\sync{x}$ is a shorthand for
$\acq{x}$ \R{xVar}\W{xVar} $\rel{x}$, where $\tt{xVar}$ is the unique variable
associated with lock $\tt{x}$.  Any two $\texttt{sync}$ events over the same
lock are ordered by CP/WCP using Rule (a) (for WCP the ordering is from the
$\rel{x}$ to the latter \R{xVar} but for our example this is not important). Now
we describe the example in Figure~\ref{fig:ruleb} in detail. First note that
there is a predictable race between $e_3$ and $e_{12}$ as revealed by the
correctly reordered trace $e_1,e_2,e_{10},e_{11},e_3,e_{12}$. This race is not
detected by CP as the conflicting events $e_3$ and $e_{12}$ are CP related as
follows: Firstly $e_2 \lcp e_5$. We have $e_1 \hb e_2$ and $e_5 \hb e_6$
(thread ordering). Applying Rule (c) of HB/CP composition on $e_1 \hb e_2 \lcp
e_5 \hb e_6$ we get $e_1 \lcp e_6$. Next we apply Rule (b) of CP on the
critical sections over lock $\tt{l}$ to get $e_4 \lcp e_6$. We also have $e_8
\hb e_{10}$ (critical sections over same lock $\tt{n}$). Applying Rule (c) of
CP ($e_3 \tho e_4 \lcp e_6 \tho e_8 \hb e_{10} \tho e_{12}$) we obtain $e_3 \lcp
e_{12}$.  We see that this line of reasoning fails if you try to prove $e_3
\lwcp e_{12}$ because $e_4$ and $e_6$ are not $\lwcp$ related. Rule (b) of WCP
would give us that the releases are ordered i.e., $e_4 \lwcp e_9$ but this
cannot be composed with HB edges to get a path to $e_{12}$. It turns out $e_3$
and $e_{12}$ are not ordered by $\lwcp$, and therefore the race that exists
between them is correctly detected by WCP.

\begin{figure}[ht]
\begin{tabularx}{0.45\textwidth}{r|XXX|}
\cline{2-4}
& \quad $t_1$ & \quad $t_2$ & \quad $t_3$\\
\cline{2-4}
1 & $\acq{l}$ & &\\
2 & $\acq{m}$ & &\\
3 & $\rel{m}$ & &\\
4 & \R{z} & &\\
5 & $\rel{l}$ & &\\
6 & & $\acq{m}$ &\\
7 & & $\acq{n}$ &\\
8 & & $\sync{x}$ &\\
9 & & $\rel{n}$ &\\
10 & & $\rel{m}$ &\\
11 & & & $\acq{n}$\\
12 & & & $\acq{l}$\\
13 & & & $\rel{l}$\\
14 & & & $\sync{x}$\\
15 & & & \W{z}\\
16 & & & $\rel{n}$\\
\cline{2-4}
\end{tabularx}
\caption{Example trace exhibiting a predictable race that is detected by WCP but not CP}\label{fig:ex3}
\end{figure}

We present a more involved example in Figure~\ref{fig:ex3} to once again see
how a predictable race is revealed by WCP but not CP. Starting with the two
$\sync{x}$ events that are CP related we get enclosing critical section on lock
$\tt{n}$ to be CP related by Rule (b) i.e., $e_9 \lcp e_{11}$. We then use $e_3
\hb e_6$ along with thread orderings and Rule (c) to get $e_1 \lcp e_{12}$.
Applying Rule (b) we obtain $e_5 \lcp e_{12}$ which then gives $e_{4} \lcp
e_{15}$ from Rule (c), hence CP declares that the two conflicting events
\R{z}/\W{z} are not in race. But they are indeed in race as revealed by the
correctly reordered trace
$e_6,e_7,e_8,e_9,e_{10},e_{11},e_{12},e_{13},e_{14},e_1,e_2,e_3,e_4,e_{15}$. 
Unlike CP, WCP does not
order $e_4$ and $e_{15}$. This is because WCP does not order $e_9$ and
$e_{11}$ since the weaker Rule (b) only gives $e_9 \lwcp e_{14}$ which does not
compose with the HB edge on the critical sections over lock $\tt{l}$.


\begin{figure}[h]
\begin{tabularx}{0.45\textwidth}{r|XXX|}
\cline{2-4}
&\quad $t_1$ &\quad $t_2$ &\quad $t_3$\\
\cline{2-4}
1 & $\acq{l}$ & &\\
2 & $\acq{m}$ & &\\
3 & $\rel{m}$ & &\\
4 & \R{z} & &\\
5 & $\rel{l}$ & &\\
6 & & $\acq{m}$ &\\
7 & & $\acq{n}$ &\\
8 & & $\sync{x}$ &\\
9 & & $\rel{n}$ &\\
10 & & & $\acq{n}$\\
11 & & & $\acq{l}$\\
12 & & & $\rel{l}$\\
13 & & & $\sync{x}$\\
14 & & & \W{z}\\
15 & & & $\rel{n}$\\
16 & & & $\sync{y}$\\
17 & & $\sync{y}$ &\\
18 & & $\rel{m}$ &\\
\cline{2-4}
\end{tabularx}
\caption{Example trace exhibiting a predictable deadlock but no predictable race}\label{fig:ex4}
\end{figure}

In Figure~\ref{fig:ex4} we show a trace that is only slightly different from
the previous example but the subtlety involved results in the absence of a
predictable race but presence of a predictable deadlock. The reordered trace
$e_1,e_6,e_{10}$ exhibits the deadlock.  Using identical reasoning as in the
previous example we can derive that the two conflicting events \R{z}/\W{z} are
CP ordered and WCP unordered. The proof of correctness of CP shows that it
cannot detect deadlocks involving more than 2 threads as in this example. This
shows that WCP can detect deadlocks that CP cannot.



\section{Vector-Clock Algorithm for WCP}~\label{sec:algo}
\newcommand\Nt{{\mathcal{T}}}
\newcommand\Nl{\mathcal{L}}
\newcommand\Nv{\mathcal{V}}
\newcommand\Nr{\mathcal{R}}
\newcommand\Nc{\mathcal{C}}
\newcommand\Ne{\mathcal{N}}

In this section we describe a vector clock algorithm that \emph{implements}
WCP. The algorithm assigns a ``timestamp'' $C_e$ to each event $e$.  
These timestamps are vector times which can be compared to each other. The key
property about the timestamps is that it preserves the WCP relation ($\wcp$).
This means in order to find out if two events are WCP ordered we can simply
compare their timestamps. Therefore we also refer to these timestamps as WCP
time or simply time.

\subsection{Vector Clocks and Times}

Let us first recall some basic notions pertaining to vector times. 
A vector time \textit{VT : Tid $\to$ Nat}, is a function that maps each
thread in a trace to a non-negative integer. 
It can also be viewed as $\Nt$-tuple, where $\Nt$ 
is the number of threads in the given trace. 
Vector times support comparison operation $\cle$ 
for point-wise comparison, 
join operation ($\mx$) for point-wise maximum,
and component assignment of the form $V[t := n]$
which assigns the time $n {\in} $\textit{Nat} to component $t {\in}$\textit{Tid}
 of vector time $V$. 
Vector time $\bot$ maps all threads to 0.
\\

\begin{tabular}{rcl}
$V_1 \cle V_2$ & iff & $\forall t: V_1(t) \le V_2(t)$\\
$V_1 \mx V_2$ & = & $\lambda t: \mathit{max}(V_1(t), V_2(t))$\\
$V[u := n]$ & = & $\lambda t: \mathtt{if}\; (t = u)\; \mathtt{then}\; n\; \mathtt{else}\; V(t)$\\
$\bot$ & = & $\lambda t: 0$\\
\end{tabular}
\\



Before we describe the algorithm we would like to point out the distinction
between \emph{clocks} and \emph{times}. Clocks are to be thought of as
variables, they are place holders for times which are the values taken up by
the clock. The time of a clock will change as the trace is processed. 
Events will be assigned different kinds of times based on the value of different
clocks right after the event is processed.
We use double struck font for denoting clocks (e.g., $\C, \P, \H, \cN$)
and normal font for denoting vector times (e.g., $C, P, H, N$).

\subsection{Algorithm}

Our algorithm works in a streaming fashion and processes the
trace by looking at its events one-by-one from beginning to end. As it handles
each event it updates its \emph{state}. The state captures all the information
required to assign a timestamp (a vector time) to the last event of the
trace. At each step, an event is processed and the state is updated, which
allows us to compute a timestamp of that event from the updated state.  For
each thread $t$ the state is going to consist of a vector clock $\C_t$ (among
other things) that reflects the time of the last event in the thread $t$ so
far. The timestamp/time of an event $e$ denoted by $C_e$ is simply the value of
$\C_{t(e)}$ just after having processed $e$. To ensure that the assigned times
preserve WCP ordering, the algorithm needs to ensure that the time of an event
$a$ is \emph{``communicated''} to $b$ if there is a WCP edge from $a$ to $b$.
Thus, $b$ can be assigned a time $C_b$ such that $C_a \cle C_b$. 
In order to achieve this communication, we refer to
auxiliary times $P_e$ (\emph{WCP-predecessor time}) 
and $H_e$ (\emph{HB time}) associated with every event $e$.
Formally $P_e = \bigsqcup \{C_{e'} \;|\; e'\lwcp e\}$ and
$H_e =  \bigsqcup \{C_{e'} \;|\; e'\hb e\}$.
We use vector clocks $\P_t$ and $\H_t$ in the state to record times
$P_e$ and $H_e$ of the last events $e$ of thread $t$.

In the following paragraphs we motivate different components of the state,
and explain how these components are updated as part of
the algorithm. The state update procedure is described in
Algorithm~\ref{algo:update}. 
It consists of procedures that are prompted depending on the type of
event being processed. Each procedure has an argument $t$ denoting the thread that is
performing the event. The argument $\ell$ for acquire/release events
denotes the lock being operated. The argument $x$ for read/write denotes the
variable being accessed in the event. The parameter $L$ for read/write events denotes
the set of locks corresponding to the enclosing critical sections of the
event. The parameters $R/W$ for the release event correspond to the set of variables
that have been read/written inside the critical section corresponding to the
release in question. The purpose of these parameters will become clear as we
move forward. 
Next we describe different components of the state along with
how these procedures manipulate them.


\paragraph{Local Clock $\cN_t$ :}With each thread $t$ we associate an integer
counter $\cN_t$ in the state. It represents the \emph{local clock} of thread
$t$. The \emph{local time} of an event $e$, denoted by $N_e$, is the value of $\cN_t$
after $e$ is processed. For any $t$ the value $\C_t(t)$ will always refer to
$\cN_t$. 

\paragraph{Local Clock Increment :} The local clock $\cN_t$ is incremented just
before an event of $t$ is processed iff the previous event in $t$ was a
release. Since this increment is common to all events, we omit it from the
pseudocode.


\paragraph{WCP clocks $\P_t,\P_\ell$ :} With each thread we associate a vector
clock $\P_t$. The \emph{WCP-predecessor time} $P_b$ 
of an event $b$ is the value of the clock $\P_{t(b)}$ after $b$ has been
processed.  
As defined earlier, $P_b$ represents the ``knowledge'' that $b$ has about other events
with respect to the $\lwcp$ relation. 
Note that, for any event $a$, if $N_a \le P_b(t(a))$ 
(i.e., $P_b$ ``knows'' $a$), then it is the case that $a \lwcp b$ 
(\iftoggle{techreport}{Lemma~\ref{lem:clockimplieswcp} in Appendix}{Lemma C.8 in~\cite{techreport}}). 
This invariant is maintained by making sure that whenever $a \lwcp b$ then the
WCP-time of $a$, $C_a$ (which also has local time of $a$), is made known to
$b$ so that it can update $\P_{t(b)}$ appropriately. 
Sending the entire vector time $C_a$
(rather than just $N_a$) to $b$ ensures that 
$P_b$ also gets to ``know'' the events
in other threads that transitively precede it via the event $a$.

We also maintain, for each lock $\ell$, a vector clock $\P_\ell$ 
that remembers the WCP-predecessor time $P_r$ of the last 
$\rel{\ell}$ event $r$ seen until then (Line 9).
$\P_\ell$ is needed so that the clock $\P_t$ can be maintained correctly.
Consider the case when an $\acq{\ell}$ event $a$ by thread $t$ is processed. 
Now, an event $b$ such that $b \lwcp a$ but not $b \lwcp prev(a)$
($prev$ refers to the previous event in the thread) has to be such that $b \lwcp r$,
 where $r$ is the last $\rel{\ell}$ event before $a$, in some thread other than $t(a)$. 
This is because WCP edges entering an acquire
event from another thread are due to WCP-HB composition (Rule (c) of WCP).
Now $C_b$ is already known to $P_r$ since $b \lwcp r$. 
So the event $a$ can obtain time of events such as $b$ through clock $\P_\ell$. 
This is achieved on Line~2 of the algorithm.

We shall come back to how $\P_t$ is updated on other events
(release/read/write) as it deals with other components of the state which are
described below. For now, note that $\P_t$ corresponds to the relation $\lwcp$
and $\C_t$ corresponds to $\wcp = (\lwcp \cup \toe)$. Therefore $\C_t$ is
simply obtained by incorporating thread order information obtained from the
local clock $\cN_t$ into $\P_t$ as: $\C_t = \P_t[t := \cN_t]$. Since $\C_t$ can be
derived this way from the components $\P_t$ and $\cN_t$, we choose not to
feature how $\C_t$ is updated in the algorithm. 

\paragraph{HB clocks $\H_t,\H_\ell$ :} For events $a,b$ if $a \lwcp b$, then $P_b$
should not only receive $a$'s time $C_a$, but also the time $C_c$ of
every event $c$ such that $c \hb a$ (Rule (c)). To account for this, we
maintain a clock $\H_t$ for each thread $t$ in the state. 
Right after an event $e$ is performed, the clock $\H_{t(e)}$ holds the value of
the \emph{HB time} $H_e$, defined earlier.
So going back to $a \lwcp b$, instead of
passing $C_a$ to $P_b$, we pass $H_a$ so that the times of all events $c$ (including $a$)
with $c \hb a$, is received by $b$. 
Again, the clock $\H_t(t)$ is made to refer to $\cN_t$ so that we do not have to specify when
$\H_t(t)$ has to be changed.

We also maintain a vector clock $\H_\ell$ for each lock $\ell$ that stores the
HB time of the last $\rel{\ell}$ event seen till then (Line 9).
Consider, once again, the case when an $\acq{\ell}$ event $a$ is processed. 
If there is any event $b$ where $b \hb a$ but not $b \hb prev(a)$,
then $b \hb r$ where $r$ is the last $\rel{\ell}$ event before $a$.
To ensure that $b$'s time $C_b$ reaches $a$ we pass $H_r$ to $a$ 
using $\H_\ell$. $\H_\ell$ contains the
time of all events $b$ such that $b \hb r$. 
This is achieved in Line 1 of the algorithm.



In order to motivate the remaining components of the state,
we look at an example trace in Figure~\ref{fig:algoex}. 
We use the event $\acrl{y}$ as a
short hand for the events $\acq{y}\rel{y}$ performed in succession. 
This way two $\acrl{y}$s are HB related. 
An edge between two events in the trace indicates
a WCP order between them that can be deduced using Rules (a) or (b); 
the other ordering edges are omitted for clarity.
As before, we will use $e_i$ to denote the event on Line~$i$.

\begin{figure}[ht]
\centering
\begin{tabularx}{0.45\textwidth}{r|XXX|}
\cline{2-4}
 & \quad $t_1$ & \quad $t_2$ & \quad $t_3$\\
\cline{2-4}
1 & $\acq{l_0}$ & & \\
2 & \W{x} & & \\
3 & & & $\acq{m}$ \\
4 & & & $\acrl{y}$ \\
5 & $\acrl{y}$ & & \\
6 & $\rel{l_0}$\tikzmark{e6} & & \\
7 & $\acq{l_1}$ & & \\
8 & $\acrl{y}$ & & \\
9 & & & $\acrl{y}$ \\
10 & & & \tikzmark{e10}$\rel{m}$ \\
11 & & & $\acq{m}$ \\
12 & & & $\acrl{y}$\\
13 & $\acrl{y}$ & & \\
14 & $\rel{l_1}$\tikzmark{e14}& & \\
15 & & & \tikzmark{215}$\rel{m}$ \\
16 & & $\acq{l_0} $& \\
17 & & \tikzmark{e17}\W{x} & \\
18 & & $\rel{l_0}$ & \\
19 & & $\acq{m}$ & \\
20 & & $\rel{m}$\tikzmark{e20} &  \\
21 & & $\acq{l_1}$ & \\
22 & & \tikzmark{e22}$\rel{l_1}$ & \\
23 & & $\acq{m}$ & \\
24 & & $\rel{m}$\tikzmark{e24} & \\
\cline{2-4}
\end{tabularx}
\caption{Example trace to motivate the Algorithm~\ref{algo:update}}\label{fig:algoex}
\begin{tikzpicture}[overlay, remember picture, >=stealth, shorten >= 1pt, shorten <= 1pt, thick]
    \draw [->] ({pic cs:e6}) to ({pic cs:e17});
    \draw [->] ({pic cs:e10}) to ({pic cs:e20});
    \draw [->] ({pic cs:e14}) to ({pic cs:e22});
    \draw [->] ({pic cs:215}) to ({pic cs:e24});
\end{tikzpicture}
\end{figure}

\paragraph{Release Times $\L_{\ell,x}^r,\L_{\ell,x}^w$ :}
Consider the edge $e_6 \lwcp e_{17}$ derived from Rule (a).
Here $e_{17}$ should receive the HB time of $e_6$,
 and more generally,
the HB time of any other $\rel{\ell}$ event $r$ on lock $\ell$ 
enclosing $e_{17}$ ($e_{17}\in\ell$)
such that the critical section $\cs(r)$
contains an event conflicting with $e_{17}$.
For this purpose, we maintain, for every lock $\ell$ and variable $x$,
a vector clock $\L_{\ell,x}^r$ that records the
join of the HB times of all $\rel{\ell}$ events (seen so far)
whose critical sections contain a {\mytt r}$(x)$ event.
Similarly, for each lock $\ell$ and variable $x$, $\L_{\ell,x}^w$ 
maintains the join of the HB times of all the $\rel{\ell}$ events (seen so far)
whose critical sections contain a {\mytt w}$(x)$ event. 
Lines 7 and 8 of Algorithm~\ref{algo:update} maintain these invariants.
Hence, when $e_{17}$ is processed it can look up the time of
$e_6$ using $\L^w_{\tt{l_0},x}$.
This is achieved in Line 11 of the algorithm.
Similarly when a {\mytt w}$(x)$ event is encountered within a critical section of lock $\ell$, the times of the relevant releases (those that contain either read or
write of $x$) can be accessed using $\L^r_{\ell,x},\L^w_{\ell,x}$. 
This is achieved in Line 12 of Algorithm~\ref{algo:update}. 

\paragraph{FIFO Queues $Acq_\ell(t),Rel_\ell(t)$ :}Consider the WCP edge $e_{10}
\lwcp e_{20}$ introduced because of Rule (b) --- the critical sections of
$e_{10}$  and $e_{20}$ contain WCP ordered events (because $e_3 \tho e_4 \hb
e_5 \tho e_6 \lwcp e_{17} \tho e_{20}$ giving  $e_3 \lwcp e_{20}$ by Rule (c)).
Note that two events in two critical sections are  WCP ordered iff the acquire
of the first is WCP ordered to the release of the second (this follows from
thread order and Rule (c)). When processing $e_{20}$, if we somehow knew that
$e_3 \lwcp e_{20}$ then we would need the HB time of $e_{10} = \mtc(e_3)$ to
be communicated to $e_{20}$ because $e_{10} \lwcp e_{20}$ by Rule (b).
Therefore the appropriate times of the critical section $e_3,e_{10}$ 
(and possibly of other critical sections on the same lock performed before $e_{20}$)
need to be stored in the state so they can be used for future reference by
events such as $e_{20}$. Note that, once $e_{20}$ receives the time of
$e_{10}$, no future critical section over lock \code{m}, if any, in thread $t_2$ 
(for example, $e_{24}$ in this case) is required explicitly receive the time of
$e_3$, because a previous release in the same thread ($e_{20}$) has
already received the time of the appropriate release event ($e_{10}$).
Thus, as far as $t_2$ is concerned, once the critical section 
$e_3,e_{10}$ is processed at $e_{20}$, it can be discarded. 
Therefore, for each lock $\ell$ and thread $t$ (in this case \code{m} and $t_2$),
we need to accumulate, in a \emph{queue}, appropriates times for 
critical sections on lock $\ell$, to be later used when
a $\rel{\ell}$ events of thread $t$ is encountered. 
With this in mind, we maintain, in the state of our algorithm,
two FIFO queues $Acq_\ell(t),Rel_\ell(t)$ for every lock $\ell$ and every thread $t$.
These queues will store the times of $\acq{\ell}/\rel{\ell}$ events (in
chronological order), performed by other threads $t'$ ($\neq t$)
(Lines 3 \& 10). 
When processing a $\rel{\ell}$ event performed by thread $t$, 
the algorithm looks up the times of the critical section in the
front of the queue and removes it from the queue
if Rule (b) is applicable, and further, updates its own time 
(Lines 4-6 in Algorithm~\ref{algo:update}).

\paragraph{Initialization :}
The vector clocks $\P_t,\P_\ell,\H_\ell,\L^r_{\ell,x},\L^w_{\ell,x}$ for any thread $t$,
lock $\ell$ and variable $x$ are initialized to $\bot$. For every thread $t$ ,
the local clock $\cN_t$ is initialized to 1 and  the vector clock $\H_t$ is
initialized to $\bot[t := \cN_t]$. Each of the queues $Acq_\ell(t),Rel_\ell(t)$
is empty to begin with.


\SetKwProg{myalg}{procedure}{}{}

\begin{algorithm}
\SetKwFunction{facq}{acquire}
\SetKwFunction{frel}{release}
\SetKwFunction{fork}{fork}
\SetKwFunction{join}{join}
\SetKwFunction{read}{read}
\SetKwFunction{wr}{write}

\myalg{\facq{$t$, $\ell$}}{ %
	\nl $\H_t$ := $\H_t \mx \H_\ell$;\\
	\nl $\P_t$ := $\P_t \mx \P_{\ell}$;\\
	\nl \lForEach{$t' \neq t$}{$Acq_\ell(t').\texttt{Enque}(\C_t)$ } 
}

\myalg{\frel{$t$, $\ell$, $R$, $W$}}{
		\nl \While{ $Acq_{\ell}(t).\textup{\texttt{Front()}} \cle \C_t$ }{ 
			\nl $Acq_\ell(t).\texttt{Deque()} $;\\
			\nl $\P_t := \P_t \mx Rel_\ell(t).\texttt{Deque()}$;\\
		}
	\nl \lForEach{$x \in R$}{ $\L_{\ell,x}^r := \L_{\ell,x}^r \mx \H_t$ }
	\nl \lForEach{$x \in W$}{ $\L_{\ell,x}^w := \L_{\ell,x}^w \mx \H_t$ }
	\nl $\H_{\ell} := \H_t$; $\P_{\ell} := \P_t$;\\
	\nl \lForEach{$t' \neq t$}{
		$Rel_\ell(t').\texttt{Enque}(\H_t)$
	}
}
\myalg{\read{$t$, $x$, $L$}}{
	\nl $\P_t$ := $\P_t \mx_{\ell \in L} \L_{\ell,x}^w$
}
\myalg{\wr{$t$, $x$, $L$}}{
	\nl $\P_t$ := $\P_t \mx_{\ell \in L} (\L_{\ell,x}^r \mx \L_{\ell,x}^w)$
}
\caption{\textit{Updating vector clocks on different events}}
\label{algo:update}
\end{algorithm}

Next we state the correctness of the algorithm which states the correspondence
between the ordering of timestamps assigned to events ($C_e$ for event $e$)
and the WCP ordering. The proof the theorem is provided in
\iftoggle{techreport}{Appendix~\ref{app:algo}}{\cite{techreport}}.

\begin{thm}[Correctness of Algorithm~\ref{algo:update}]\label{thm:algorithm}
For a trace $\sig$ and events $a,b$
with $a \trs b$, we have 
 $a \wcps b \iff C_{a} \cle C_{b}$
\end{thm}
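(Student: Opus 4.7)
The plan is to prove both directions by induction on the number of events processed by the algorithm, maintaining a compound state invariant. Writing $e$ for the most recent event of thread $t$ in the current prefix, the invariants I would carry are: $\cN_t = N_e$, $\H_t = H_e$, and $\P_t = P_e = \bigsqcup\{C_{e'} : e' \lws e\}$; for each lock $\ell$, $\P_\ell$ and $\H_\ell$ equal $P_r$ and $H_r$ for the latest $\rel{\ell}$ event $r$ so far; the clocks $\L^r_{\ell,x}$ and $\L^w_{\ell,x}$ are the joins of $H_r$ over all $\rel{\ell}$ events $r$ whose critical section contains a $\R{x}$, respectively $\W{x}$, event; and for each $\ell, t$, the queues $Acq_\ell(t)$ and $Rel_\ell(t)$ hold, in chronological order, the $\C_{a'}$ and $H_{r'}$ timestamps for critical sections $(a', r' = \mtc(a'))$ on $\ell$ performed by threads other than $t$ that have been fully processed but not yet dequeued by $t$. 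Lock semantics and well-nestedness guarantee that $Acq_\ell(t)$ and $Rel_\ell(t)$ stay in lockstep whenever $t$ processes a $\rel{\ell}$ event. Once $\P_t = P_e$ is established, the theorem follows: $\C_e = \P_t[t := \cN_t]$, and comparing two such encodings with local times serving as unique labels yields $a \wcps b$ iff $C_a \cle C_b$.

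For the forward step I would case-split on the event being processed. For an acquire $a$ of $\ell$ in $t$, Lines~1--2 join $\H_\ell$ and $\P_\ell$ into $\H_t, \P_t$; by the invariant these equal $H_r$ and $P_r$ for the previous $\rel{\ell}$ event $r$, capturing all new HB- and WCP-predecessors of $a$ arising from the lock synchronisation. For a read/write of $x$ inside locks $L$, Lines~11--12 join in $\L^w_{\ell,x}$ (and additionally $\L^r_{\ell,x}$ for writes) for each $\ell \in L$, which collectively materialise every Rule~(a) edge into this event. For a release $r$ of $\ell$ in $t$, the while-loop at Lines~4--6 dequeues queued acquire timestamps while $\mathtt{Front}() \cle \C_t$; each dequeue corresponds to a queued acquire $a_1$ with $a_1 \lws r$, so by Rule~(b) we get $r_1 = \mtc(a_1) \lws r$, and joining $H_{r_1}$ (from $Rel_\ell(t)$) into $\P_t$ captures this edge together with all of $r_1$'s HB-predecessors by Rule~(c). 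Lines~3 and 7--10 reestablish the queue and lock-clock components of the invariant for the post-state.

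For the backward direction, suppose $C_a \cle C_b$. Then in particular $N_a \le \P_{t(b)}(t(a))$, handling the $t(a) = t(b)$ sub-case with a brief local-clock strict-monotonicity argument that yields $a \toes b$ directly. Otherwise, by the invariant $\P_{t(b)}(t(a)) = \bigsqcup\{N_{e'} : e' \lws b,\ t(e') = t(a)\}$, so there exists $e'$ in $t(a)$ with $e' \lws b$ and $N_{e'} \ge N_a$; since local times within a thread are strictly increasing in trace order, $a \toes e'$. Rule~(c) then gives $a \lws b$, hence $a \wcps b$ as required.

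The main technical obstacle is that the queue-loop invariant and the WCP-clock invariant are mutually dependent and must be verified together. The critical sub-claim is a monotonicity lemma: for two acquires $a_1 \trs a_2$ currently queued in $Acq_\ell(t)$, if $\C_{a_2} \cle \C_r$ for the release $r$ being processed then $\C_{a_1} \cle \C_r$. This is what justifies the while-loop's terminating as soon as the front fails the test rather than scanning the whole queue. Proving it requires combining lock semantics (which gives $a_1 \hbs a_2$) with Rule~(c), but the Rule~(c) step invokes the inductive instance $\P_e = P_e$ on events prior to $r$. The induction must therefore be structured so that the full invariant for earlier events is in hand when verifying queue behaviour at the current event; the rest of the argument then reduces to routine vector-clock bookkeeping.
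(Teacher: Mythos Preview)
Your overall strategy---carrying a compound invariant $\P_{t(e)} = P_e = \bigsqcup\{C_{e'} : e' \lws e\}$ through a single induction on processed events---is a reasonable alternative to the paper's decomposition into separate forward and backward lemmas. However, the backward direction as you have written it has a concrete error.

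You write: ``since local times within a thread are strictly increasing in trace order, $a \toes e'$.'' This is false. The local clock $\cN_t$ is incremented only immediately before an event whose predecessor in $t$ was a release; every event in the same ``block'' between consecutive releases shares the same $N$-value. So from $N_{e'} \ge N_a$ alone you cannot conclude $a \toes e'$: it is perfectly possible that $e' \tho a$ with $N_{e'} = N_a$, and in that case $e' \lws b$ does not yield $a \lws b$ via Rule~(c). The paper is explicit about this: its Lemma~\ref{lem:releasecounter} only asserts $a \toe r$ from $N_a \le N_r$ when $r$ is a \emph{release}.

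The fix is not hard but requires an extra observation you omit: every nonzero component $\P_t(u)$ with $u \ne t$ ultimately originates from $\H_{t'}(u)$ at some $\rel{\cdot}$ event (trace Lines~2,~6,~11,~12 back to Lines~7--10), and $\H_{t'}(u)$ for $u \ne t'$ in turn traces back to $\cN_u$ recorded at some release in $u$. Hence the witness $e'$ achieving $P_b(t(a)) = N_{e'}$ can always be taken to be a release event in $t(a)$, after which the release-specific monotonicity does give $a \toe e' \lws b$. This is exactly how the paper's backward direction (Lemma~\ref{lem:clockimplieswcp}) proceeds: rather than invoking a global invariant, it traces each update to $\P_t$ back to some $H_r$ of a release $r$, applies Lemma~\ref{lem:release} to get $a \hb r$, and then composes with the inductively obtained $r \lws b$. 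Your plan is salvageable, but as stated the backward step does not go through.
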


Theorem~\ref{thm:algorithm} tells us that two events 
$a$ and $b$ where $a \tr b$ are in
WCP-race exactly when $C_a$ and $C_b$ are incomparable. 
This yields an algorithm for checking all the WCP-races for the trace. 
For each variable $x$, we maintain vector
clocks $\Rc_x$ and $\Wc_x$ that
record the join of the $C_e$ times of all the 
read and write events $e$ on the variable $x$
that have been seen for the prefix of the trace that is processed. 
On encountering a read event $e = \mathtt{r}(x)$, we check if
$\Wc_x \cle C_e$ to confirm
that all earlier events conflicting with $e$ are
indeed ordered before $e$.
If this check fails, then $e$ is in WCP-race with some earlier
conflicting write event, and thus we can declare a warning. 
Similarly for a write event $e = \mathtt{w}(x)$,
we check if $\Rc_x \mx \Wc_x \cle C_e$ holds
and declare a warning otherwise. 
Note that our soundness
theorem only guarantees that the first race pair is an actual race. 
But in practice we have observed that subsequent pairs that are 
in WCP-race also happen to be in race.  
Note that this methodology only gives us 
the second component $e_2$ of a pair $(e_1, e_2)$ of events in race. 
In order to determine the first part,
we would have to go over the trace once more and 
individually compare the WCP times of the
events against those conflicting events appearing later that were flagged to
be in race in the initial analysis.

\subsection{Linear Running Time}

In order to analyze the running time of Algorithm~\ref{algo:update}, we fix the
following parameters for a given trace $\sig$. 
Let $\Nt$ be the number of threads, $\Nl$ be the number of 
locks used in $\sig$, and $\Ne$ be the total
number of events in $\sig$. 
We state the running time in Theorem~\ref{thm:runningtime}
and provide the proof in  \iftoggle{techreport}{Appendix~\ref{app:time}}{\cite{techreport}}.
Note that we assume arithmetic operations take constant time.

\begin{thm}\label{thm:runningtime} Given a trace $\sig$ with parameters $\Ne,\Nt$ and $\Nl$ as
defined above, the total running time of the WCP vector-clock algorithm over
$\sig$ is proportional to $\Ne{\cdot}(\Nl + \Nt^2)$. \end{thm}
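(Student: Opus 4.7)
The plan is to bound the running time by analyzing the amortized cost of each of the four procedures in Algorithm~\ref{algo:update} and summing across the $\Ne$ events of $\sig$. I would first fix the convention that a basic vector-clock operation (join $\mx$, comparison $\cle$, component assignment, or copy) costs $O(\Nt)$, and that a single enqueue or dequeue costs $O(\Nt)$ since it stores or retrieves a single clock snapshot.

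Next I would identify, for each procedure, its ``straight-line'' per-event cost. $\texttt{acquire}(t,\ell)$ performs two joins (Lines~1--2) and $\Nt{-}1$ enqueues of $\C_t$ (Line~3), for $O(\Nt^2)$. The non-loop portion of $\texttt{release}(t,\ell,R,W)$, namely Line~9 and the $\Nt{-}1$ Rel-queue enqueues at Line~10, is also $O(\Nt^2)$. For $\texttt{read}(t,x,L)$ and $\texttt{write}(t,x,L)$, Lines~11--12 perform $|L|$ joins; since $L$ is the set of locks currently held by $t$, well-nestedness gives $|L|\le\Nl$, so the per-event cost is bounded by $O(\Nl\cdot\Nt)$.

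The two pieces that cannot be bounded locally are the while loop at Lines~4--6 and the foreach loops at Lines~7--8 of $\texttt{release}$. For the while loop, the crucial observation is that each iteration removes one element from $Acq_\ell(t)$ and one from $Rel_\ell(t)$, at $O(\Nt)$ per iteration. Since each dequeued element must have been previously enqueued, and each acquire produces at most $\Nt{-}1$ Acq-enqueues, the total iteration count across the whole trace is at most $\Ne\cdot(\Nt{-}1)$, yielding $O(\Ne\Nt^2)$ globally. For Lines~7--8, I would charge each $O(\Nt)$ update of $\L_{\ell,x}^r$ or $\L_{\ell,x}^w$ to the read/write event on $x$ inside $\cs(r)$ that placed $x$ in $R\cup W$; by well-nestedness each read/write event lies in at most $\Nl$ enclosing critical sections, so the total work charged to it is $O(\Nl)$ updates, folded into the $\Ne\cdot\Nl$ term of the final bound.

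Putting the pieces together — $O(\Ne\Nt^2)$ for the straight-line parts of acquire/release plus the amortized while loop, and $O(\Ne\Nl)$ for the read/write joins and the amortized updates at Lines~7--8 — yields $O(\Ne\cdot(\Nl+\Nt^2))$, as claimed. The main obstacle, in my view, is precisely the amortization for Lines~4--8: a single release may execute arbitrarily many iterations or contribute many updates, and only by globally charging these to previously enqueued acquires and to enclosed read/write events (rather than to the release itself) does one recover a linear dependence on $\Ne$.
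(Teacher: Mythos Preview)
Your proposal is correct and takes essentially the same approach as the paper's proof: both bound \texttt{acquire} and the non-loop portion of \texttt{release} at $O(\Nt^2)$ per event, amortize the while-loop at Lines~4--6 by charging dequeues against the at most $\Ne(\Nt{-}1)$ prior enqueues for $O(\Ne\Nt^2)$ total, amortize Lines~7--8 by charging each $\L_{\ell,x}$ update back to the read/write event that placed $x$ in $R\cup W$, and bound \texttt{read}/\texttt{write} via $|L|\le\Nl$. Your write-up is, if anything, slightly more explicit than the paper's about the charging scheme for Lines~7--8; note only that your per-event bound of $O(\Nl\cdot\Nt)$ for reads/writes and your final tally of $O(\Ne\Nl)$ for that term are not quite consistent (a $\Nt$ factor is dropped), but the paper's own accounting treats this term identically.
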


Note that for most applications the parameter $\Nt$ is usually
small ($< 25$) and parameter $\Nl$ is
at most a few thousand. 
Typically, the bottleneck for any online race-detection technique
is the length of the trace $\Ne$ which can be of the order of hundreds of
millions or even billions ($10^8$,$10^9$) especially for industrial scenarios.
Our proposed algorithm is \textit{\textbf{linear}} in the size of the trace
$\Ne$ (unlike CP/\rvpredict{}) and therefore truly scales to large traces
without having to rely on any windowing strategy that would restrict
the scope of the races detected. 
Our experimental evaluation (Section~\ref{sec:experiments}) supports
this claim by showing that the algorithm scales well in practice.

\subsection{Lower Bounds}\label{sec:cp-hard}

  Our algorithm is clearly optimal
in terms of running time (in terms of the length of the trace), since one has
to spend linear time just looking at the entire trace. 
However, it can take up linear space in the worst case owing to the queues that hold the
times of the previous acquires/releases. 
In this section, we state lower bounds
result that states that any algorithm that implements WCP in linear time takes
linear space. Hence our algorithm is also optimal in terms of the space
requirements.

The goal is to show that one needs linear space to recognize WCP. But we want
to do this when the number of threads $\Nt$ is constant, and the number of
variables $\Nv$ and locks $\Nl$ is sublinear in the size of the trace
$(O(\frac{n}{log{n}}))$. If the number of variables and locks is linear then
it is easy to get a linear space lower bound, and
even HB takes linear space in that case. 
The following theorem is proved in
\iftoggle{techreport}{Appendix~\ref{app:lowerbounds}}{\cite{techreport}}.

\begin{thm}\label{thm:lb1}
Any algorithm that implements WCP by doing a single pass over the trace
takes $\Omega(n)$ space.
\end{thm}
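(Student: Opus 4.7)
The plan is to establish the $\Omega(n)$ lower bound via a reduction from the one-way communication complexity of the \textsc{Index} problem. Recall that in \textsc{Index}, Alice holds a string $x \in \{0,1\}^n$ and Bob holds an index $i \in [n]$; Bob must output $x_i$ after receiving a single message from Alice, and any such (even randomized) protocol requires $\Omega(n)$ bits.

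Given such an instance, I would construct a trace $\sig = \sig_A(x) \cdot \sig_B(i)$ in which $\sig_A$ depends only on $x$, $\sig_B$ depends only on $i$, the total length is $O(n)$, the number of threads is $O(1)$, and the number of variables and locks is $O(n / \log n)$. The trace will be designed so that $\sig$ exhibits a WCP-race iff $x_i = 1$. Given any single-pass WCP algorithm using space $s(n)$, Alice simulates it on $\sig_A$, sends the resulting state (of size $s(n)$) to Bob, who continues the simulation on $\sig_B$ and reads off $x_i$ from whether a race is reported. This yields an \textsc{Index} protocol of cost $s(n)$, forcing $s(n) = \Omega(n)$.

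For the gadget construction, I would partition the $n$ bits of $x$ into $n/\log n$ blocks of size $\log n$, assigning to each block $k$ a dedicated lock $\ell_k$ and variable $v_k$. Within block $k$, the $\log n$ bits are encoded by the pattern of reads/writes on $v_k$ performed inside a sequence of critical sections on $\ell_k$: for each bit $j$ in the block, a specific pair of events inside the relevant critical sections, together with auxiliary ``position marker'' accesses on a constant number of shared variables, conditionally creates a WCP edge via Rule (a) when $x_j = 1$. Via Rule (c) these per-bit edges compose with thread order and HB to reach a designated query event at the end of $\sig_A$. Bob's gadget $\sig_B(i)$ then uses a constant number of events on $\ell_k$ and $v_k$ for the block $k = \lceil i / \log n \rceil$ containing bit $i$, together with a conflicting access that races with an Alice event iff the edge corresponding to bit $i$ is \emph{not} present; equivalently, a race exists iff $x_i = 1$.

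The main obstacle is ensuring the \emph{independence} of the $\log n$ per-bit gadgets within a single block, which share $\ell_k$ and $v_k$: the state left by $\sig_A$ must reveal \emph{all} $\log n$ bits of \emph{every} block, so flipping one bit cannot be masked or mimicked by flipping another in the same block. I would resolve this by interleaving each bit's gadget with reads/writes of distinct auxiliary markers such that Rule (a) of WCP picks out exactly the target bit's release-to-read edge, and the WCP-closure at the query event depends monotonically and separately on each bit. With this independence, bitwise correctness of the encoding follows from Theorem~\ref{thm:sound} and the definition of WCP-race, and the claimed $\Omega(n)$ space lower bound follows from the \textsc{Index} lower bound. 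A final accounting verifies the parameter constraints: $\sig$ has length $O(n)$, uses $O(1)$ threads, and $O(n/\log n)$ variables and locks, as required.
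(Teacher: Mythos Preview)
Your high-level strategy---a one-way communication reduction to WCP race detection---is sound and is the same framework the paper uses. However, the paper reduces from \textsc{Equality} rather than \textsc{Index}, which makes the gadget dramatically simpler. The paper's trace uses \emph{constantly many} locks and variables (not $O(n/\log n)$): it encodes the two $n$-bit strings $u,v$ as a chain of critical sections over just two locks $\ell_0,\ell_1$ (plus two auxiliary locks $m,y$), where the $i$-th bit of each string selects which of $\ell_0,\ell_1$ is used at step~$i$. The key mechanism is a \emph{cascade} via Rule~(b): a WCP edge between the $(i{+}1)$-st pair of critical sections appears iff the $i$-th pair is already WCP-ordered \emph{and} $u_{i+1}=v_{i+1}$. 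Thus the final conflicting accesses on $z$ are WCP-ordered iff $u=v$, and the presence of a race encodes inequality. No per-bit independence is needed, and the construction is fully explicit. (Your \textsc{Index} route would, if completed, extend the bound to randomized single-pass algorithms, which \textsc{Equality} cannot; but the theorem as stated does not require this.)

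Your proposal has a genuine gap exactly where you flag ``the main obstacle.'' You need each of the $\log n$ bits within a block to be independently recoverable from the WCP state using a single lock $\ell_k$, a single variable $v_k$, and only constantly many auxiliary markers. You do not show how constant-many markers can distinguish $\log n$ positions: if the markers are shared across positions, Rules~(a) and~(c) will conflate edges from different bit-gadgets; if they are distinct per position you have used $\Omega(\log n)$ markers per block and hence $\Omega(n)$ variables overall, violating your own resource constraint. Without a concrete trace and a proof that flipping bit $j$ in block $k$ changes the race answer for exactly Bob's query $i=j$ and no other, the reduction is incomplete.

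There is also a sign slip: you say $x_j=1$ \emph{creates} a WCP edge, then that a race occurs iff that edge is \emph{absent}, yet conclude ``a race exists iff $x_i=1$''; one of these three statements must be reversed.
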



In Theorem~\ref{thm:lb2}, we prove a lower bound result that applies to any 
algorithm (not just single pass), with details in 
\iftoggle{techreport}{Appendix~\ref{app:lowerbounds}}{\cite{techreport}}.

\begin{thm}\label{thm:lb2}
For any algorithm that computes the WCP relation in time $T(n)$ and space $S(n)$
it is the case that $T(n)S(n) \in \Omega(n^2)$
\end{thm}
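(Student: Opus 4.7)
The plan is to prove this time-space tradeoff via a Borodin--Cook style branching-program argument, leveraging the same kind of hard instance that underlies Theorem~\ref{thm:lb1}. The overall strategy has two parts: first, exhibit a family of traces whose WCP answers jointly encode $\Omega(n)$ information-theoretically independent bits, and second, apply a standard counting argument on the algorithm's branching program.

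For the first part, I would construct, for every bit string $b \in \{0,1\}^m$ with $m = \Theta(n)$, a trace $\sig_b$ of length $\Theta(n)$ that is divided into an ``encoding'' prefix and a ``query'' suffix. The encoding prefix is built from $m$ nearly disjoint gadgets; the $i$th gadget uses a critical section whose content is determined by $b_i$ (for example, whether a particular conflicting access on some variable is included inside it), and this controls whether Rule~(a) of the WCP definition will subsequently fire between its closing release and a later designated event. The query suffix then contains $m$ designated event pairs $(u_i, v_i)$ such that $u_i \wcps v_i$ holds iff $b_i = 1$. To respect the constraint that $\Nl$ and $\Nv$ are $O(n/\log n)$, the gadgets cannot use a fresh lock or variable per bit; instead they must reuse names combinatorially (e.g., each gadget picks a distinct subset of a shared pool of $O(\log n)$ variables), while arranging the critical-section structure so that Rules~(b) and~(c) do not collapse gadgets together.

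For the second part, I would view any algorithm solving WCP in time $T(n)$ and space $S(n)$ as a branching program of width at most $2^{S(n)}$ and depth $T(n)$, and slice its computation into $T(n)/m$ consecutive epochs. A standard entropy/counting argument of the Borodin--Cook type then shows that within each epoch the algorithm can extract at most $O(S(n))$ bits of information about $b$, so recovering all $m$ query answers forces $(T(n)/m)\cdot S(n) = \Omega(m)$, i.e.\ $T(n)\cdot S(n) = \Omega(m^2) = \Omega(n^2)$.

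The hard part will be entirely in the first step: packing $\Theta(n)$ pairwise independent WCP answers into a trace of length $O(n)$ using only $O(n/\log n)$ locks and variables, while keeping the trace well-nested and ensuring that the compositional Rule~(c) does not introduce spurious WCP edges that correlate two different query bits. Once such a gadget family is engineered, the branching-program step of the argument is a direct application of the classical time-space tradeoff template used for problems such as sorting and element distinctness.
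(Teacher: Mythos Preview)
Your approach differs substantially from the paper's, and the paper's route is considerably simpler. The paper does not encode $\Theta(n)$ independent WCP bits; it encodes a \emph{single} bit---whether two $n$-bit strings $u$ and $v$ are equal---as the presence or absence of one WCP race, reusing the trace schema from Theorem~\ref{thm:lb1} with only a \emph{constant} number of locks and variables (each input bit merely selects one of two fixed locks $\ell_0,\ell_1$). It then pads the middle of the trace with $n$ junk events to obtain an input of shape $u\,\#^n\,v$ and applies a crossing-sequence argument: any Turing machine crosses the $\#^n$ region at most $T(n)/n$ times, each crossing carries at most $S(n)$ bits, and the communication complexity of equality is $\Omega(n)$, hence $(T(n)/n)\cdot S(n)\in\Omega(n)$. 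This sidesteps both difficulties you flag: there is no need to pack many independent query pairs into one trace, and the $O(n/\log n)$ lock budget is never approached.

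Your Borodin--Cook sketch also has a gap as written. Slicing the computation into $T(n)/m$ time-epochs and claiming each epoch ``extracts at most $O(S(n))$ bits about $b$'' is not justified: within an epoch the machine may read up to $m$ input cells directly, in addition to its $S(n)$ bits of starting state, so the per-epoch information bound does not hold. What \emph{does} work for your prefix/suffix layout is again a crossing-sequence argument across the prefix--suffix boundary (answering all $m$ suffix queries requires $m$ bits from the prefix, delivered over at most $T(n)/n$ crossings of at most $S(n)$ bits each)---which is essentially the paper's argument, only with $m$ output bits instead of one. Finally, note that the paper's version is the stronger statement: it lower-bounds algorithms that merely decide whether \emph{some} WCP race exists, whereas your reduction assumes the algorithm answers $\Theta(n)$ separate WCP queries.
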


Therefore, our WCP algorithm is optimal in terms of
time/space trade-off as well.


\section{Experimental Evaluation}
\label{sec:experiments}

We implemented the vector clock algorithm (Algorithm~\ref{algo:update})
in our tool \tool~(\textsc{Ra}ce \textsc{P}re\textsc{d}iction) written in Java, available at~\cite{rapid}.
\tool~uses the logging functionality of the tool \rvpredict~\cite{rvpredict},
to generate program traces which can be used for race detection.
\tool~only considers events corresponding to read/write to memory locations,
lock acquire/release, and thread fork/join events, and ignores
other events (such as \emph{branch} events) generated by \rvpredict.
\tool~also implements a vector algorithm for detecting HB races.

Algorithm~\ref{algo:update} runs in linear time.
Also, since the sizes of data structures involved
in the algorithm do not grow very fast, 
even for very large benchmarks,
the memory requirement for the algorithm
did not seem to be a huge bottleneck.
Hence, we did not have to split our analysis into small windows, to handle large traces in \tool.
This is in stark contrast to most predictive dynamic race detection techniques
\cite{rv2014,cp2012}.


Our experiments were conducted on an 8-core 2.6GHz 46-bit Intel Xeon(R) Linux machine,
with HotSpot 1.8.0 64-Bit Server as the JVM and 50 GB heap space.
We set a time limit of 4 hours for evaluating each of the techniques on each of the benchmarks.
Our results are summarized in Table~\ref{tab:exp}.


\begin{table*}[t!]
\centering
\scalebox{0.8}{
\begin{adjustbox}{center}

\begin{tabular}{|c|c|c|c|c|c|c|c|c|c|c|c|c|c|c|}
\hline
1 & 2 & 3 & 4 & 5 & 6 & 7 & 8 & 9 & 10 & 11 & 12 & 13 & 14 & 15\\ \hline
\rowcolor[HTML]{EFEFEF} 
\cellcolor[HTML]{EFEFEF} 
& \cellcolor[HTML]{EFEFEF} 
& \cellcolor[HTML]{EFEFEF} 
& \cellcolor[HTML]{EFEFEF} 
& \cellcolor[HTML]{EFEFEF} 
& \multicolumn{5}{c|}{\cellcolor[HTML]{EFEFEF}\textbf{\#Races}} 
& \cellcolor[HTML]{EFEFEF} \textbf{WCP}
& \multicolumn{4}{c|}{\cellcolor[HTML]{EFEFEF}\textbf{Time}} \\ 
\rowcolor[HTML]{EFEFEF} 
\cellcolor[HTML]{EFEFEF} 
& \cellcolor[HTML]{EFEFEF} 
& \cellcolor[HTML]{EFEFEF} 
& \cellcolor[HTML]{EFEFEF} 
& \cellcolor[HTML]{EFEFEF} 
& \cellcolor[HTML]{EFEFEF} 
& \cellcolor[HTML]{EFEFEF} 
& \multicolumn{3}{c|}{\cellcolor[HTML]{EFEFEF}\textbf{RV}} 
& \cellcolor[HTML]{EFEFEF} \textbf{Queue} 

& \cellcolor[HTML]{EFEFEF} 
& \cellcolor[HTML]{EFEFEF} 
& \multicolumn{2}{c|}{\cellcolor[HTML]{EFEFEF}\textbf{RV}} \\ 
\rowcolor[HTML]{EFEFEF} 
\multirow{-3}{*}{\cellcolor[HTML]{EFEFEF}\textbf{Program}} 
& \multirow{-3}{*}{\cellcolor[HTML]{EFEFEF}\textbf{LOC}} 
& \multirow{-3}{*}{\cellcolor[HTML]{EFEFEF}\textbf{\#Events}} 
& \multirow{-3}{*}{\cellcolor[HTML]{EFEFEF}\textbf{\#Thrd}} 
& \multirow{-3}{*}{\cellcolor[HTML]{EFEFEF}\textbf{\#Locks}} 
& \multirow{-2}{*}{\cellcolor[HTML]{EFEFEF}\textbf{WCP}} 
& \multirow{-2}{*}{\cellcolor[HTML]{EFEFEF}\textbf{HB}} 
& \textbf{\begin{tabular}[c]{@{}c@{}}w=1K\\ s=60s\end{tabular}} 
& \textbf{\begin{tabular}[c]{@{}c@{}}w=10K\\ s=240s\end{tabular}} 
& \textbf{Max} 
& \textbf{\begin{tabular}[c]{@{}c@{}}Length\\ (\%)\end{tabular}} 
& \multirow{-2}{*}{\cellcolor[HTML]{EFEFEF}\textbf{WCP}} 
& \multirow{-2}{*}{\cellcolor[HTML]{EFEFEF}\textbf{HB}} 
& \textbf{\begin{tabular}[c]{@{}c@{}}w=1K\\ s=60s\end{tabular}} 
& \textbf{\begin{tabular}[c]{@{}c@{}}w=10K\\ s=240s\end{tabular}} \\ \hline
account 		& 87 	& 130 	& 4		& 3 	& \cellcolor[HTML]{EFEFEF}4             & 4     & 4     & 4     & 4     & 0.0   & 0.2s & 0.3s & 1s & 1s \\
airline 		& 83 	& 128 	& 2 	& 0 	& \cellcolor[HTML]{EFEFEF}4             & 4     & 4     & 4     & 4     & 0.0   & 0.2s & 0.2s & 0.8s & 2s \\
array 			& 36 	& 47 	& 3 	& 2 	& \cellcolor[HTML]{EFEFEF}0             & 0     & 0     & 0     & 0     & 4.3   & 0.2s & 0.2s & 1.1s & 0.8s \\
boundedbuffer	& 334 	& 333 	& 2 	& 2 	& \cellcolor[HTML]{EFEFEF}2             & 2     & 2     & 2     & 2     & 0.0   & 0.3s & 0.2s & 1s & 0.8s \\
bubblesort 		& 274 	& 4K 	& 10 	& 2 	& \cellcolor[HTML]{EFEFEF}6             & 6     & 6     & 0     & 6     & 2.4   & 0.7s & 0.5s & 3.6s & 7m3s \\
bufwriter 		& 199	& 11.7M & 6 	& 1 	& \cellcolor[HTML]{EFEFEF}2             & 2     & 2     & 2     & 2     & 10    & 47s & 22.4s & 4.1s & 4.5s \\
critical		& 63 	& 55 	& 4 	& 0 	& \cellcolor[HTML]{EFEFEF}8             & 8     & 8     & 8     & 8     & 0.0   & 0.2s & 0.2s & 1.7s & 0.9s \\
mergesort 		& 298 	& 3K 	& 5 	& 3 	& \cellcolor[HTML]{EFEFEF}3             & 3     & 1     & 2     & 2     & 1.3   & 0.4s & 0.4s & 1.1s & 1.4s \\
pingpong		& 124 	& 146 	& 4 	& 0 	& \cellcolor[HTML]{EFEFEF}7             & 7     & 7     & 7     & 7     & 0.0   & 0.5s & 0.3s & 1.2s & 1.3s \\ \hline
moldyn			& 2.9K 	& 164K 	& 3 	& 2		& \cellcolor[HTML]{EFEFEF}44            & 44    & 2     & 2     & 2     & 0.0   & 7.1s & 2.4s & 1.4s & 17.4s \\
montecarlo 		& 2.9K	& 7.2M 	& 3 	& 3		& \cellcolor[HTML]{EFEFEF}5             & 5     & 1     & 1     & 1     & 0.0   & 23.4s & 16.2s & 7.1s & 5.7s \\
raytracer 		& 2.9K 	& 16K 	& 3 	& 8 	& \cellcolor[HTML]{EFEFEF}3             & 3     & 2     & 3     & 3     & 0.0   & 2.4s & 1s & 1s & 14.7s \\ \hline
derby 			& 302K 	& 1.3M 	& 4 	& 1112	& \cellcolor[HTML]{EFEFEF}23            & 23    & 11    & -     & 14    & 0.6   & 16.2s & 7s & 31.2s & TO \\
eclipse 		& 560K 	& 87M 	& 14 	& 8263 	& \cellcolor[HTML]{EFEFEF}\textbf{66}   & 64    & 5     & 0     & 8     & 0.4   & 6m51s & 4m18s & 26.2s & 15m10s \\
ftpserver		& 32K 	& 49K 	& 11 	& 304 	& \cellcolor[HTML]{EFEFEF}36            & 36    & 10    & 12    & 12    & 2.2   & 5.7s & 2.1s & 3.8s & 3m \\
jigsaw 			& 101K 	& 3M 	& 13 	& 280 	& \cellcolor[HTML]{EFEFEF}\textbf{14}   & 11    & 6     & 6     & 6     & 0.0   & 18s & 11.8s & 2.8s & 14.7s \\
lusearch 		& 410K 	& 216M 	& 7 	& 118 	& \cellcolor[HTML]{EFEFEF}160           & 160   & 0     & 0     & 0     & 0.0   & 10m13s & 6m48s & 57.3s & 46.7s \\
xalan 			& 180K 	& 122M 	& 6 	& 2494 	& \cellcolor[HTML]{EFEFEF}\textbf{18}   & 15    & 7     & 8     & 8     & 0.1   & 7m22s & 4m46s & 43.1s & 7m11s \\
\hline
\end{tabular}

\end{adjustbox}
}
\caption{
Experimental results : 
Columns 1-2 describe the benchmarks (name and lines of source code respectively).
Columns 3, 4 and 5 denote the number of events, threads and locks in the generated trace.
Columns 6, 7 denote the number of distinct race pairs detected by 
\tool~by running WCP and HB analysis respectively.
Columns 8 and 9 denote the number of races detected by \rvpredict~when the window sizes
are respectively 1K and 10K and the solver timeouts are respectively 60 and 240 seconds.
For programs for which WCP detects more races than HB, the corresponding entries
in Column 6 are boldfaced.
Column 10 represents the maximum number of races detected by \rvpredict~with
all combinations of windows sizes (1K, 2K, 5K, 10K) and solver timeouts (60s, 120s, 240s).
Column 11 reports the maximum value of the sum (over locks $\ell$ and threads $t$) of 
lengths of the queues $Acq_\ell(t)$ and $Rel_\ell(t)$
(Algorithm~\ref{algo:update}) attained at any point while performing WCP analysis on the generated trace,
as a fraction of \#events.
Columns 12 and 13 respectively denote the time taken by \tool~for WCP and HB analysis.
Columns 14 and 15 respectively denote the time taken by \rvpredict~for window sizes 1K and 10K,
and solver timeouts 60s and 240s respectively.
A `-' in Column 9 and `TO' in Column 15 represents timeout which we set to be 4 hours.
}
\label{tab:exp}
\end{table*}


We compare the performance of \tool~only
against sound techniques for race detection.
For each of the techniques compared, as in \cite{rv2014},  
we attempt to analyze the following two characteristics of our algorithm:
\begin{enumerate}
	\item \emph{Race detection capability}, measured by the 
	number of distinct race pairs detected.
	A WCP (HB) race pair is an unordered tuple of \emph{program locations} corresponding to
	some pair of events in the trace that are unordered by the partial order WCP (HB).
	We compare the race detection capability of \tool's WCP 
	vector clock algorithm with \rvpredict~(version 1.8.2), which,  
	in theory, detects at least as many races
	as any sound dynamic race detection technique.
	\item \emph{Scalability}, measured by the time taken to analyze the entire trace.
	We compare our analysis time against HB vector clock algorithm
	for race detection (also implemented in \tool), since HB
	is the simplest sound technique, and admits a fast linear time algorithm.
\end{enumerate}
\vspace{-0.1in}
As stated earlier, WCP detects all the races detected using CP.
Further, it is not clear if an algorithm based on CP relation
can scale without windowing strategy. 
Therefore, we omit any comparison with CP~\cite{cp2012}.

We run all the three techniques (HB, WCP, \rvpredict) on the
same set of traces.
This was possible because both \rvpredict~and \tool~can analyze a 
logged trace produced by \rvpredict's logging feature.
We have tried to ensure that the comparison is fair;
we implement the linear time vector clock algorithm for detecting 
HB~\cite{Mattern1988}, and do not restrict the HB analysis to small windows,
unlike in~\cite{rv2014} and~\cite{cp2012}.

\rvpredict~supports tuning of parameters like window sizes and 
timeout for its backend SMT solver.
The tight interplay between window sizes and 
solver timeout in \rvpredict~makes it difficult to estimate the best combination
of these parameters. 
Small windows result in a low number
of reported races because every occurrence of most of the races
occur across multiple windows. 
On the other hand, a large window
implies that the logical formulae generated in \rvpredict~are too large
to be solved for the SMT solver, within the timeout, as a result of which,
most of the windows do not report any races.
\begin{figure}[t]
\includegraphics[width=0.45\textwidth]{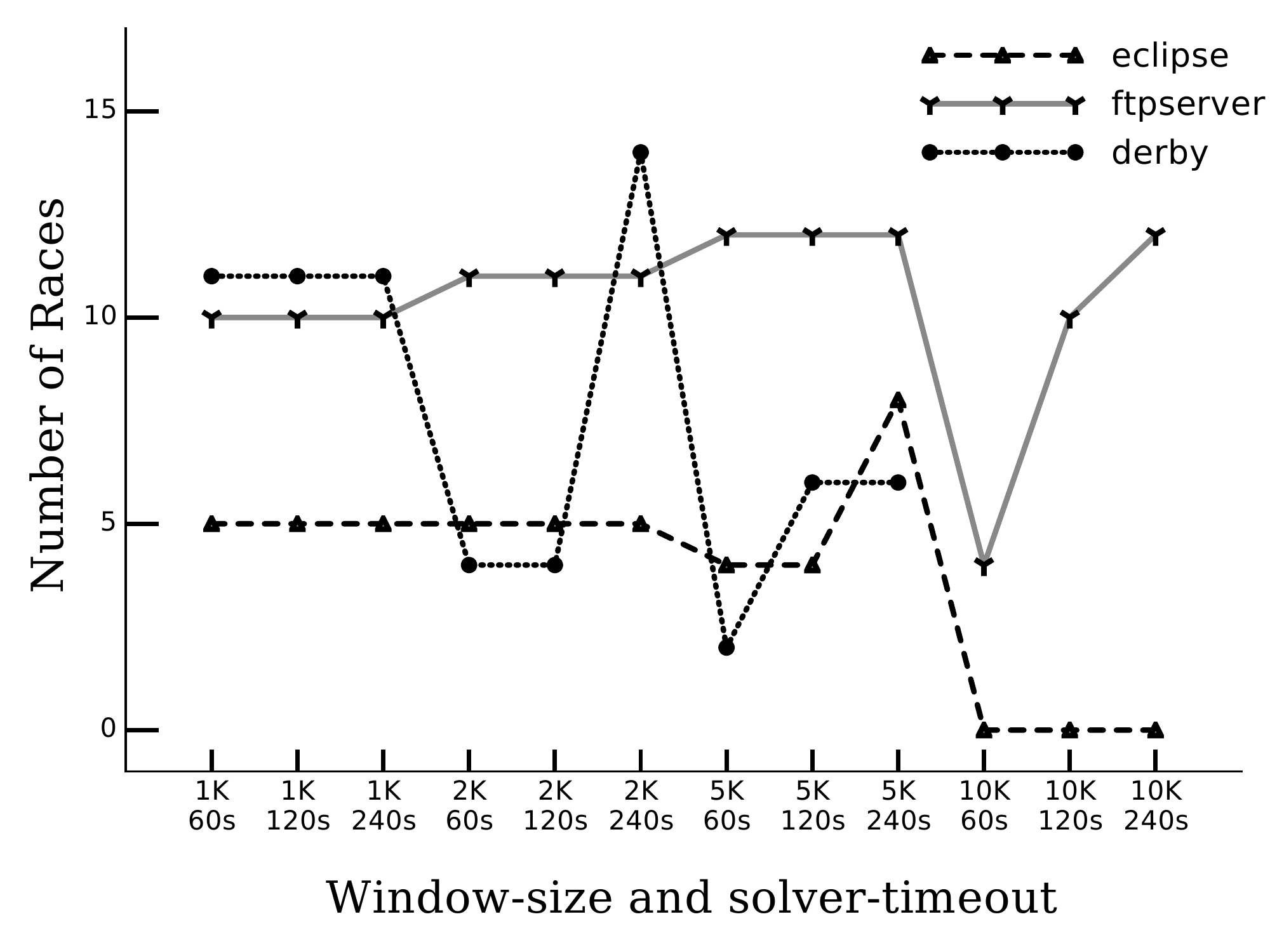}
\caption{Number of races detected by \rvpredict~for different values of window size and solver timeout parameters}
\label{fig:rvraces}
\end{figure}
In Figure~\ref{fig:rvraces}, we depict how the number of races
vary for different values of solver timeouts and window sizes, for three benchmark examples,
and as can be seen, there is no clear pattern.
We run \rvpredict~on each of the benchmarks with several parameter combinations;
we vary window sizes as (1K, 2K, 5K, 10K) and 
solver timeout values as (60s, 120s, 240s).
An attempt of testing \rvpredict~beyond these parameter values often led to 
very large running times or excessive memory requirements.
In Table~\ref{tab:exp}, we report the observations
only for two of these combinations.

\subsection{Benchmarks}
Our evaluation benchmarks (Column 1) have primarily been derived from \cite{rv2014}.
The benchmarks are designed for a comprehensive performance evaluation :
the lines of code range from 60 to 0.5M, and the number of
events vary from an order of 10 to 200M.
The first set of small-sized benchmarks (account to pingpong) and
are originally derived from IBM Contest benchmark suite~\cite{Farchi2003}.
The second set of medium sized benchmarks are derived from the Java Grande Forum
benchmark suite~\cite{JGF2001}.
The third set of benchmarks come from large real world software -
Apache FTPServer, W3C Jigsaw web server, Apache Derby, and some applications
from the DaCaPo benchmark suite (version 9.12) ~\cite{DaCapo2006}.

\subsection{Scalability}
Columns 12-15 report the times taken by WCP, HB and \rvpredict.
WCP analysis times are comparable to HB analysis for all the examples.

For the small set of examples, all the three
techniques finish their analysis in a reasonably small time, roughly
proportional to the size of the traces.
However, for large examples, both WCP and HB outperform the running times
of \rvpredict. 
For `derby', \rvpredict~exceeds the time limit of 4 hours
for large window sizes.
In general, it is difficult to gauge the running
times of \rvpredict~from the length of the trace
and window size, because the actual size
of the internal logical formulation generated by
 \rvpredict~for every window, 
and the running times of its backend SMT
solver crucially depend on how complex these windows are.

We highlight that the worst case linear space complexity of Algorithm~\ref{algo:update}
was not observed in our experiments.
In Column 11, we report the maximum value of the total lengths of the FIFO queues
(Section~\ref{sec:algo}),
attained at any point while performing WCP analysis on the traces,
as a percentage of the number of events.
As can be seen, this fraction stays below $3\%$ for almost all examples,
and is $10\%$ for `bufwriter' benchmark.

\subsection{Bug Detection Capability}
Columns 6 and 7 report the number of distinct HB and WCP race pairs.
Columns 8-10 report the number of race pairs detected by \rvpredict,
with different parameters. 
The extra races discovered by WCP and not by HB (boldfaced entries in Column 6)
 were either found to be \rvpredict~races,
or were manually inspected to be valid race pairs.

For the smaller benchmarks, when the number of events is relatively small,
the number of race pairs detected by the three techniques is almost the same, with
WCP and HB detecting the maximum number of races for each of these benchmark examples,
despite the fact that \rvpredict~employs a theoretically
more comprehensive technique than WCP.

For the larger benchmark examples, the number of races detected 
by \rvpredict~are much lower than those predicted by WCP or HB.
For the `derby' benchmark, \rvpredict~could not
finish its analysis within 4 hours.
In fact, for the benchmark `bubblesort', \rvpredict~runs out of
memory (50GB) for a window size of 5K.
This is primarily because of the tight interplay between
the window sizes and timeout parameter for the backend SMT solver.
We conjecture that most of the races that are not reported by
\rvpredict~either occur across windows, or
are missed because the solver did not return with an answer within the specified time limit.
On the other hand, the theoretical guarantee provided
by the techniques used in \rvpredict~\cite{rv2014}
would have ideally resulted in it detecting
possibly more data races than both HB and WCP.
This indicates that a windowing strategy for analyzing large 
traces can potentially result in  significant loss in the bug detection 
capability of any dynamic race detection technique.
In fact, on careful analysis of the predicted races,
we found that both HB and WCP detect races having a \emph{distance} of
millions of events --- the \emph{distance}
 of a race between program locations $(pc_1,pc_2)$
is the minimum separation (in terms of the number of events in the trace) 
between any pair of events $(e_1, e_2)$ 
exhibiting the $(pc_1, pc_2)$ race.
Specifically, both HB and WCP expose more than 25 races in `eclipse' 
having a distance of at least $4.8$ million events, 
with the maximum distance being 53 million.
Clearly, any windowing based analysis will be incapable of
catching these races.

In all the examples, the \emph{set} of races detected by
HB are a subset of the \emph{set} of races detected by WCP.
This is expected as WCP is a weakening of CP (and hence of HB).
For the large benchmarks `eclipse', `jigsaw' and `xalan', 
the number of races detected by 
WCP are more than those detected by HB.
In all examples, the \emph{set} of races detected by \rvpredict~are 
a subset of the races reported by
WCP and HB, except in `mergesort' and `ftpserver', where 
\rvpredict~reports one extra race each.

Note that, our WCP based race detection algorithm does not report
drastically more races than the simpler HB based algorithm.
While this is surprising, given the optimistic comparison of CP and \rvpredict~ versus
HB, as reported in~\cite{cp2012,rv2014}, the apparent disparity can be explained by the fact
that both~\cite{cp2012} and~\cite{rv2014} compare their techniques 
against a windowed implementation of HB based vector clock algorithm,
which can potentially miss HB races between pairs of events (unordered by HB)
that occur far apart in the trace, and thus, possibly missing 
out on program location pairs corresponding to these event pairs.
As stated before, our implementation of HB vector clock algorithm
is not \emph{windowed} and catches these far-away event pairs unordered by HB.
That being said, the few extra races missed by (our implementation of) HB,
but predicted by WCP and/or \rvpredict, are quite subtle.

\section{Related Work}\label{sec:related}

Our work generalizes the causally-precedes (CP) relation
proposed by Smaragdakis et. al~\cite{cp2012}.
WCP is a weaker relation than CP 
(any data race detected by CP will also be detected by WCP)
and can be implemented using a linear time vector clock algorithm.
WCP based race detection can be viewed as belonging to the class
of predictive analysis techniques, similar to CP~\cite{cp2012}, Said et. al~\cite{Said2011}
 \rvpredict~\cite{rv2014}, IPA~\cite{ipa2016}, \cite{SPA2009}, which 
essentially reason about correct reorderings of a given trace 
for estimating concurrency errors in other possible executions.
\rvpredict~\cite{rv2014} and Rosu et. al.,~\cite{maxcausalmodels} both show how maximal and sound causal models can be used to resolve concurrency bugs. 
These methods explore all possible interleavings that can be deduced from
the given trace, but such complete explorations are known to be intractable.
\rvpredict~\cite{rvpredict} has also been extended for analyzing traces 
with missing events~\cite{Huang2016}.
Predictive analysis techniques have also been used for checking
atomicity violations and synchronization errors in tools like
$\textsf{jPredictor}$~\cite{chen-serbanuta-rosu-2008-icse}, \cite{sen2005detecting}  and \textsf{TAME}~\cite{Huang15}.
{GPredict}~\cite{Huang:2015:GGP:2818754.2818856} uses predictive analysis for resolving higher level 
concurrency issues like authentication-before-use in Java.
Our experimental evaluation suggests that
heavy weight predictive techniques such as SMT based search do not scale well
in practical scenarios, and often forego predictive power for scalability.

Lockset based techniques such as \textsc{Eraser}~\cite{savage1997eraser}, 
which assign sets of locks to program locations and variable accesses
are known to be unsound.
Methods such as random testing~\cite{Sen:2008:RDR:1375581.1375584} and 
static escape analysis~\cite{vonPraun:2001:ORD:504311.504288}
aim to improve
the efficiency of, and reduce the number of false alarms raised
by lockset based analysis.

Other dynamic race detection techniques use Lamport's happens-before (HB) relation.
HB admits a linear time vector clock algorithm~\cite{Mattern1988},
and is adopted by various techniques including~\cite{Pozniansky:2003:EOD:966049.781529,fasttrack,trade}.
The \textsc{DJIT}$^+$~\cite{Pozniansky:2003:EOD:966049.781529} 
algorithm uses the epoch optimization for performance improvement 
in the traditional vector clock algorithm.
This was further enhanced by \textsc{Fasttrack}~\cite{fasttrack}.
Both WCP and CP are weaker relations than HB, and thus, in principle,
detect all the races that any happens-before based race detection algorithm.
 
 Other techniques include
combinations of HB and lockset approach~\cite{elmas2007goldilocks,choi:2003:HDD:781498.781528,threadsanitizer,racetrack},
 statistical techniques~\cite{bond2010pacer,marino2009literace},
 and crowd-sourced inference~\cite{racemob}.
Tools such as \textsc{RoadRunner}~\cite{flanagan2010roadrunner}
and $\mathsf{Sofya}$~\cite{kinneer2007sofya}
 provide frameworks for implementing dynamic analysis tools.
 Techniques such as~\cite{feng1999efficient,cheng1998detecting,raman2012scalable,surendran2016dynamic,Yoga2016} leverage structured parallelism
 to optimize memory overhead for dynamic race detection.

Static race detection techniques~\cite{Naik:2006:ESR:1133255.1134018,racerx,pratikakis11locksmith,voung2007relay,echo,Radoi:2013:PSR:2483760.2483765}
suffer from the undecidability problem,
and raise many false alarms, but still remain popular amongst developers.
Type systems for detecting concurrency errors~\cite{boyapati2002ownership, flanagan2000type,Abadi:2006:TSL:1119479.1119480},
aim to help programmers write safer programs.

Model checking techniques like~\cite{heisenbugs} and
\cite{Yahav:2001:VSP:373243.360206} aim to exhaustively explore
all possible concurrent executions to detect data races.
However, due to the state explosion problem,
explicit state model checking encounters a huge slowdown.

\section{Conclusion}\label{sec:conclusion} 

In this paper, we presented a sound technique for detecting data races through
a new partial order called Weak-Causally-Precedes (WCP).  
Since WCP is a weakening of CP, it
provably detects more races than CP. We showed how WCP races can be detected in
linear time using a vector clock algorithm and we proved its optimality. We
implemented our techniques in a prototype tool \tool{} which shows promising
results when evaluated on industrial sized benchmarks.

There are several avenues for future work.
These include use of epoch
based optimizations for improving memory requirements of the implementation,
further weakening of the WCP relation while preserving soundness,
and incorporating control flow information for enhanced race detection capability.


\section*{Acknowledgments}
We thank Grigore Rosu and Yilong Li for their help with setting up 
\rvpredict, and Jeff Huang for providing many of the benchmark examples.
We gratefully acknowledge the support of the following grants ---
Dileep Kini was partially supported by NSF TWC 1314485; 
Umang Mathur was partially supported by NSF CSR 1422798; 
and Mahesh Viswanathan was partially supported
by NSF CPS 1329991 and AFOSR FA9950-15-1-0059.

\bibliographystyle{abbrvnat}  
\bibliography{references}


\clearpage
\appendix

\newtheorem{prop}{Proposition}
\numberwithin{prop}{section}
\newtheorem{lem}{Lemma}
\numberwithin{lem}{section}
\newtheorem{cor}{Corollary}
\numberwithin{cor}{section}

\section{Soundness of WCP}\label{sec:wcpsound}
 Our soundness proof of WCP is along the lines of
that of CP, but crucially differs when it comes down to proving existence of
deadlocks. For CP, when it shown that there is a deadlock because of a
CP-race, the deadlock always involves only two threads.   For WCP, we end up
proving that there could be deadlocks involving more than just two threads. In
doing so we explore deeper structures about deadlocks which are novel
contributions in themselves apart from those mentioned in the paper.

Given a trace $\sig$ with a WCP-race the aim is to obtain a correct reordering
of $\sig$, which exhibits a race or a deadlock. Let
the first WCP-race be between events $e_1$ and $e_2$, where first
means there is no other pair in WCP-race before $e_2$ and no other event
$e_1'$ such that $e_1 \trs e_1' \trs e_2$ , $e_1',e_2$ are in WCP-race. 

Next, among all the correct reorderings of $\sig$ we pick a $\alpha$
which satisfies the following:
\begin{itemize}
\item $e_1,e_2$ is the first race in $\alpha$
\item among traces that satisfy the above $\alpha$ is such that distance between
$e_1,e_2$ is minimal
\item among traces that satisfy the above $\alpha$ is such that it minimizes
the distance from $e_2$ to every acquire that encloses $e_1$ from innermost
to outer acquires.
\end{itemize}

We shall refer to such a trace $\alpha$ as an \emph{extremal trace}. If
$e_1,e_2$ are in HB-race then we can use the proof of HB correctness to get a
correct reordering of the extremal trace (which is also a correct reordering
of the original trace) which exhibits a race. So let us assume $e_1,e_2$ are
not in HB-race, i.e., $e_1 \hb e_2$, and in this case we are going to show a
predictable deadlock. Let $t_1$ be the thread containing $e_1$ and $t_2$ be
the thread containing $e_2$.  We observe the following properties about
$\alpha$ in Lemmas~\ref{lem:between} to \ref{lem:dangle} which are analogous
to the Lemmas 1-5 in \cite{cp2012}, we only state these Lemmas and skip their
proofs as they use exactly the same argument. The weakening of CP to WCP does
not effect any of their reasoning. We provide proofs for the remaining Lemmas
and of course our main theorem.

In all of our arguments we will attempt to reorder/move the events in $\alpha$
(or some trace) which if successful will violate some extremality condition.
Therefore such  a move cannot be possible. Most of the reorderings we make
will involve moving a segment of the events of a thread, say $[a,b]$ to an
earlier point in the trace, say right before $c$ (in some other thread). Such
a move will always respect thread order but can still be an incorrect
reordering due to the following reasons:

\begin{itemize}
\item A \R{x} does not see the same value because it's relative position w.r.t
a \W{x} has been changed, we call this a RW violation. If such a violation
happens then the two conflicting events $e,e'$ will be such that $c \tr e \tr e'$ and $e' \in [a,b]$ in $\tau$
\item Lock semantics has been violated: this can happen in two ways
\begin{itemize}
\item $[a,b]$ contains a critical section whose lock is already held at $c$, we call this LS1 violation.
\item $[a,b]$ contains an acquire $d = \acq{l}$ but not $\mtc(d')$ such that there
is critical section $(h,h')$ in another thread over the same lock $\l$ such that 
$c \tr h \tho h' \tr d$. We call this a LS2 violation.
\end{itemize}
\end{itemize}

\begin{lem}\label{lem:between} 
For all events $e$ such that $e_1 \tre e \tre e_2$ we have
\begin{enumerate}
\item $e_1 \hb e \hb e_2$ 
\item $e_1 \nlw e$ and $e \nlw e_2$
\end{enumerate}
\end{lem}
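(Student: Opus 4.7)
The plan is to follow the standard swap-and-move argument from the analogous lemma for CP in~\cite{cp2012}, and then derive (2) from (1) by invoking WCP's closure under HB-composition. The endpoint cases $e = e_1$ and $e = e_2$ are immediate: reflexivity of $\hbs$, irreflexivity of $\lws$, and the hypothesis that $(e_1,e_2)$ is a WCP-race (so $e_1 \nlw e_2$) cover them. So the focus is on $e$ strictly between $e_1$ and $e_2$ in $\alpha$.

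For Part (1), I would argue by contradiction. Suppose $e \not\hb e_2$ for some event strictly between; pick such an $e$ closest to $e_2$ in $\alpha$, so every event strictly between $e$ and $e_2$ is HB-before $e_2$. The plan is to slide $e$ forward past the intervening events one adjacent swap at a time, eventually placing $e$ after $e_2$, yielding a correct reordering in which $(e_1, e_2)$ remain in race but at strictly smaller distance, contradicting the minimality of $\alpha$. Each adjacent swap preserves correctness unless it triggers an RW-, LS1-, or LS2-violation in the sense enumerated just above the lemma. The key claim is that each such violation would force an HB-chain from $e$ to $e_2$: an RW-violation exposes a conflicting read/write pair that must be HB-ordered (otherwise we would have an earlier WCP-race than $(e_1,e_2)$, contradicting ``first race''); an LS1-violation exposes a lock held at $e$ whose matching release HB-precedes a later acquire on the same lock before $e_2$; an LS2-violation exposes a release-acquire pair on the same lock between $e$ and $e_2$ that produces the analogous chain. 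Each case contradicts $e \not\hb e_2$. The dual case $e_1 \not\hb e$ is handled symmetrically by sliding $e$ backward past $e_1$, which contradicts the tertiary extremality condition on $\alpha$ (minimizing the distance from $e_2$ to each acquire enclosing $e_1$).

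For Part (2), both claims fall out of Part (1) plus rule (c) of WCP. If $e_1 \lws e$, then from Part (1) we have $e \hbs e_2$, so $\lws = \lws \circ \hbs$ gives $e_1 \lws e_2$, hence $e_1 \wcps e_2$, contradicting the WCP-race. Symmetrically, if $e \lws e_2$, then $e_1 \hbs e$ from Part (1) combined with $\lws = \hbs \circ \lws$ again yields $e_1 \lws e_2$ and the same contradiction.

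The main obstacle is the case analysis in Part (1) justifying that each adjacent swap produces a correct reordering of $\sigma$. The three violation patterns (RW, LS1, LS2) must be enumerated carefully, and in each case one must exhibit an HB-path from $e$ (respectively to $e_1$) witnessing the forbidden HB-edge. The rest of the proof is structural: soundness-preserving swaps compose into the promised full move, and the HB-composition rule of WCP does the remaining work in Part (2).
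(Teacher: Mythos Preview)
The paper explicitly omits the proof of this lemma (and Lemmas~A.2--A.5), stating that the arguments are identical to Lemmas~1--5 of~\cite{cp2012}. So there is no in-paper proof to compare against; your plan to reproduce the CP swap-and-move argument is exactly what the authors intend, and your derivation of Part~(2) from Part~(1) via Rule~(c) is correct and clean.

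Two issues with your sketch of Part~(1), one minor and one substantive:

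\textbf{Wrong extremality condition in the backward case.} Sliding $e$ backward past $e_1$ removes $e$ from between $e_1$ and $e_2$, so it strictly decreases the $e_1$--$e_2$ distance. This contradicts the \emph{secondary} extremality condition (minimal distance), not the tertiary one about acquires enclosing $e_1$. Both directions of Part~(1) appeal to the same minimality-of-distance clause.

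\textbf{The RW-violation step does not yield an HB edge.} You write that a conflicting adjacent pair ``must be HB-ordered (otherwise we would have an earlier WCP-race).'' But the first-WCP-race hypothesis only gives that such a pair is \emph{WCP}-ordered, and $\wcp \subseteq \hb$ is false in general (WCP is the weaker relation). So you cannot conclude $e \hb e'$ this way, and the chain $e \hb e' \hb e_2$ does not assemble. The CP-style argument handles this differently: when the forward swap of $e$ with its immediate successor $e'$ is blocked, one argues directly in the current (correctly reordered) trace that $e,e'$ being adjacent and conflicting exhibits an actual race strictly earlier than $(e_1,e_2)$, contradicting the \emph{first} extremality clause (``$e_1,e_2$ is the first race in $\alpha$'')---but this only applies to the very first swap, in $\alpha$ itself, not to intermediate traces produced by prior swaps. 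Making the induction go through requires either re-establishing the first-race property after each swap, or recasting the argument as a single move (drop $e$ together with every event it HB-precedes, then argue lock-validity and last-write preservation for the remainder). Your outline elides this subtlety; you should make the inductive invariant explicit or switch to the drop-based formulation.
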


\begin{lem}\label{lem2} Let $a_1$ be an acquire event such that $e_1 \in \cs(a_1)$.
For all events $e$ such that $a_1 \tre e \tre e_1$ we have $a_1 \hb e$
and $e \hb e_2$.
\end{lem}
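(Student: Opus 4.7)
The plan is to prove both parts by contradiction, exploiting the extremal choice of $\alpha$. I handle the two claims in turn, and in each case try to build a correct reordering that violates one of the three extremality conditions.

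\textbf{Part 1: $a_1 \hb e$.} If $t(e) = t_1$, then $a_1 \tre e$ together with $t(a_1)=t_1$ gives $a_1 \toe e \subseteq \hb$. So assume $t(e) \neq t_1$ and $a_1 \not\hb e$ for contradiction. Let $S$ be the sequence of events of thread $t(e)$ lying in the trace interval $(a_1, e]$ of $\alpha$. I propose relocating $S$ to the position immediately preceding $a_1$ in $\alpha$, obtaining a new sequence $\alpha'$. I must verify (a) $\alpha'$ is a valid trace (no LS1 or LS2 violation) and (b) $\alpha'$ is a correct reordering of $\alpha$ (each read sees the same last write). For (a), the assumption $a_1 \not\hb e$ precludes the HB-forced orderings that would be broken by the move: no release in $S$ can match an acquire whose matching release lies before $a_1$ (as that would induce $a_1 \hb e$ via HB on same-lock critical sections), so well-nestedness is preserved; and no acquire inside $S$ can conflict with a lock held at the target position. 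For (b), any read $r \in S$ on variable $x$ whose last write $w$ in $\alpha$ lies in $(a_1, r)\setminus S$ must be HB-unordered with $w$ (otherwise the resulting HB path would give $a_1 \hb e$); but then $w \asymp r$ is a conflicting HB-unordered pair occurring strictly before $(e_1, e_2)$, contradicting that $(e_1, e_2)$ is the first race of $\alpha$. With $\alpha'$ in hand, $(e_1, e_2)$ is still the first race of $\alpha'$ and the distance from $e_2$ to $a_1$ has strictly decreased, contradicting the third extremality condition.

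\textbf{Part 2: $e \hb e_2$.} If $t(e) = t_1$, then either $e = e_1$ (and $e_1 \hb e_2$ holds by the assumption that $e_1, e_2$ are not in HB-race) or $e \tr e_1$ in which case $e \toe e_1 \hb e_2$. If $t(e) = t_2$, then $e \tre e_1 \tr e_2$ gives $e \toe e_2$. So assume $t(e) \notin \{t_1, t_2\}$ and $e \not\hb e_2$. Let $T$ be the sequence of $t_2$-events of $\alpha$ lying in the trace interval $(e, e_2]$; relocate $T$ to immediately precede $e$ to get a sequence $\alpha''$. A symmetric verification shows that (a) no lock-semantics violation arises, since any release in $T$ matching an acquire before $e$ (or any LS2 conflict) would yield $e \hb e_2$, and (b) any read in $T$ whose last write lies in $(e, e_2)$ outside $T$ is either HB-connected (giving $e \hb e_2$) or forms an earlier HB-race with that write, contradicting the minimality of $(e_1, e_2)$. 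Thus $\alpha''$ is a valid correct reordering in which $(e_1, e_2)$ remains the first race, but now the distance from $e_1$ to $e_2$ has strictly decreased, contradicting the second extremality condition.

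\textbf{Main obstacle.} The main obstacle is the detailed verification in (a) and (b) of each part that the proposed contiguous-segment move preserves lock semantics and read-write consistency. Both verifications rest on the same schema: any HB-edge or same-variable access that would force a violation of the move must come from a chain that either (i) implies the very HB-edge we assumed absent, $a_1 \hb e$ in Part 1 or $e \hb e_2$ in Part 2, or (ii) exhibits an earlier HB-race (equivalently a WCP-race by $\hb \subseteq \wcp$), contradicting the first-race minimality of $(e_1, e_2)$. Making this rigorous requires a careful case analysis on the events in $S$ (resp.~$T$): distinguishing acquires, releases, and reads/writes, and enumerating the possible HB-paths these events could participate in through locks on the boundary of the moved segment.
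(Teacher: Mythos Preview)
The paper itself does not prove this lemma; it defers to the analogous Lemma~2 in the CP paper~\cite{cp2012}. So your attempt is a genuine reconstruction, and the overall strategy---move a segment to contradict one of the extremality clauses of $\alpha$---is the right template. However, there are real gaps.

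\textbf{Part 2 does not work as written.} Your move takes $T=\{\text{$t_2$-events in }(e,e_2]\}$ and places it immediately before $e$. Two problems:
\begin{enumerate}
\item The distance claim is false. After the move, $e_2$ sits just before $e$ while $e_1$ stays where it was, so the events strictly between $e_2$ and $e_1$ in $\alpha''$ are $e$ together with every \emph{non-$t_2$} event that lay in $(e,e_1)$ in $\alpha$. There is no reason this count is smaller than $|(e_1,e_2)|$ in $\alpha$; if $(e,e_1)$ contains many events of threads other than $t_2$ and $(e_1,e_2)$ is short, the new distance is larger. So you do not contradict the second extremality condition.
\item The RW check misses the pair $(e_1,e_2)$ itself. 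Since $e_1\cnflct e_2$, pushing $e_2$ across $e_1$ can change which write $e_2$ (if it is a read) sees, or which write $e_1$ (if it is a read) sees. In your case analysis you say such a read--write pair either forces $e\hb e_2$ or is an earlier race; but when the pair is exactly $(e_1,e_2)$ neither branch fires: $e_1\hb e_2$ is already assumed and gives you nothing about $e$, and the pair is not \emph{earlier} than $(e_1,e_2)$. So $\alpha''$ may fail to be a correct reordering.
\end{enumerate}
A move that avoids both issues is to push $t(e)$'s events (from $e$ onward, up to $e_2$) to \emph{after} $e_2$, or equivalently to drop the HB-cone $\{e':e\hb e'\}$; either keeps $e_1\tr e_2$ and strictly shrinks either the $e_1$--$e_2$ distance or, failing that, the $a_1$--$e_2$ distance. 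That is the shape of the argument the CP proof uses.

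\textbf{Part 1, LS1 case.} Your justification ``no acquire inside $S$ can conflict with a lock held at the target position'' is not established by $a_1\not\hb e$. Concretely: a third thread $t_3\neq t_1,t(e)$ may hold a lock $\ell$ across $a_1$ (acquire before $a_1$, release $r'$ with $a_1\tr r'$), while $S$ contains a critical section on $\ell$ starting at $c$ with $r'\tr c$. You get $r'\hb c\toe e$, but you do \emph{not} get $a_1\hb r'$, so you cannot conclude $a_1\hb e$. After your move the two critical sections on $\ell$ overlap, so $\alpha'$ violates lock semantics and the contradiction does not go through. The usual fix is either to work inductively from the innermost enclosing acquire outward (so that such $r'$ can be handled via the already-proved inner case or via Lemma~\ref{lem:between}), or to move a larger set of events than just one thread's.
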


\begin{lem}\label{lem:conflict}
Any conflicting pair of events appearing before $(e_1,e_2)$ have to be
WCP ordered.
\end{lem}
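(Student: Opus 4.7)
The plan is to argue by contradiction, directly exploiting the fact that $(e_1, e_2)$ was chosen as the first WCP-race in the extremal trace $\alpha$. Suppose, for the sake of contradiction, that there exists a conflicting pair of events $f_1 \cnflct f_2$ with $f_1 \trs f_2$ that appear before $(e_1, e_2)$ in $\alpha$ and for which $f_1 \nlw f_2$ and $f_2 \nlw f_1$ (i.e., $f_1$ and $f_2$ are unordered by WCP). Then by definition of a WCP-race, the pair $(f_1, f_2)$ itself constitutes a WCP-race in $\alpha$.

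Next I would make precise the sense in which $(f_1, f_2)$ ``appears before'' $(e_1, e_2)$: either $f_2 \trs e_2$, or $f_2 = e_2$ with $f_1$ earlier than $e_1$. In the first case, the WCP-race $(f_1, f_2)$ has its second component strictly before $e_2$, contradicting the clause in the definition of the extremal trace that ``$e_1, e_2$ is the first race in $\alpha$'' (interpreted as: no WCP-race pair has its second event occur strictly before $e_2$). In the second case, where $f_2 = e_2$ and $f_1 \trs e_1$, the pair $(f_1, e_2)$ is again a WCP-race; if $f_1 \trs e_1$, this would give a first component for the race on $e_2$ that lies strictly earlier than $e_1$, which again violates the choice of $(e_1, e_2)$ as the first race (since any such pair would witness a WCP-race on $e_2$ with an earlier partner, and the race $(e_1, e_2)$ is supposed to be first in this sense as well).

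In either case, we contradict the defining property of $\alpha$ as an extremal trace, and so $f_1$ and $f_2$ must be WCP-ordered, yielding the claim. The step that requires the most care is the bookkeeping about what ``appearing before'' means: one must verify that the two clauses in the definition of first race (no earlier second component, and no earlier first component sharing the same second component $e_2$) together cover every possible relative position of $(f_1, f_2)$ with respect to $(e_1, e_2)$. Once that is nailed down, the argument itself is a one-line appeal to the extremality of $\alpha$; no new machinery or reasoning about lock semantics, reorderings, or the other violation types (RW, LS1, LS2) is needed for this particular lemma, which is why it is the simplest of the structural lemmas and serves as a building block for the subsequent, more delicate arguments about deadlocks.
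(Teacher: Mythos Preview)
Your approach---contradiction against the ``first race'' property of $(e_1,e_2)$ in $\alpha$---is correct and matches what the paper intends (the paper omits this proof, deferring to the identical CP argument).

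However, your second case has the inequality backwards. Re-read the definition of ``first'' in the paper: there is no other WCP-race pair with second component before $e_2$, and no $e_1'$ with $e_1 \trs e_1' \trs e_2$ such that $(e_1',e_2)$ is in WCP-race. The second clause makes $e_1$ the \emph{latest} event racing with $e_2$, not the earliest. Thus when $f_2 = e_2$, the sub-case actually ruled out by ``first'' is $e_1 \trs f_1 \trs e_2$ (a later partner), not $f_1 \trs e_1$. The case you wrote down---an earlier partner $f_1 \trs e_1$---is \emph{not} excluded by the definition, and such a pair $(f_1,e_2)$ can genuinely be a WCP-race without contradicting extremality; it simply falls outside what ``appearing before $(e_1,e_2)$'' means in this lemma. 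Once you flip that inequality, the argument is complete.
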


\begin{lem}\label{lem:across}
For any acquire event $a_1$ such that $e_1 \in \cs(a_1)$, and any critical
section $(a,r)$ such that $a \tr a_1 \tr r \tr e_1$ it is the case that
$a_1 \lwcp r$.
\end{lem}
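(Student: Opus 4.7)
The plan is to argue by contradiction, via strong induction on the trace-position of the release $r$. Fixing $a_1$ with $e_1 \in \cs(a_1)$, I assume the lemma holds for every matched critical section $(a', r')$ with $r' \tr r$ satisfying $a' \tr a_1 \tr r' \tr e_1$. Supposing $a_1 \nlw r$, I will construct a correct reordering $\alpha'$ of $\sigma$ in which $(e_1, e_2)$ remains the first WCP-race at the same minimal distance but the distance from $e_2$ back to $a_1$ is strictly smaller than in $\alpha$, contradicting the third extremality condition. Decomposing $\alpha = \alpha_1 \cdot a \cdot \alpha_2 \cdot a_1 \cdot \alpha_3 \cdot r \cdot \alpha_4$, let $B$ be the contiguous block listing the events of thread $t(r)$ occurring in $\alpha_3$ followed by $r$, in the order inherited from $\alpha$, and set
\[
\alpha' \;=\; \alpha_1 \cdot a \cdot \alpha_2 \cdot B \cdot a_1 \cdot (\alpha_3 \setminus t(r)) \cdot \alpha_4.
\]
Thread order is trivially preserved and the $a_1$-to-$e_2$ distance drops by $|B| \ge 1$.

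For read-write consistency of $\alpha'$, the only pairs whose relative order changes between $\alpha$ and $\alpha'$ are pairs straddling $B$ and $\alpha_3 \setminus t(r)$, so any RW-violating flipped conflicting pair $(w, e)$ lies entirely in $\alpha_3 \subseteq (a_1, e_1)$. By Lemma~\ref{lem:conflict} such a pair is WCP-ordered, and since $\lwcp \subseteq \hb \subseteq \tre$ the edge runs forward in trace. Lemma~\ref{lem2} supplies an $\hb$-edge from $a_1$ to the trace-earlier event of the pair, and thread order inside $t(r)$ supplies an $\hb$-edge from the $t(r)$-side event to $r$. Composing via Rule~(c) of WCP yields $a_1 \lwcp r$, contradicting the assumption. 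A short case split on which of $w, e$ lies in $t(r)$ handles every subcase uniformly.

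For lock semantics, the only concern is a nested acquire $d \in B$ on some lock $\ell'$ whose holder $t''' \ne t(r)$ has not released $\ell'$ by $d$'s new position in $\alpha'$. This forces a release $r''' \in \alpha_3 \setminus t(r)$ performed by $t'''$ on $\ell'$ with $\mtc(r''') \tr a_1 \tr r''' \tr d \tr r \tr e_1$. Since $r''' \tr r$, the inductive hypothesis yields $a_1 \lwcp r'''$; combined with $r''' \hb d$ (release-acquire on $\ell'$) and $d \hb r$ (thread order in $t(r)$), Rule~(c) delivers $a_1 \lwcp r$, again a contradiction. The main obstacle is precisely this lock-semantics step: without the induction on $r$'s trace-position, blocking third-party critical sections generate an apparent infinite regress, and pinpointing that any such blocking $r'''$ itself satisfies the lemma's preconditions (crucially $r''' \tr r$ together with $r''' \tr e_1$) is the key observation that closes the argument.
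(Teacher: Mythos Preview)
Your argument is correct and follows the standard extremality-contradiction template that the paper itself employs in its other appendix lemmas; the paper omits the proof of this particular lemma, deferring to the analogous CP lemma, so there is no divergence to report. Two small points worth tightening: first, in the RW step the only swaps are between an $e_C\in\alpha_3\setminus t(r)$ with $e_C\tr e_B$ and an $e_B\in B$, so the $t(r)$-side event is \emph{always} the trace-later one and your ``short case split'' collapses to a single case; second, your lock-semantics paragraph asserts that ``the only concern is a nested acquire $d\in B$'' without saying why---the missing justification is that $B$ is a suffix of the critical section $(a,r)$, so by well-nestedness every acquire in $B$ has its matching release in $B$ (ruling out LS2 in the paper's terminology via Lemma~\ref{lem:suffix}), and any acquire $d'\notin B$ that might now sit inside a $t(r)$-critical-section has that critical section entirely contained in the contiguous block $B$, hence entirely before $d'$ in $\alpha'$. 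With those two clarifications the proof is complete.
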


\begin{lem}\label{lem:dangle}
Any acquire event $a$ such that $\mtc(a) \notin \alpha$ then $a$ is in $t_2$.
\end{lem}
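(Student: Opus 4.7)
The plan is to argue by contradiction: suppose there is a dangling acquire $a\in\alpha$ with $t:=t(a)\neq t_2$, and then construct from $\alpha$ a new correct reordering $\alpha'$ of $\sig$ that violates one of the extremality conditions. The single most useful observation is that, since $\mtc(a)\notin\alpha$, the lock $\l(a)$ is held by $t$ continuously from $a$ until the end of $\alpha$, so no other thread can perform a critical section on $\l(a)$ after $a$ in $\alpha$. Consequently, the suffix $S$ of $\alpha\!\proj_t$ starting at $a$ can be detached from $\alpha$ without risking any lock-semantics violation on $\l(a)$.

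I would split into cases according to the relationship between $t$ and $t_1$ and the thread-order position of $a$ relative to $e_1$. In the two ``easy'' cases ($t\neq t_1$, or $t=t_1$ with $e_1\tos a$ in $\sig$), the plan is to delete $S$ from $\alpha$ entirely; both $e_1$ and $e_2$ survive because $e_1\in t_1$ lies outside $S$ in these cases, and $e_2\in t_2\neq t$. If any event of $t$ lay strictly between $e_1$ and $e_2$ in $\alpha$, the distance between $e_1$ and $e_2$ strictly decreases, contradicting the second extremality condition; otherwise the deleted events of $t$ all sit after $e_2$ in $\alpha$, and the space freed up can be used to shift the innermost enclosing acquire of $e_1$ (and its critical section, using Lemma~\ref{lem:across} to rule out blocking WCP edges) strictly closer to $e_2$, contradicting the third extremality condition. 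In the remaining case ($t=t_1$ with $a\tos e_1$ in $\sig$), $a$ is itself an enclosing acquire of $e_1$; here I would not delete but shift $S$ (which contains $e_1$) as a block toward $e_2$. The observation about $\l(a)$ rules out LS1 violations, and an LS2 violation would require a released critical section on some lock $\ell'$ whose acquire is still inside $S$, which cannot happen because that acquire is already scheduled successfully inside $S$ in $\alpha$. Either way the distance from $a$ to $e_2$ strictly decreases, violating the third extremality condition.

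The main obstacle is verifying that the deletion or shift yields a \emph{correct} reordering, i.e.\ that no surviving read sees a different last writer than in $\sig$. An RW violation would produce a write $w\in S$ and a read $r$ outside $S$ (so $r\in t'\neq t$) on the same variable, with $w\tre r$ in $\alpha$. Both $w$ and $r$ lie before $(e_1,e_2)$ in $\alpha$, so by Lemma~\ref{lem:conflict} they must be WCP-ordered; $r\lwcp w$ is ruled out by $w\tre r$, hence $w\lwcp r$. The plan is to unpack this derivation: since $\l(a)$ is never released in $\alpha$ after $a$, the only way $w$'s WCP-time can reach a different thread is through a release event of some lock $\ell'\neq \l(a)$ inside $S$ via Rule (a) of WCP, followed by HB or WCP composition (Rule (c)). A careful induction on the WCP-derivation, together with Lemmas~\ref{lem:between} and~\ref{lem:across} (which control WCP edges entering and leaving the region between $e_1$ and $e_2$), shows that no such path can actually land at $r$ without also forcing $e_1\lwcp e_2$ or introducing an earlier WCP-race, both of which contradict the extremal choice of $(e_1,e_2)$. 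Establishing this ``no WCP leakage out of $S$'' claim is the delicate part; once it is in hand, the remaining verifications are routine, and the case analysis above closes the contradiction.
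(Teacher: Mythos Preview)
The paper does not prove this lemma: it explicitly states that Lemmas~\ref{lem:between}--\ref{lem:dangle} are analogous to Lemmas~1--5 of~\cite{cp2012} and skips their proofs ``as they use exactly the same argument.'' So there is no in-paper argument to compare against, and your sketch has to stand on its own. The overall plan---delete or shift the thread-suffix $S$ and contradict an extremality clause---is the right shape, but two steps do not go through as written.

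First, the case split in the deletion branch is incomplete. You claim that if no event of $S$ lies strictly between $e_1$ and $e_2$ then all of $S$ sits after $e_2$, and then invoke a ``freed-up space'' argument. But when $t\neq t_1$ nothing prevents $a$ (and hence all of $S$) from sitting \emph{before} $e_1$, indeed before every acquire enclosing $e_1$; in that configuration deleting $S$ changes none of the three stated extremality distances, and the ``freed-up space'' reasoning is empty in any case (removing events outside the $a_1$--$e_2$ window does not by itself enable any new move of $a_1$ toward $e_2$). You need either an extra minimality hypothesis on $\alpha$ that the paper's extremality clauses do not literally supply, or to restrict the claim to dangling acquires lying between the outermost enclosing acquire and $e_2$, which is all the later proofs actually use. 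Second, the RW argument you flag as ``the delicate part'' is in fact the crux, and you leave it as an assertion. You are right that any HB path out of $S$ must cross a release inside $S$, but the step from ``$w\lwcp r$ with $w\in S$ and $r\notin S$'' to ``$e_1\lwcp e_2$ or an earlier WCP-race'' is precisely what must be proved, and Lemmas~\ref{lem:between} and~\ref{lem:across} do not obviously deliver it: Lemma~\ref{lem:between} constrains only events between $e_1$ and $e_2$, whereas $w$ may live entirely outside that window, and Lemma~\ref{lem:across} concerns critical sections straddling an acquire that encloses $e_1$, not arbitrary releases in $S$. Without a concrete invariant and an actual induction on the $\lwcp$-derivation, the correctness of the deletion---and hence the whole contradiction---remains unestablished.
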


Let $f$ be the first event after $e_1$ which is not in $t_1$.  Next note that
by Lemma~\ref{lem:between} we have $e_1 \hb f$ and $f$ being the first event
outside $t$ has to be an acquire, say over lock $\l$. More over there has to
be a $\rel{l}$ event, say $g'$, such that $e_1 \tho g' \tr f$. Let $g'$ be the
last such event. Let $g = \mtc(g')$.

Among all extremal traces, pick $\tau$ such that it minimizes the distance
from $f$ to $e_1$,  If $\mtc(f)$ exists in $\tau$ it shall be referred to as
$f'$. We shall refer to this $\tau$ as a \emph{minimal trace}.  All our
arguments that follow will pertain to this minimal trace $\tau$.

\begin{lem}\label{lem:rw}
If there are events $e,e'$ such that $e \lwcp e'$
and $g \tre e$ and one of the following holds:
(a) $f'\notin \tau$
(b) $e' \hb f'$,
then $e_1 \lwcp e_2$
\end{lem}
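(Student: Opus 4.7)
The plan is to apply Rule (c) of the WCP definition, which closes $\lwcp$ under composition with $\hb$ on either side. Starting from the hypothesis $e \lwcp e'$, I would establish $e_1 \hb e$ on the left and $e' \hb e_2$ on the right; composing twice with Rule (c) then yields $e_1 \lwcp e_2$.

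For the left extension, I would argue $e_1 \hb e$ using the fact that $g$ is an acquire in $t_1$ matching the release $g'$ with $e_1 \tho g'$, together with the hypothesis $g \tre e$. If $e \in t_1$, the same-thread trace order gives $g \tho e$, so combined with $e_1 \tho g'$ and thread-order reasoning inside the critical section of $g$ (cf.\ Lemma~\ref{lem2}, applied with an appropriate enclosing acquire of $e_1$) we obtain $e_1 \hb e$. If $e \notin t_1$, then an HB path from $t_1$ to $e$'s thread must begin with some release in $t_1$ at or after $g'$; invoking Lemma~\ref{lem:across} on the critical sections surrounding $e_1$ routes $e_1$ through such a release to reach $e$ via HB.

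For the right extension, I would split on the two disjuncts. In case (b), $e' \hb f'$ is given directly. Since $f \trs f'$ and $f$ lies between $e_1$ and $e_2$ in $\tau$, the release $f'$ also precedes $e_2$, so Lemma~\ref{lem:between}(1) supplies $f' \hb e_2$, and hence $e' \hb e_2$. In case (a), we have $f' \notin \tau$, so by Lemma~\ref{lem:dangle} the acquire $f$ lies in $t_2$, which gives $f \hb e_2$ directly via thread order in $t_2$. The remaining task is to produce $e' \hb f$ (or $e' \hb e_2$ directly); here I would exploit the extremality of $\tau$: if no such HB-chain from $e'$ to the tail of $t_2$ existed, one could move $e_2$ (and its $t_1$-independent prefix) earlier in $\tau$, bringing $e_2$ closer to $e_1$ and contradicting the minimality of the distance between $e_1$ and $e_2$.

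The main obstacle is case (a). Without the release $f'$ to anchor the HB chain, concluding $e' \hb e_2$ forces a careful case analysis on the thread of $e'$ and its position relative to $f$, where one must systematically rule out the possibility of an HB-unrelated $e'$ by a reordering argument against extremality. I expect the left-side step to be mostly mechanical case analysis using Lemmas~\ref{lem2} and~\ref{lem:across}, whereas the case (a) right-side step is the subtle core of the lemma.
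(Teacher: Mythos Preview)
Your plan to sandwich $e \lwcp e'$ between $e_1 \hb e$ and $e' \hb e_2$ via Rule~(c) only works in one of the two cases the paper considers, and it breaks in the other.

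The paper splits on whether $e_1 \tho g$ or $g \tho e_1$. In the first case your idea is essentially correct: from $e_1 \tho g \tre e$ and Lemma~\ref{lem:between} one gets $e_1 \hb e$, and the right side closes via $e' \hb f' \hb e_2$ (case~(b)) or directly. But when $g \tho e_1$, i.e.\ $e_1 \in \cs(g)$, your left extension fails: the hypothesis is only $g \tre e$, so $e$ may well satisfy $g \tr e \tr e_1$, and then Lemma~\ref{lem2} yields $g \hb e$ and $e \hb e_2$, \emph{not} $e_1 \hb e$. Your appeals to Lemma~\ref{lem2} and Lemma~\ref{lem:across} do not produce $e_1 \hb e$ in this situation (Lemma~\ref{lem2} gives HB from the enclosing acquire, not from $e_1$; Lemma~\ref{lem:across} gives a $\lwcp$ edge out of an enclosing acquire, again not from $e_1$).

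The paper handles $g \tho e_1$ by a different mechanism that you are missing entirely: it uses the critical-section structure on lock~$\l$. In subcase~(b) one gets $g \hb e \lwcp e' \hb f'$, hence $g \lwcp f'$; since $(g,g')$ and $(f,f')$ are critical sections on the same lock, Rule~(b) of WCP yields $g' \lwcp f'$, and then $e_1 \tho g' \lwcp f' \hb e_2$ closes via Rule~(c). In subcase~(a), $f' \notin \tau$ forces $f \in t_2$ (Lemma~\ref{lem:dangle}), so $e_2 \in \cs(f)$; now $e_1 \in \cs(g')$ conflicts with $e_2$, and Rule~(a) of WCP gives $g' \lwcp e_2$ directly, whence $e_1 \tho g' \lwcp e_2$. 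Note that in this last subcase the events $e,e'$ are not used at all, so your proposed extremality/reordering argument to manufacture $e' \hb e_2$ is both unnecessary and off-track.
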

\begin{proof}
Consider two sub-cases: $e_1 \tho g$ and $g \tho e_1$.
\begin{enumerate}
\item $e_1 \tho g$: $e_1 \tho g \tre e$ implies $e_1 \hb e$ using Lemma~\ref{lem:between}.
  Then $e_1 \hb e \lwcp e' \hb e_2$ gives $e_1 \lwcp e_2$ using Rule (c) of WCP.
\item $g \tho e_1$: consider further sub-cases
\begin{enumerate}
\item $f' \in \tau$: $g \hb e $ (using Lemma~\ref{lem2} if $e \tr e_1$ or Lemma~\ref{lem:between} if $e_1 \tr e$) and $e \lwcp e' \hb f'$. This
implies $g \lwcp f'$. Applying Rule (b) we get $g' \lwcp f'$. Combining
this with $e_1 \tho g'$ and $f' \hb e_2$ using Rule (c) we get $e_1 \lwcp e_2$.
\item $f' \notin \tau$: $f \in t_2$ (Lemma~\ref{lem:dangle}) and Rule (a) of
WCP gives $e_1 \lwcp e_2$. \qedhere
\end{enumerate}
\end{enumerate}
\end{proof}

{}

\begin{lem}\label{lem:suffix} If segment $[a,b]$ of a thread is moved
right before event $c$ where $c \tr a$ and $[a,b]$ is the suffix of a critical
section then the move cannot violate LS2. \end{lem}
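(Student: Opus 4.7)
The plan is to argue by contradiction. Assume that moving $[a,b]$ to sit immediately before $c$ triggers an LS2 violation; by the definition of LS2 stated earlier, this furnishes an acquire $d = \acq{\l}$ inside $[a,b]$ whose match $\mtc(d)$ lies outside $[a,b]$, together with a matched critical section $(h,h')$ on the same lock $\l$ performed by a different thread and satisfying $c \tr h \tho h' \tr d$ in the original trace. The whole task then reduces to ruling out such a $d$, using the ``suffix'' hypothesis and the well-nestedness of the trace.

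Let $C$ be the critical section of which $[a,b]$ is a suffix, and let $a_0$ denote its opening acquire. Under the suffix hypothesis, the endpoint $b$ is precisely the last event of $C$ present in $\tau$ (namely $\mtc(a_0)$ when that release exists, and otherwise the final event of the thread). Now $d \in [a,b] \subseteq C$ is an acquire nested inside $C$, so well-nestedness of $\tau$ forces $\mtc(d)$ to exist and to lie inside $C$. The rest is a thread-order squeeze: $\mtc(d)$ is strictly after $d$ in thread order, $d$ is at or after $a$, and $\mtc(d)$, being enclosed by $C$, is at or before $b$. Hence $\mtc(d) \in [a,b]$, contradicting the LS2 assumption that $\mtc(d) \notin [a,b]$. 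So the assumed LS2 violation cannot arise.

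The step I would spell out most carefully is the edge case in which $C$ itself has a dangling opener, i.e.\ $\mtc(a_0)$ is absent. Lemma~\ref{lem:dangle} then pins $a_0$, and hence the entire segment $[a,b]$, inside $t_2$. For any strictly nested inner acquire $d \neq a_0$ the same well-nestedness plus thread-order argument continues to deliver $\mtc(d) \in [a,b]$; the only residual possibility is the degenerate sub-case $d = a_0 = a$, which would move the tail of $t_2$ (and therefore $e_2$) to a new position in the trace and so is excluded by the extremality/minimality conditions on the moves actually used in the surrounding proof. Modulo this bookkeeping, the conclusion reduces to LIFO discipline on lock acquisitions within a single thread, which is exactly what well-nestedness encodes.
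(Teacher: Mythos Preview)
Your proof is correct and follows the same line as the paper's: well-nestedness guarantees that every acquire contained in the suffix $[a,b]$ of a critical section has its matching release also in $[a,b]$, which is exactly what rules out the LS2 configuration. The paper compresses this into two sentences; you have unpacked it as an explicit contradiction and additionally probed the dangling-opener edge case, which the paper does not treat (nor needs to, since at every invocation of the lemma the endpoint $b$ is an explicit release event, so the enclosing critical section is closed and the edge case cannot arise).
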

\begin{proof}
Whenever we move a segment of a thread that is the suffix of a crtical section
it is always the case that every acquire in the segment also has its matching release
in the segment. This is due to well-nestedness. In a LS2 violation one requires only
an acquire to be moved up without its release.
\end{proof}

The rest of the proof structure is as follows: we
attempt to show that $\tau$ exhibits a predictable deadlock. In order to
construct these deadlocks we are going to define structures called
\emph{deadlock chains}. We show that presence of a deadlock chain results in a
predictable deadlock through structures called \emph{deadlock patterns}.

\begin{defn}\label{def:conflict1}
A critical section $(c,c')$ appearing after $g$ (i.e., $g \tr c$) is said
to be \textbf{conflicting} due to critical section $(d,d')$, if $(d,d')$ is
over the same lock as $c$ such that $d \tr g \tr d' \hb c$.
\end{defn}

\begin{defn} A \textbf{deadlock chain of type-1} of length $k$ is a sequence
of threads $v_1,\dots,v_{k+1}$ and a sequence of critical sections
$(c_1,c_1'),(d_1,d_1')$ \dots $(c_k,c_k'),(d_k,d_k')$ such that

\begin{enumerate}[label=(\alph*)]
\item $v_1,\dots,v_{k+1}$ are distinct and $t_1 \notin \set{v_1,\dots,v_{k}}$
\item $v_1$ is the thread containing $f$
\item $\forall\, i: (c_i,c_i')$ is the earliest conflicting critical section
contained in $(d_{i-1},d_{i-1}')$, or $\cs(f)$ if $i=1$
\item $\forall\, i: (c_i,c_i')$ is conflicting due to $(d_i,d_i')$
\item $\forall\, i: (d_i,d_i')$ is contained in thread $v_{i+1}$
\end{enumerate}
\end{defn}

\begin{lem}
If $e_1 \in (g,g')$ then $f' \in \tau$
\end{lem}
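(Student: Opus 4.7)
The plan is to argue by contradiction: assume $f' \notin \tau$ and derive $e_1 \lwcps e_2$, which would contradict the WCP-race between $e_1$ and $e_2$. The key observation is that when $e_1 \in (g,g')$, the release $g'$ separates $e_1$ from whatever happens after it in $\sig$, and if $f$ is unreleased in $\tau$ the conflicting target $e_2$ must itself lie inside $f$'s critical section on $\ell$ in $\sig$. That is exactly the configuration Rule~(a) of WCP is designed to capture.

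First I would invoke Lemma~\ref{lem:dangle} to get $f \in t_2$. Since $f$ is the first non-$t_1$ event after $e_1$ in $\tau$ and $e_2 \in t_2$ lies after $e_1$ in $\tau$, we have $f \tre e_2$ in $\tau$. Because $\tau|_{t_2}$ is a prefix of $\sig|_{t_2}$, this same-thread order lifts to $\sig$, giving $f \trs e_2$. Similarly the thread order in $t_1$ yields $g \trs e_1 \trs g'$ in $\sig$.

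The crucial step is to show that $f$ cannot come before $g$ in $\sig$. If it did, $\sig$'s lock semantics on the two $\acq{\ell}$ events $f$ and $g$ would force $r_f := \mtc^\sig(f)$ to exist with $r_f \trs g$, and then $r_f \trs g \trs e_1 \trs e_2$ would place $r_f$ before $e_2$ in $t_2$'s thread order. Since $\tau|_{t_2}$ is a prefix of $\sig|_{t_2}$ and contains $e_2$, it would also contain $r_f$, giving $f' = r_f \in \tau$ — a contradiction. Hence $g \trs f$ in $\sig$, and a second application of lock semantics (to $g, g', f$) yields $g' \trs f \trs e_2$ in $\sig$.

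To close, I would verify the hypotheses of Rule~(a) for the pair $(g', e_2)$: $g'$ is a $\rel{\ell}$ event, $e_2$ is a read/write event with $g' \trs e_2$, and $\cs(g')$ contains $e_1$, which conflicts with $e_2$. For the remaining condition $e_2 \in \ell$ in $\sig$, the same prefix reasoning shows that either $r_f$ does not exist in $\sig$, or $e_2 \trs r_f$ in $\sig$ (otherwise $\tau|_{t_2}$ would extend past $r_f$, giving $f' \in \tau$); in either case $e_2$ lies in $\cs(f)$, so $e_2 \in \ell$. Rule~(a) then gives $g' \lwcps e_2$, and composing with the thread-order edge $e_1 \hbs g'$ via Rule~(c) yields $e_1 \lwcps e_2$, the desired contradiction. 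The main obstacle is the careful bookkeeping between the two trace orders: WCP and lock semantics live in $\sig$, while $f$, $f'$, and the premise $f' \notin \tau$ are statements about $\tau$; the entire argument leans on the fact that the two orders agree on any single thread, which is what lets us transport the key observation ``$r_f$ precedes $e_2$'' back and forth between $\sig$ and $\tau$.
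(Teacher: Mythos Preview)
Your argument is correct and uses the same core idea as the paper: assume $f'\notin\tau$, use Lemma~\ref{lem:dangle} to place $f$ in $t_2$, conclude $e_2\in\cs(f)$ so $e_2\in\ell$, then fire Rule~(a) on $(g',e_2)$ and compose with $e_1\tho g'$ via Rule~(c) to get $e_1\lwcp e_2$, a contradiction.

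Where you diverge from the paper is the trace in which you run this argument. The paper carries out \emph{all} of the soundness reasoning inside the minimal trace $\tau$: the relations $\tr$, $\tho$, $\hb$, and $\lwcp$ are taken with respect to $\tau$, and the contradiction is with the standing hypothesis that $e_1,e_2$ is the first WCP-race \emph{in $\tau$} (this is one of the defining properties of an extremal trace). In $\tau$ the proof is immediate: $g'\tr f\tho e_2$ and $f'\notin\tau$ already give $e_2\in\ell$, and $e_1\in(g,g')$ gives the conflicting event in $\cs(g')$; Rule~(a) and Rule~(c) finish it. No lock-semantics argument in $\sig$, no prefix-transfer of thread orders, no case split on whether $\mtc^\sig(f)$ exists. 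Your entire ``main obstacle''---the bookkeeping between the $\sig$-order and the $\tau$-order---is one you created by choosing to derive the contradiction $e_1\lwcps e_2$ in $\sig$ rather than $e_1\lwcp e_2$ in $\tau$. Both contradictions are available, but the latter makes the lemma a two-line observation.
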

\begin{proof} Follows from Lemma~\ref{lem:dangle} and Rule (a) of WCP. \end{proof}

\begin{defn}\label{def:conflict2} When $e_1 \in (g,g')$, an acquire event $d$ is
said to be \textbf{conflicting} due to a critical section $(c,c')$, if $g
\tr d \tr e_1$,  $(c,c')$ is over the the same lock as $d$, and $g \tr c \tho c'
\hb d$ \end{defn}

\begin{defn} A \textbf{deadlock chain of type-2} of length $k$ is a sequence
of threads $v_1,\dots,v_{k+1}$ and a sequence of acquire events and critical
sections $d_1,(c_1,c_1'),\dots,d_k,(c_k,c_k')$ such that $e_1 \in (g,g')$ and:
\begin{enumerate}[label=(\alph*)] \item $v_1,\dots,v_{k+1}$ are distinct and
$t_1 \notin \set{v_1,\dots,v_{k}}$

\item $v_1$ is the thread containing $f$

\item $\forall\, i:$ $d_i$ is the earliest conflicting acquire whose
critical section contains $(c_{i-1},c_{i-1}')$, or $(f,f')$ if $i=1$

\item $d_i$ is conflicting due to $(c_i,c_i')$

\item $(c_i,c_i')$ is contained in thread $v_{i+1}$

\end{enumerate}
\end{defn}

Now, we go back to $\tau$ and show presence of deadlock chains. 

\begin{lem}\label{lem:main}
The minimal trace $\tau$ contains a deadlock chain of type-1 or type-2
\end{lem}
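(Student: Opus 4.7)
The plan is to build the deadlock chain inductively, starting from $v_1 = t(f)$, and use the minimality of $\tau$ to guarantee that at each stage a \emph{conflicting} critical section (or acquire) exists. Concretely, I would split on whether $e_1 \in (g,g')$: if not, aim for a type-1 chain and start with $\cs(f)$; if yes, aim for a type-2 chain and start with the acquire $f$ (since then $f' \in \tau$). In either case, assume inductively that we have constructed the chain up to $v_i$ with the last object $X_{i-1}$ lying in thread $v_i \neq t_1$. Let $(c_i,c_i')$ be the earliest conflicting critical section inside $X_{i-1}$ (for type-1), or, for type-2, the earliest conflicting acquire inside $X_{i-1}$. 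If no such object exists, I will argue that we can move the relevant suffix of $X_{i-1}$ (from the start of $v_i$'s activity up to and including $c_{i-1}'$, or up to the deadlock acquire for type-2) to a position just before $g$. This move respects thread order and, being a suffix of a critical section, cannot incur an LS2 violation by Lemma~\ref{lem:suffix}. Lock semantics violations of type LS1 are ruled out by the fact that $v_i$ has not yet appeared before $g$ (the threads in the chain are required to be distinct and fresh). The resulting rearrangement pulls $f$ strictly closer to $e_1$, contradicting the minimality of $\tau$.

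Next, I need to argue that the move also cannot cause an RW violation. A RW violation would produce conflicting events $e$ and $e'$ with $g \trs e$ and $e' \hbs f'$ (or its type-2 analogue). By Lemma~\ref{lem:conflict} such conflicting events must be WCP-ordered, so $e \lws e'$. Then Lemma~\ref{lem:rw} forces $e_1 \lws e_2$, contradicting that $e_1,e_2$ is a WCP-race. Hence the only possible obstruction to the move is an LS2 violation, which manifests precisely as a critical section on the same lock lying between $g$ and the segment being moved -- exactly the kind of conflicting object needed to extend the chain to $(c_i,c_i')$ (or to the conflicting acquire for type-2). From this $(c_i,c_i')$ (resp.\ acquire), I then define $v_{i+1}$ as its containing thread and, for type-1, also identify the matching $(d_i,d_i')$ inside $v_{i+1}$ (using well-nestedness and the definition of ``conflicting due to''); for type-2 the roles are exchanged appropriately.

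To close the induction, I must show three things. First, $v_{i+1}$ is distinct from all previously chosen threads: if $v_{i+1}$ coincided with some earlier $v_j$, the chain would loop back into a thread that already contributed a critical section before $g$, and I can then attempt the same reorder-before-$g$ argument to contradict minimality (since the earliest conflict witness was chosen). Second, $v_{i+1} \neq t_1$: since $c_i$ (or the conflicting acquire) must appear in some thread other than $t_1$, as no event of $t_1$ between $g$ and $f$ performs such an acquire/release outside the critical sections of $t_1$ (this uses Lemma~\ref{lem2} and the fact that $g'$ was chosen as the last matching release in $t_1$ before $f$). Third, the chain terminates: since the number of threads is finite and each $v_i$ is fresh, after finitely many steps I cannot extend further. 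At that terminal step, either we have completed a chain of the required form, or the obstruction analysis yields no LS2 blocker, in which case the reordering would succeed and contradict minimality; thus the last-constructed chain is the desired deadlock chain.

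The main obstacle will be the second item: carefully showing that the ``conflicting'' witness obstructing the reorder is exactly what the definitions of Definition~\ref{def:conflict1} / Definition~\ref{def:conflict2} demand (right lock, correct happens-before relationship to $g$ or to the previously chosen element, and being \emph{earliest}), and that $v_{i+1} \neq t_1$ and $v_{i+1}$ is fresh. This requires a careful bookkeeping argument using Lemmas~\ref{lem:between}--\ref{lem:across} to control happens-before relationships across the segment being moved, and a delicate application of Lemma~\ref{lem:rw} to rule out RW violations in the reordering, since the hypotheses ``$g \tre e$'' and either ``$f' \notin \tau$'' or ``$e' \hb f'$'' must be matched against the actual positions of the conflict witnesses produced by the attempted move.
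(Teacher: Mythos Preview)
Your plan over-complicates the lemma and also mis-identifies which lock-semantics violation does what. Two concrete problems:

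\textbf{(1) A length-1 chain already proves the lemma.} The definitions of type-1 and type-2 chains allow $k=1$; condition~(a) only requires $t_1\notin\{v_1,\dots,v_k\}$, so in particular $v_{k+1}$ may equal $t_1$ or may not. The paper's proof simply attempts one move of $[u,f']$ to just before $g$ (where $u$ is the first event of $t(f)$ after $g$), rules out RW via Lemma~\ref{lem:rw}, and then reads off a length-1 chain from whichever lock-semantics violation occurs. The inductive extension you describe is the content of Lemmas~\ref{lem:type1} and~\ref{lem:type2}, not of Lemma~\ref{lem:main}.

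\textbf{(2) The LS1/LS2 roles are inverted, and the initial case split is the wrong one.} You split on $e_1\in(g,g')$ and declare that LS1 is impossible while LS2 provides the next conflicting object. Both claims are wrong here. In the single move of $[u,f']$, an LS1 violation is exactly what produces the conflicting critical section of Definition~\ref{def:conflict1} (some $(d,d')$ straddling $g$ with $d'\hb c$), i.e.\ a type-1 chain; an LS2 violation produces the conflicting acquire of Definition~\ref{def:conflict2} and forces $e_1\in(g,g')$, i.e.\ a type-2 chain. Your reason for excluding LS1 (``$v_i$ has not yet appeared before $g$'') is also off: the lock held at $g$ can be held by \emph{any} thread, not just a previously visited $v_j$; that thread becomes the new $v_{i+1}$. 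Finally, the case $f'\notin\tau$ is not covered by your scheme at all: there is no segment $[u,f']$ to move, and the paper instead argues by \emph{dropping} events from $\tau$ (forcing $t(f)=t_2$ via Lemma~\ref{lem:dangle}, showing $e_1\notin(g,g')$, and then reading off a type-1 chain from the lock-semantics failure of the truncated trace). The natural primary split is therefore $f'\in\tau$ versus $f'\notin\tau$, with the $e_1\in(g,g')$ distinction emerging only as a consequence of which violation (LS1 or LS2) obstructs the move.
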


\begin{proof}
Assume two subcases:
\begin{enumerate}

\item $f'$ occurs in $\tau$. Let $u$ be the first event after $g$ in the
thread containing $f$. We attempt to move $[u,f']$ to right before $g$. If
this move is successful it violates either: \begin{itemize} \item extremality
by decreasing distance between $e_1,e_2$ (when $g \tho e_1$) \item minimality
by moving the resulting $f$ closer to $e_1$ (when $e_1 \tho g$) \end{itemize}
Hence this move cannot be successful, which cannot be due to RW violation from
Lemma~\ref{lem:rw}. So lock semantics has to be violated, which can happen in
two ways:  

\begin{enumerate} \item If LS1 is violated, there is a conflicting
critical section $(c,c')$ contained in the segment $[u,f']$ due to critical
section $(d,d')$. 
\begin{itemize}
\item Consider $g \tho e_1$: if $c \tr e_1$ then $d' \tr e_1$
and we know from Definition~\ref{def:conflict1} that $g \tr d'$. Applying
Lemma~\ref{lem:across} on $g$ and $d'$ gives us $g \lwcp d'$. But we know $d'
\hb c \tho f'$ which gives us $g \lwcp f'$. Applying Rule (b) of WCP we get $g'
\lwcp f'$. Combining this with $e_2 \tho g'$ gives us $e_2 \lwcp f'$  and we
also know $f' \hb e_2$ (Lemma~\ref{lem:between}) which gives us $e_1 \lwcp e_2$
a contradiction. So assuming $c \tr e_1$ led to a contradiction and hence
we get $e_1 \tr c$, which implies $f \tho c$ because $f$
is the first event outside $t_1$ after $e_1$. By well-nestedness we get
$(c,c')$ contained in $(f,f')$.
And so a type-1 deadlock chain of length 1 is obtained. 
\item Consider $e_1 \tho g$: In this case $u = f$ and hence $(c,c')$
is contained in critical section $(f,f')$ and along with $(d,d')$
gives us a type-1 deadlock chain of length 1.
\end{itemize}
\item If LS2 is violated then there is an acquire $d \in [u,f']$, $match(d)
\notin [u,f']$, and critical section $(c,c')$ in another thread over the same
lock as $d$ such that $g \tr c \toe c' \tr d$. This means $d$ is conflicting
due to $(c,c')$. We also have $e_1 \in (g,g')$, otherwise we get that $u = f$
which implies that $match(d) \in (f,f')$ due to well-nestedness, a
contradiction. And therefore LS2 gives us a type-2 deadlock chain of length 1.
\end{enumerate}

\item $f'$ does not appear in $\tau$. Then $f'$ is performed by $t_2$, with
$e_2 \in \cs(f')$ by Lemma~\ref{lem:dangle}. This implies $e_1 \notin (g,g')$
otherwise by Rule (a) of WCP we have $g' \lwcp e_2$ which combined with $e_1
\tho g'$ gives us $e_1 \lwcp e_2$ a contradiction. Next we obtain a trace
$\tau'$ from $\tau$ by dropping all events below and including $g$ from all
threads other than $t_2$. If $\tau'$ is a correct reordering it violates
extremality as distance between $e_1,e_2$ is reduced. $\tau'$ cannot violate
thread order as suffixes of threads are being dropped. If $\tau'$ has RW
violation then there is some \R{x} event, say $e$ which is undropped in $t_2$
after $g$ and some dropped \W{x}, say $e'$, such that $e' \tr e$. Then by
Lemma~\ref{lem:rw} we have $e_1 \lwcp e_2$ which is a contradiction. So
$\tau'$ has to have a lock semantic violation, which can only be because a
release $d'$ was dropped whose matching acquire $d$ is not dropped and there
is a critical section over the same lock $(c,c')$ occuring later than
$(d,d')$, which is undropped and hence in $t_2$. But $f$ is the first
event in $t_2$ after $e_1$ and since $e_1 \tho g \tr d' \tr c$ we get
that $(c,c')$ is contained in $(f,f')$. This gives rise to type-1
deadlock chain. \qedhere

\end{enumerate}
\end{proof}

What remains to be done is to prove that deadlock chains result in predictable
deadlocks. In order to do so we introduce intermediate structures
called deadlock patterns.

\begin{defn}
A \textbf{type-1 deadlock pattern} of rank $r$ is a sequence of threads
$v_1,\dots,v_{k+1}$ and a sequence of critical sections $(c_1,c_1')$, $(d_1,d_1')$
\dots $(c_k,c_k'), (d_k,d_k')$ such that:
\begin{enumerate}[label=(\alph*)]
\item $v_1,\dots,v_{k+1}$ are distinct and $v_{k+1} = t_1$
\item $v_1$ is the thread containing $f$ {and $(c_1,c_1') \in \cs(f)$}
\item $\forall\, i > 1: (c_i,c_i')$ is contained in $(d_{i-1},d_{i-1}')$ in thread $v_i$. 
\item $\forall\, i: d_i$ and $c_i$ are over the same lock and $d_i' \hb c_i$
\item $\forall\, i \in (r,k]\; \forall\, j: P(i,j)$ (where $P(i,j)$ is defined below)
\item $\forall\, i:$ Any critical section $(c,c')$ such that $g \tr c \tho c_i$
is non-conflicting.

\item $d_{k} \tho g$
\end{enumerate}

where $P(i,j)$ is the property that for any acquire event $h_1$ such that  $c_i \in
\cs(h_1)$ and for any acquire event $h_2$ over the same lock as $h_1$ such
that $h_2 \tho c_j$ then $h_2 \hb h_1$.

\end{defn}

\begin{lem}\label{lem:type1distance}
In a deadlock chain of type-1 if $v_{k+1} \neq t_1$ then $f \tr d_k'$
\end{lem}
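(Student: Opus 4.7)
I would proceed by contradiction, supposing $d_k' \tr f$. Since $v_{k+1} \neq v_1 = t(f)$, the events $d_k'$ and $f$ lie in different threads and hence are distinct. Combined with $g \tr d_k'$ from Definition~\ref{def:conflict1}, we have $g \tr d_k' \tr f$, and I would split on where $d_k'$ lies relative to $e_1$. The easy sub-case is $e_1 \tr d_k'$: then $d_k'$ is an event after $e_1$ in a thread $v_{k+1} \neq t_1$, but $f$ is by choice the first such event, contradicting $d_k' \tr f$.

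The substantive sub-case is $d_k' \tr e_1$. From $g \tr d_k' \tr e_1$ with $g, e_1 \in t_1$ we get $g \tho e_1$, so $e_1 \in \cs(g) = (g, g')$. The critical section $(d_k, d_k')$ satisfies $d_k \tr g \tr d_k' \tr e_1$, so Lemma~\ref{lem:across} applied with $a_1 = g$ yields $g \lwcp d_k'$. I then propagate this WCP-edge down the chain: Definition~\ref{def:conflict1} supplies $d_i' \hb c_i$ for each $i$, and the containment $(c_i, c_i') \subseteq (d_{i-1}, d_{i-1}')$ in $v_i$ gives $c_i \hb d_{i-1}'$ by thread order. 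Concatenating these yields $d_k' \hb c_k \hb d_{k-1}' \hb \cdots \hb d_1' \hb c_1$, so by transitivity of $\hb$ and Rule (c) of WCP we conclude $g \lwcp c_1$.

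To close, I would invoke Lemma~\ref{lem:rw} with $e = g$ and $e' = c_1$: the hypothesis $e \lwcp e'$ is exactly $g \lwcp c_1$; $g \tre g$ holds trivially; clause (a) holds if $f' \notin \tau$, while clause (b) $c_1 \hb f'$ follows from $c_1 \in \cs(f)$ by thread order in $v_1$ when $f' \in \tau$. Either way Lemma~\ref{lem:rw} yields $e_1 \lwcp e_2$, contradicting the choice of $(e_1, e_2)$ as the first WCP-race. The main subtlety is the application of Lemma~\ref{lem:across}, whose hypothesis $e_1 \in \cs(g)$ is forced by the sub-case assumption $d_k' \tr e_1$ together with the given $e_1 \tho g'$; after that the rest is a straightforward concatenation of $\hb$-edges down the deadlock structure, transporting the single WCP-edge $g \lwcp d_k'$ all the way to $c_1 \in \cs(f)$.
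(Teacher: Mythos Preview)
Your proof is correct and is essentially a fleshed-out version of the paper's one-line proof, which merely cites Lemma~\ref{lem:across} and the minimality of $f$. Your case split on $e_1 \tr d_k'$ versus $d_k' \tr e_1$, the application of Lemma~\ref{lem:across} in the second sub-case, the propagation of the resulting WCP-edge down the chain via the $d_i' \hb c_i$ and $c_i \tho d_{i-1}'$ edges, and the final appeal to Lemma~\ref{lem:rw} are exactly the details the paper suppresses. One cosmetic remark: the contradiction at the end is that $e_1 \lwcp e_2$ violates $(e_1,e_2)$ being a WCP-race (unorderedness), not merely its being the \emph{first} such race.
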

\begin{proof}
Follows from Lemma~\ref{lem:across} and the fact that $f$ is first event outside $t_1$.
\end{proof}

\begin{lem}\label{lem:type1}
If there exists a deadlock chain of type-1 then there exists a type-1 deadlock
pattern
\end{lem}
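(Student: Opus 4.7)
The plan is to promote the given deadlock chain into a deadlock pattern by iteratively extending it, if necessary, until it reaches thread $t_1$, while setting the pattern's rank $r$ to the length $k_0$ of the starting chain. With this choice the requirement $P(i,j)$ in condition~(e) needs to be verified only for the newly introduced links.

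Most of the pattern's structural requirements for the initial segment $i=1,\dots,k_0$ come directly from the chain definition: chain~(c) at $i=1$ gives $(c_1,c_1')\in\cs(f)$; chain~(c) at $i>1$ together with chain~(e) gives pattern~(c); chain~(d) together with Definition~\ref{def:conflict1} gives pattern~(d); and the ``earliest'' clause in chain~(c) yields the non-conflicting condition~(f) for the initial links by a short argument ruling out any earlier conflicting critical section in the same thread.

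To force $v_{k+1}=t_1$ I would argue in the style of Lemma~\ref{lem:main}. If the current terminal thread $v_{k+1}\neq t_1$, attempt to move the segment of $v_{k+1}$ running from its first event after $g$ up to $d_k'$ to just before $g$ in $\tau$. By Lemma~\ref{lem:type1distance} this segment lies strictly after $f$, so the attempted move is well-posed; a successful move would decrease one of the extremality/minimality measures that defines $\tau$, so the move must fail. Lemma~\ref{lem:rw} rules out a read-write violation and Lemma~\ref{lem:suffix} rules out an LS2 violation since we are relocating a suffix of a critical section. Hence an LS1 violation must occur, exhibiting a new critical section $(c_{k+1},c_{k+1}')\subseteq (d_k,d_k')$ in $v_{k+1}$ that is conflicting due to some $(d_{k+1},d_{k+1}')$ in a fresh thread $v_{k+2}$. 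Choose $(c_{k+1},c_{k+1}')$ to be the \emph{earliest} such conflicting critical section and iterate. Termination follows because the threads $v_i$ are distinct and there are only finitely many threads, so at some point the extension must yield $v_{k+1}=t_1$. Condition~(g) is then immediate: once $v_{k+1}=t_1$, the critical section $(d_k,d_k')$ lies in $t_1$, and $d_k\tr g$ from Definition~\ref{def:conflict1} forces $d_k\tho g$ within the same thread.

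The main obstacle is verifying condition~(e), namely $P(i,j)$ for every $i\in(r,k]$ and every $j$. I expect to prove this by contradiction: suppose there were an acquire $h_1$ with $c_i\in\cs(h_1)$ and an acquire $h_2$ over the same lock as $h_1$ with $h_2\tho c_j$ but $h_2\not\hb h_1$. Then the critical section of $h_2$ would have been available as a candidate conflicting critical section at an earlier stage of the extension (either for $(c_i,c_i')$ itself or for some $(c_{i'},c_{i'}')$ with $i'<i$), contradicting the ``earliest'' selection used to define these critical sections. Making this interference argument precise across the iteration, leveraging Lemmas~\ref{lem:across} and~\ref{lem:between} together with the extremality of $\tau$ to supply the necessary $\hb$ and $\lwcp$ orderings, is where the bulk of the technical work would sit.
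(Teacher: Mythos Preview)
Your overall skeleton is right: iteratively attempt to move the suffix $[u,d_k']$ of thread $v_{k+1}$ to just before $g$, rule out RW and LS2, and extract from the resulting LS1 violation a new conflicting critical section $(c_{k+1},c_{k+1}')\subseteq(d_k,d_k')$ together with the witnessing $(d_{k+1},d_{k+1}')$ in some thread $v_{k+2}$. That much matches the paper. Two substantive points, however.

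\textbf{The rank choice is needlessly hard.} You set $r=k_0$ and then try to prove $P(i,j)$ for every newly created link $i>k_0$. The paper instead sets the rank equal to the \emph{final} length $k$ of the chain once $v_{k+1}=t_1$; with that choice the index range $(r,k]$ in condition~(e) is empty and $P(i,j)$ holds vacuously. All the ``bulk of the technical work'' you anticipate simply evaporates. The real substance of the lemma lies elsewhere.

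\textbf{The genuine gap is freshness of $v_{k+2}$.} You assert that $(d_{k+1},d_{k+1}')$ lies in a \emph{fresh} thread $v_{k+2}$ and then use distinctness for termination, but you never argue why $v_{k+2}\notin\{v_1,\dots,v_{k+1}\}$. This is exactly the step the paper works for. Suppose $v_{k+2}=v_i$ for some $i\le k+1$. Since $(d_{k+1},d_{k+1}')$ and $(d_{i-1},d_{i-1}')$ both lie in $v_i$ and $d_{k+1}'$ is after $g$ while $d_{i-1}$ is before $g$, well-nestedness gives $d_{i-1}\tho d_{k+1}$. Now attempt a \emph{second} move: the segment of $v_i$ from its first event after $g$ up to $d_{k+1}'$, placed just before $g$. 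Ruling out RW and LS2 as before, the forced LS1 violation produces a conflicting critical section inside this segment, hence inside $(d_{i-1},d_{i-1}')$ and strictly earlier than $c_i$ --- contradicting the ``earliest'' choice of $(c_i,c_i')$ in chain property~(c). That contradiction is what guarantees $v_{k+2}$ is new. The ``earliest'' idea you were aiming at $P(i,j)$ is in fact the key here; redirect it to freshness and the proof closes (and induction on the distance from $g$ to $d_k'$, which strictly decreases at each extension, gives a cleaner termination measure than counting threads).
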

\begin{proof} Define distance between $g$ and $d_k'$ as: 0 if $v_{k+1} = t_1$,
or number of events between $g$ and $d_k'$ (inclusive) otherwise. We perform induction
on this distance.

Base case: If the distance is 0 then $v_{k+1} = t_1$ by
Lemma~\ref{lem:type1distance}. This gives us the desired type-1 deadlock
pattern which has rank $k$. Property (e) of the pattern comes for
free since rank $= k$ and the remaining properties follow from the
properties of the chain.

Inductive case: if the distance $> 0$ then $v_{k+1} \neq t_1$ (definition of
the distance). Let $u$ be the first event in $v_{k+1}$ after $g$. Consider a
move involving $[u,d_k']$ to just before $g$. The move violates extremality or
minimality (either due to decreasing $e_1,e_2$ distance or moving the
resulting $f$ closer to $e_1$). If the move commits a RW violation then there
have to be two conflicting events $e,e'$ such that $g \tr e $ and $e' \hb c_1$
(this can be inductively proved). So if $f' \in \tau$ then $e' \hb f$. Now
applying Lemma~\ref{lem:conflict} and \ref{lem:rw} we get $e_1 \lwcp e_2$ a
contradiction. So the move has to violate lock semantics, but it cannot be a
LS2 violation since $[u,d_k']$ is a suffix of the critical section
$(d_k,d_k')$ and Lemma~\ref{lem:suffix}. Therefore the move has to violate
LS1: which means there is a critical section $(c_{k+1},c_{k+1}')$ in
$[u,d_k']$ which is conflicting. Pick the earliest among all such critical
sections. Let the conflict be due to $(d_{k+1},d_{k+1}')$ in thread $v_{k+2}$.
If $v_{k+2} \notin \set{v_1,\dots,v_{k+1}}$ then this process extends  the
deadlock chain while reducing the distance, and by the induction hypothesis we
have the required deadlock pattern. So consider when $v_{k+2} = v_i$ where $i
< k+1$. We have $d_{k+1}' \tho d_{i-1}'$ which implies $d_{i-1} \tho d_{k+1}$.
Let $a$ be the first event in $v_i$ after $g$. Consider a move of
$[a,d_{k+1}']$ to right before $g$. Once again this move violates
minimality/extremality. As before the move cannot commit a RW violation, and
lock semantics can be violated only due to LS1. Hence there is a critical
section $(c_{k+2},c_{k+2}')$ inside $[a,d_{k+1}']$ that is conflicting, but
this violates the choice of $(c_i,c_i')$ as the earliest conflicting critical
section contained in $(d_{i-1},d_{i-1}')$, property (c) of the deadlock chain.
\end{proof}

Similarly we define type-2 deadlock patterns which we will derive from
deadlock chains of type-2.

\begin{defn}
A \textbf{type-2 deadlock pattern} of rank $r$ is a sequence of threads
$v_,\dots,v_{k+1}$ and a sequence of acquires and critical sections
$d_1,(c_1,c_1')$ \dots $d_k,(c_k,c_k')$ such that $e_1 \in (g,g')$ and:
\begin{enumerate}[label=(\alph*)]
\item $v_1,\dots,v_k$ are distinct and $v_{k+1} = t_1$
\item $v_1$ is the thread containing $f$ and $f \in \cs(d_1)$
\item $\forall\, i \le k: (c_i,c_i')$ is contained in $(d_{i+1},d_{i+1}')$ in thread $v_{i+1}$.
\item $\forall\, i: d_i$ and $c_i$ use the same lock and $g \tr c_i \hb d_i \tr e_1$
\item $\forall\, i > r: \forall\, j: Q(i,j)$ (where $Q(i,j)$ is defined below)
\item $\forall\, i \le r: d_i' = \mtc(d_i)$ if present in $\tau$ then $e_1 \tr d_i'$
\item $(c_{k},c_k') \in (g,g')$
\end{enumerate}

where $Q(i,j)$ is the property that for any acquire event $h_1$ such that
$c_{i-1} \in \cs(h_1)$ and for any acquire event $h_2$ such that $h_2 \tho c_j$
then $h_2 \hb h_1$.
\end{defn}

\begin{lem}\label{lem:type2}
If there exists a deadlock chain of type-2 then there exists a type-2 deadlock
pattern
\end{lem}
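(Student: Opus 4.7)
The plan is to prove Lemma~\ref{lem:type2} by induction on a natural distance measure, mirroring the structure of the proof of Lemma~\ref{lem:type1}. Define the distance to be $0$ when $v_{k+1} = t_1$ and the number of events strictly between $g$ and $c_k'$ in $\tau$ otherwise; the induction grows the given chain until its final thread coincides with $t_1$.

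In the base case $v_{k+1} = t_1$, I would read the chain as a type-2 deadlock pattern of rank $r = k$. Conditions (a)--(d) of the pattern are inherited verbatim from the corresponding chain properties. Condition (e) is vacuous since no index exceeds $k$. For condition (f), one argues that whenever $\mtc(d_i)$ is present in $\tau$ we must have $e_1 \tr \mtc(d_i)$: otherwise, combining $g \tr c_i \hb d_i$ with Lemma~\ref{lem:across} would produce a $\lwcp$-edge from $g$ (and hence from $e_1$) into a later event, contradicting the choice of $(e_1, e_2)$ as the first WCP-race. Condition (g) is immediate from $g \tr c_k$ (chain property) and $c_k' \tr d_k \tr e_1 \in (g, g')$, which places $(c_k, c_k')$ inside $(g, g')$.

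For the inductive case $v_{k+1} \neq t_1$, let $u$ be the first event of $v_{k+1}$ after $g$ and attempt to move the segment $[u, c_k']$ of $v_{k+1}$ to immediately before $g$. A successful reordering would contradict extremality (by pulling events between $e_1$ and $e_2$ back before $g$ and so decreasing their distance) or minimality of $\tau$. Hence the move must violate correctness. Lemma~\ref{lem:rw} rules out any RW violation, since any such conflict would yield $e_1 \lwcp e_2$. An LS1 violation would produce a conflicting critical section inside $[u, c_k']$ whose earliest occurrence would contradict the earliest-conflicting property that selected $d_k$ (chain property (c)), i.e.\ a shorter and smaller chain could have been chosen. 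The remaining option, LS2, delivers an acquire $d_{k+1} \in [u, c_k']$ whose matching release lies strictly after $c_k'$, together with a critical section $(h, h')$ in another thread over the same lock with $g \tr h \toe h' \tr d_{k+1}$. By well-nestedness $\cs(d_{k+1})$ contains $(c_k, c_k')$, so picking $d_{k+1}$ as the \emph{earliest} such acquire and setting $(c_{k+1}, c_{k+1}') = (h, h')$ in thread $v_{k+2}$ legitimately extends the chain.

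If $v_{k+2} \notin \set{v_1, \ldots, v_{k+1}}$, the distance strictly decreases and the induction hypothesis delivers a type-2 deadlock pattern. If instead $v_{k+2} = v_i$ for some $i \le k+1$, I would perform an inner move of the suffix of $v_i$ starting just after $g$ up to $\mtc(d_{k+1})$; exactly as in Lemma~\ref{lem:type1}, the only surviving case is an LS1 violation that contradicts the earliest-conflicting property used when $(c_i, c_i')$ was originally chosen. The main obstacle will be precisely this last sub-case together with the rank bookkeeping in condition (f): one must verify that $d_{k+1}$, whenever counted towards the rank, also satisfies $e_1 \tr \mtc(d_{k+1})$, and that the quantified property $Q(i,j)$ is preserved as acquires cross the rank boundary during the induction. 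Managing these coordination constraints is technically the most delicate part and the analogue of the corresponding intricate bookkeeping in the proof of Lemma~\ref{lem:type1}.
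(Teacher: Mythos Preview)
Your overall inductive scheme matches the paper's, but the treatment of the LS1 case in the inductive step is where the argument breaks down, and this is precisely where type-2 chains differ essentially from type-1. You claim an LS1 violation ``would contradict the earliest-conflicting property that selected $d_k$ (chain property (c))''. But an LS1 violation produces a critical section held \emph{across} $g$, i.e.\ a conflict in the sense of Definition~\ref{def:conflict1}, whereas property~(c) of a type-2 chain asserts that $d_k$ is the earliest conflicting acquire in the sense of Definition~\ref{def:conflict2}, witnessed by a critical section lying \emph{entirely after} $g$. These are unrelated minimality conditions, so no contradiction follows. The paper instead argues that LS1 is outright impossible in the type-2 setting: if the lock held at $g$ were in $t_1$, well-nestedness with $(g,g')$ (using $e_1\in(g,g')$) forces its release past $e_1$, which is incompatible with the moved acquire lying before $e_1$; if it is in another thread, Lemma~\ref{lem:across} applied to $g$ and that cross-$g$ critical section yields $g\lwcp f'$, and Lemma~\ref{lem:rw} then gives $e_1\lwcp e_2$. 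Thus in type-2 it is LS2 alone that extends the chain.

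A second gap is the $v_{k+2}=v_i$ sub-case. The segment you propose to move ends at $\mtc(d_{k+1})$, which lives in $v_{k+1}$, not in $v_i$; the paper moves $[a,c_{k+1}']$ inside $v_i$. More importantly, you assert that only an LS1 violation survives and immediately contradicts an earlier minimal choice. In the paper the surviving case is again LS2, producing a further acquire $d_{k+2}$; one must then split on whether $\mtc(d_{k+2})$ exists and where it lies relative to $c_{i-1}'$, and in one branch perform a \emph{third} move (of $[a,d_{k+2}']$) before the earliest-choice contradiction appears. Finally, your appeal to Lemma~\ref{lem:across} for property~(f) does not type-check: that lemma needs a critical section $(a,r)$ with $a\tr g\tr r\tr e_1$, but here $g\tr d_i$, so its hypothesis is not met.
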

\begin{proof}
Define size of a deadlock chain of type-2 as the distance between $g$ and $c_k$
if $v_{k+1} \neq t_1$ and 0 otherwise. We perform induction on the size of the chain.

Base case: When size = 0 we get $v_{k+1} = t_1$ and we get the chain to be a
pattern of order $k$. Property (f) of the pattern comes for free since rank $=
k$ and the remaining properties follow from the properties of the chain.

Inductive case: If size $> 0$ then $v_{k+1} \neq t_1$. Let $u$ be the first
event in $v_{k+1}$ after $g$. Consider a move of $[u,c_k']$ to just before
$g$. If move is successful it violates minimality, hence the move is invalid
As before the move cannot commit a RW violation. Therefore the move is either
a LS1 or LS2 violation. First, we show why a LS1 violation is not possible. If
it does then there exists a critical section $(c_{k+1},c_{k+1}')$ across $g$
and a moved acquired, say $a$, over the same lock in thread $v_{k+1}$. The
critical section $(c_{k+1},c_{k+1}')$ cannot be in $t_1$ because by well
nestedness we then get $g' \tho c_{k+1}'$ and therefore $e_1 \tho c_{k+1}'$,
which makes it impossible for $a$ to exist as it should be between $g$ and
$e_1$. If $(c_{k+1},c_{k+1}')$ is performed by $v_{k+1}\neq t_1$ then we get
that $c_{k+1} \tr g \tr c_{k+1}' \tr e_1$. Now we apply Lemma~\ref{lem:across}
on $g$ and $(c_{k+1},c_{k+1}')$ to get $g \lwcp c_{k+1}'$, that combined with
$c_{k+1}' \hb a \tho c_k' \hb f'$ gives us $g \lwcp f'$ and applying
Lemma~\ref{lem:rw} gives us $e_1 \lwcp e_2$, a contradiction. So we are left
with LS2 violation as the only possibility in which case there is a moved
acquire $d_{k+1}$ which is not released before $c_k'$ and a critical section
$(c_{k+1},c_{k+1}')$ over the same lock as $d_{k+1}$ such that $g \tr c_{k+1}
\tho c_{k+1}' \hb d_{k+1}$. Among all such $d_{k+1}$ we choose the earliest.
Now if $v_{k+2}$ so obtained is $t_1$ we are done because size of the extended
chain becomes a zero yielding the base case. If $v_{k+2} \neq t_1,v_1,v_2
,\dots,v_k$ then we would have extended the chain while have reduced the size
of the chain since $c_{k+1}$ is earlier than $c_k$ in $\tau$. Consider what
happens when $v_{k+2} = v_i$ for some $i \le k$. Let $a$ be the first event in
$v_i$ after $g$. We attempt to move $[a,c_{k+1}']$ to right before $g$. The
move cannot be successful because it would violate minimality of $g$. Now
using reasoning similar to above we get that RW and LS1 violations are not
possible, leaving us with LS2 violation. So there exists an acquire $d_{k+2}$
such that $g \tr d_{k+2}$ and $(c_{k+1},c_{k+1}') \in \cs(d_{k+2})$. Choose
$d_{k+2}$ to be the earliest such acquire. If $d_{k+2}' = \mtc(d_{k+2})$ is
not in $\tau$ then $\cs(d_{k+2})$ contains $(c_{i-1},c_{i-1}')$ violating the
optimal choicef $d_i$ in thread $v_i$. If $d_{k+2}' \in \tau$ and later than
$c_{i-1}'$ and the result is similar. Consider the case when $d_{k+2}'$ is
earlier than $c_{i-1}'$ in which case $d_{k+2}' \tho c_{i-1}$ by well
nestedness. THen consider moving segment $[a,d_{k+2}']$ to just before $g$.
Again the move has to be unsuccessful as it violates minimality, and using
reasoning exactly the same as above we can eliminate violations due to RW/LS1.
Now in this case LS2 cannot happen either because if it does we get an acquire say $a'$ held at $d_{k+2}'$ (i.e., $d_{k+2}' \in \cs(a')$ and $a' \tho d_{k+2}$)
such that there is a critical section $(b,b')$ (in some thread $\neq v_i$)
over the same lock as $d_{k+2}$ and $g \tr b$. But this $a'$ violates
the chocie of $d_{k+2}'$ being chosen as the earliest. So we conclude
that $v_{k+2} \notin {v_1,\dots,v_{k+2}}$ and the inductive step goes through.
\end{proof}

Next we look at how deadlock patterns can be used to show predictable
deadlocks.

\begin{lem}\label{lem:pred1}
If there exists a type-1 deadlock pattern then of any rank there exists a predictable deadlock
\end{lem}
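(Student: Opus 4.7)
The plan is to construct a correct reordering $\beta$ of $\tau$ that exhibits a deadlock among the threads $v_1, \ldots, v_{k+1}$ of the pattern, proceeding by induction on the rank $r$, with base case $r = 0$ and an inductive step that upgrades a rank-$r$ pattern to a rank-$(r-1)$ one.

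For the base case, the target deadlock configuration is: for each $1 \le i \le k$, thread $v_i$ is paused just before $c_i$ while holding its enclosing lock (namely $\text{lock}(f) = \ell$ when $i = 1$, and $\text{lock}(d_{i-1})$ otherwise), and $t_1 = v_{k+1}$ is paused at $g$ while holding $\text{lock}(d_k)$. Since $\text{lock}(c_i) = \text{lock}(d_i)$ is held by $v_{i+1}$ for $i < k$, $\text{lock}(c_k) = \text{lock}(d_k)$ is held by $t_1$, and $\ell$ is held by $v_1$, the threads form a waiting cycle and are deadlocked. I would build $\beta$ by taking, for each participating thread, its prefix in $\tau$ up to (but not including) its pause event, concatenated in the order inherited from $\tau$, while events of other threads are included no later than $g$. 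Verifying that $\beta$ is a correct reordering of $\sigma$ then splits into three checks: thread order is immediate by construction; read--write consistency follows from property (f) of the pattern, which forbids conflicting critical sections between $g$ and each $c_i$ and therefore guarantees that every retained $\mathtt{r}(x)$ in $\beta$ sees the same last $\mathtt{w}(x)$ as in $\tau$; and lock semantics follow from the rank-$0$ property $P(i, j)$, which supplies the HB orderings required to place the acquires consistently.

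For the inductive step ($r > 0$), I would attempt to establish $P(r, j)$ for every $j$, which would upgrade the given pattern to a rank-$(r-1)$ pattern and let the IH finish the argument. If instead $P(r, j)$ fails, witnessed by acquires $h_1$ (with $c_r \in \cs(h_1)$) and $h_2$ (on the same lock as $h_1$, with $h_2 \tho c_j$, but with no HB edge $h_2 \hb h_1$), I would try to move the segment of $h_2$'s thread starting at $h_2$ (through an appropriate later endpoint) to just before $g$, in the spirit of Lemma~\ref{lem:type1}. Such a move would improve the minimality/extremality of $\tau$, hence must fail; Lemma~\ref{lem:rw} rules out RW violations, leaving a lock-semantics failure from which I would either extract the missing HB edge $h_2 \hb h_1$ or produce a chain extension forbidden by the pattern's other properties.

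The main obstacle will be the lock-semantics check in the base case: because $\beta$ interleaves prefixes of many threads, one must argue that acquires on locks outside $\{\ell, \text{lock}(d_1), \ldots, \text{lock}(d_k)\}$ can also be placed consistently, using $P(i, j)$ to supply the HB edges that order them. A closely related subtlety is ensuring that $g$ actually lies inside the critical section $(d_k, d_k')$, so that $t_1$ still holds $\text{lock}(d_k)$ at its pause point and the deadlock cycle is truly closed; this should follow from well-nestedness together with how $d_k$ was produced during the chain-to-pattern conversion of Lemma~\ref{lem:type1}, and will likely require a short side lemma relating $\text{lock}(d_k)$ to $\ell$.
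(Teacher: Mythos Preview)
Your base-case construction matches the paper's: drop from each $v_i$ everything from $c_i$ on, from $t_1$ everything from $g$ on, and from all other threads everything after $g$. One point you are missing is how the cycle closes at $t_1$: recall that $g=\mtc(g')$ where $g'$ is a $\rel{\l}$ event, so $g$ itself is $\acq{\l}$. Thus $t_1$'s next event in $\tau'$ requests $\ell$, which $v_1$ holds (since $v_1$ is paused inside $\cs(f)$ and $f=\acq{\l}$). No side lemma relating $\text{lock}(d_k)$ to $\ell$ is needed; these are just two consecutive edges of the cycle. Separately, your appeal to property~(f) for read--write consistency is wrong: property~(f) speaks of \emph{conflicting critical sections} in the sense of Definition~\ref{def:conflict1} and belongs to the lock-semantics argument (it is what forces the offending $a_1$ into one of the $v_j$ rather than an outside thread). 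The RW check goes instead through Lemmas~\ref{lem:conflict} and~\ref{lem:rw}: any violation involves conflicting $e,e'$ both after $g$ with $e'\hb f'$ when $f'\in\tau$, whence $e_1\lwcp e_2$, a contradiction.

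The real gap is your inductive step. You want to establish $P(r,j)$ for all $j$ directly, and on failure move a segment of $h_2$'s thread before $g$, arguing the move must fail by extremality/minimality. But such a move does not in general decrease the $e_1$--$e_2$ distance nor bring the resulting $f$ closer to $e_1$: the thread $v_j$ containing $h_2$ need be neither $t_1$ nor the thread of $f$, so shifting some of its events earlier changes none of the quantities that define extremality or minimality of $\tau$. With no reason the move must fail, you have no way to extract $h_2\hb h_1$, and the step collapses.

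The paper's inductive step is different in kind. One attempts the \emph{same} drop construction as in the base case. If it yields a valid trace, done. Otherwise the lock-semantics violation hands you a critical section $(a_1,r_1)$ in some $v_j$ (with $c_j\in\cs(a_1)$) and an undropped acquire $a_2\tho c_i$ in some $v_i$, on the same lock, with $a_1\hb a_2$. Since $P(j,i)$ would give $a_2\hb a_1$, property~(e) forces $j\le r$, and one argues $i<j$. Now perform \emph{surgery on the pattern}: keep only the threads $v_1,\dots,v_i,v_j,\dots,v_{k+1}$, replacing $(c_i,c_i')$ by $(a_2,\mtc(a_2))$ and $(d_{j-1},d_{j-1}')$ by $(a_1,r_1)$. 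One checks this is again a type-1 deadlock pattern, now of rank $i<r$, and the induction hypothesis applies. The idea you are missing is that failure of the construction is not a contradiction---it \emph{manufactures} a lower-rank pattern.
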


\begin{proof} We perform induction on the rank $r$.

Base case: rank $r = 0$. Consider the execution $\tau'$ obtained by dropping
the following events from $\tau$: (i) events $e$ s.t $c_i \toe e$ (ii) events
$e$ s.t $g\toe e$ (iii) events $e$ in threads $\notin \set{v_1,\dots,v_{k+1}}$
s.t $g \tr e$. The claim is $\tau'$ is a correct reordering of $\tau$. $\tau'$
reveals a deadlock.  Proof of claim: $\tau'$ does not violate thread order as
we are only dropping suffixes of threads. The read events that possibly don't
see the same writes is because the write events have been dropped, implying
both the read event $e'$ and the write $e$ are  later than $g$, we then
observe that if $f' \in \tau$ then $e' \hb f'$ and  apply Lemma~\ref{lem:rw}
to get a $e_1 \lwcp e_2$, a  contradiction. But $\tau'$ may not valid trace
due to lock semantic violation. So there exists a critical section $(a_1,r_1)$
and acquire $a_2$ of the same lock such that $r_1 \hb a_2$, $r_1$ is dropped
and $a_1,a_2$ are undropped. Since $r_1$ is dropped we get $g \tr r_1$ and so
$g \tr a_2$ which implies $a_2$ is performed by some $v_i$.  Suppose $a_1$ is
not peformed by some $v_j$ then since $a_1$ is undropped we know $a_1 \tr g$,
but this implies $(a_2,\mtc(a_2))$ conflicts with $(a_1,r_1)$
contradicting the choice of $(c_i,c_i')$ as the earliest conflicting critical
section (Note that $\mtc(a_2)$ exists in $\tau$ because $a_2
\in (d_{i-1},d_{i-1}')$ and well-nestedness). So we get $a_2$ is performed by some thread $v_j$.
 Consider any one such pair $a_1,a_2$. Now applying property (e) with
$h_1 = a_1$ and $h_2 = a_2$ we obtain $a_2 \hb a_1$, a contradiction since we
began with $a_1 \hb a_2$ and $a_1 \neq a_2$. We are able to apply property (e)
to $a_1,a_2$ as rank is 0. This implies lock semantics is not violated and
hence $\tau'$ is valid.

Inductive case: Obtain $\tau'$ from $\tau$ as in the base case and follow the
same argument until the application of property (e). Consider among all
possible $a_1,a_2$ those where $a_1$ is in $v_j$ for the largest $j$ and among
those pick pairs where $a_1$ is in thread $v_j$ for the largest $j$, and among
those pick pairs where $a_2$ is in thread $v_i$ for the smallest $i$ and among
those pick the one with the earliest $a_2$ in $v_i$. First we observe $j \le
r$ which can be derived from property (e) and the fact that $a_1 \hb a_2$. The
observation $a_1 \hb a_2$ can also be used to deduce $i < j$. Next we
construct a new deadlock pattern that uses the threads
$v_1,\dots,v_i,v_j,\dots,v_{k+1}$, i.e. we drop threads between $v_i$ and
$v_j$ and use crtical section $(a_2,\mtc(a_2))$ in place $(c_i,c_i')$ and
critical section $(a_1,\mtc(a_1))$ in place of $(d_{j-1},d_{j-1}')$. This
surgery will result in a deadlock pattern of rank $i$ but $i < j \le r$, hence
the rank is reduced and the inductive hypothesis takes care of the rest.
\qedhere \end{proof}

\begin{lem}\label{lem:pred2}
If there exists a type-2 deadlock pattern then of any rank there exists a predictable deadlock
\end{lem}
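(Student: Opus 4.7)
The plan is to mirror the proof of Lemma~\ref{lem:pred1} by induction on the rank $r$ of the type-2 pattern. In both the base case and the inductive step I would construct a candidate correct reordering $\tau'$ by cutting each thread in the deadlock set $\{v_1,\ldots,v_{k+1}\}$ just before the acquire on which it will block, and truncating every thread outside the set to its events preceding $g$. Concretely, drop from $\tau$: (i) in each $v_i$ with $2\le i\le k$, every $e$ with $c_{i-1}\toe e$; (ii) in $v_1$, every $e$ with $f\toe e$; (iii) in $v_{k+1}=t_1$, every $e$ with $c_k\toe e$; and (iv) in every thread outside the pattern, every $e$ with $g\tr e$. The deadlock at the end of $\tau'$ is the cyclic wait: each $v_i$ with $i\ge 2$ holds the lock of $d_i$ (kept, with $\mtc(d_i)$ dropped by well-nestedness) and is poised on $c_{i-1}$ over the lock of $d_{i-1}$ held by $v_{i-1}$; $v_1$ holds the lock of $d_1$ and is poised on $f$ over lock $\l$; and because $(c_k,c_k')\in(g,g')$, thread $t_1$ has executed $g$ but not $g'$, so it holds $\l$ and is poised on $c_k$ over the lock of $d_k$ held by $v_k$.

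For $\tau'$ to be a correct reordering, thread order is automatic since only suffixes are dropped. For read-value preservation, a kept read that sees a different last write would yield a dropped conflicting write $e'$ with $g\tre e'$; Lemma~\ref{lem:conflict} places the pair in $\lwcp$-order and Lemma~\ref{lem:rw} then gives $e_1\lwcp e_2$, contradicting the minimality of $(e_1,e_2)$. A lock-semantic violation forces an acquire $a_1$ kept with $r_1=\mtc(a_1)$ dropped, and a same-lock acquire $a_2$ kept with $r_1\hb a_2$; as in type-1, $a_2$ cannot lie outside the pattern threads (else the minimal choice of $d_i$ in the chain is contradicted), so $a_1\in v_j$ with $c_{j-1}\in\cs(a_1)$ (using the convention $c_0:=f$) and $a_2\in v_m$ with $a_2\tho c_{m-1}$ (or $a_2\tho c_k$ when $m=k+1$). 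In the base case $r=0$, property (e) via $Q(j,m-1)$ gives $a_2\hb a_1$, contradicting $a_1\hb a_2$, so no violation can arise. In the inductive case $r\ge 1$, among violating pairs pick $(a_1,a_2)$ with $j$ maximal, then $m$ minimal, then $a_2$ earliest in $v_m$; property (e) forces $j\le r$ and $a_1\hb a_2$ forces $m<j$. Splice the pattern by retaining threads $v_1,\ldots,v_m,v_j,\ldots,v_{k+1}$, replacing $(c_m,c_m')$ by $(a_2,\mtc(a_2))$ and $d_j$ by $a_1$; property (f) is inherited for the retained $d_i$ with $i\le m$, producing a type-2 pattern of rank $m<r$, whence the induction hypothesis applies.

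The step I expect to be the main obstacle is verifying that every clause of the type-2 pattern definition survives the splicing. Clause (d), $g\tr c_i\hb d_i\tr e_1$, must be re-derived for the new $(c_m,c_m')=(a_2,\mtc(a_2))$ from $r_1\hb a_2$ and the cutoff structure of $\tau'$; clause (c) must simultaneously place $(a_2,\mtc(a_2))$ inside $\cs(d_m)$ on one side and inside $\cs(a_1)$ on the other, both via well-nestedness but only after careful case analysis of where $a_2$ and $a_1$ live relative to the existing critical sections. A secondary subtlety is the uniform convention $c_0:=f$ required to apply property (e) at $i=1$; this parallels the $(d_0,d_0')=\cs(f)$ convention in the type-1 proof, but the type-2 setting is more asymmetric because $f$ is the inner acquire of $\cs(d_1)$ rather than a full bracketing critical section, and one must separately verify $Q(1,\cdot)$ in the base case to rule out the edge case where $a_1\in v_1$.
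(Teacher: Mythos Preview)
Your overall strategy matches the paper's: induct on the rank, build $\tau'$ by truncating each pattern thread at its blocking acquire and every other thread at $g$, rule out RW violations via Lemmas~\ref{lem:conflict} and~\ref{lem:rw}, and in the inductive step extract a violating pair $(a_1,a_2)$ and splice to a lower-rank pattern. The deadlock cycle you describe and your base-case application of $Q(j,m-1)$ are correct.

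There is, however, a concrete threading error in your inductive splice. In a type-2 pattern the $d_i$'s live in $v_i$ while the $c_i$'s live in $v_{i+1}$ (this is the opposite of type-1). Your violating pair has $a_1\in v_j$ and $a_2\in v_m$; after short-circuiting to $v_1,\dots,v_m,v_j,\dots,v_{k+1}$, the new $c_m$ must lie in the new $(m{+}1)$-st thread, namely $v_j$. But you set $c_m^{\mathrm{new}}=(a_2,\mtc(a_2))$, which lives in $v_m$, and then ask for it to be contained in $\cs(a_1)\subseteq v_j$---impossible across threads. You have carried over the type-1 replacement without adjusting for the swapped roles of $c$ and $d$. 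The paper instead replaces $d_m$ by $a_2$ and $d_j$ by $a_1$ (both thread-consistent), and the new $c_m$ is then the critical section $(a_1,r_1)$ in $v_j$: it is over the same lock as $a_2$, satisfies $g\tr a_1\hb a_2\tr e_1$, and sits inside $\cs(a_1)$ trivially. To make this work you also need $c_{m-1}\in\cs(a_2)$ and $c_{j-1}\in\cs(a_1)$; the paper obtains these by arguing (via the ``move $[u,r_1]$ above $g$'' trick) that $d_j\tho a_1\tho c_{j-1}$ and that the analogous bound holds for $a_2$ relative to $d_m$, using the extremal choice of the $d_i$'s.

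A second, smaller omission: before concluding that $a_1$ lies in some $v_j$, you must dispose of the case $a_1\tr g$. The paper does this by invoking Lemma~\ref{lem:across} (on $g$ and $(a_1,r_1)$) together with Lemma~\ref{lem:rw} to derive $e_1\lwcp e_2$, a contradiction; your write-up skips this step.
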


\begin{proof}
As for type-1 deadlock patterns we perform induction on the rank.

Base case: rank $r = 0$. Consider once again the execution $\tau'$ obtained by
dropping: (i) events $e$ s.t $c_i \toe e$ (ii) events $e$ s.t $f \toe e$.
(iii) events $e$ in threads $\notin
\set{v_1,\dots,v_{k+1}}$ s.t $g \tr e$. The claim is $\tau'$ is a correct
reordering of $\tau$, in which case the ordering reveals a deadlock. Proof of
claim: $\tau'$ can only be invalid trace due to lock semantic violation (same
argument as type-1). Once again we get a critical section $(a_1,r_1)$ and
acquire $a_2$ on the same lock such that $r \hb a_2$, $r_1$ is dropped and
$a_1,a_2$ are undropped. Since $r_1$ is dropped we know $g \tr r_1$ and so $g
\tr a_2$ which implies $a_2$ is performed by some $v_i$. If $a_1 \tr g$ then
using Lemma~\ref{lem:across} and Lemma~\ref{lem:rw} we get a contradiction. So
$g \tr a_1$ and hence $a_1$ is performed by some $v_j$. In the base case rank
being 0 we can apply property (e) with $h_1 = a_1$ and $h_2 = a_2$ to get a
contradiction $a_2 \hb a_1$. This implies lock semantics is also not violated
and hence $\tau'$ is valid.

Inductive case: Follow the same steps as in the base case until obtaining
$a_1,a_2$. Then observe that $a_1$ has to be performed after $d_j$ and before
$c_{j-1}$, otherwise moving the segment of that thread after $g$ until $r_1$
to above $g$ leads to conclusion that $d_j$ was not chosen as the earliest a
contradiction. $a_1$ should be after $d_i$ otherwise $d_i$ being earliest
critical section is violated due to presence of $(a_1,r_1)$. Now we short
circuit the deadlock pattern by considering threads
$v_1,\dots,v_i,v_j,\dots,v_{k+1}$, replace $d_i$ with $a_2$ and $d_j$ with
$a_1$. This will result in a deadlock pattern of rank $i (< j \le r)$ and then
the inductive hypothesis can be applied.
\end{proof}

Using Lemma~\ref{lem:main} we get that $\tau$ contains a deadlock chain. Using
Lemmas~\ref{lem:type1} and \ref{lem:type2} we show presence of deadlock
patterns, and finally using Lemmas~\ref{lem:pred1} and \ref{lem:pred2} we get
the required predictable deadlock, thus proving Theorem~\ref{thm:sound}.

\section{Errors in CP soundness proof}\label{sec:cperrors}

For the reader familiar with CP soundness proof \cite{cp2012} we point out
some of the errors in it.

\begin{itemize}
\item In case 1 (a) of the proof of the main theorem when $f\dots f'$
is being moved to $g$ and is inspected for a lock semantic violation
they argue \emph{``If $m$ is held at a point $g$ by a thread $t_3$ other than $t_1$ then it has to be release before $f$''}. This is incorrect
because the lock $m$ can be released by $t_3$ after $f$ and before the acquire of $m$
present in the critical section $f\dots f'$.
\item In case 1 (b) of the proof of the main theorem when $u\dots f'$
is being moved to $a_2$ (where $u$ is the first event after $a_2$ in thread
containing $f$) and lock semantic violation is explored they say
\emph{``If such an $m$ is held by a thread $t_3$, other than $t_1$ at point $a_2$,
then it has to be released before $e_1$''}. This is not true because
the $m$ can also be released after $e_1$ in which case the release
will have to be after $f$ which is not a impossibility. Further more
this is not the only way in which lock semantics can be violated.
As we have shown in our proof there are two ways of lock semantic violation
when moving up a segment of a thread: LS1 and LS2. The above argument
only explores LS1 violation (that too partially). LS2 violation
is completely ignored in their proof.
\item In case 2 of the proof of the main theorem: after dropping
the events to obtain the trace $tr'$, when they argue lock semantics violation
they say \emph{``The acquisition of $m$ has to be in thread $t_1$ (otherwise $f$ would
not be the first event ...)''}. This is wrong because the acquisition
of $m$ can be in another thread, the only thing is that the it's corresponding
release that is dropped has to be after $f$. Nothing prevents it from
being after $f$.
\end{itemize}

\section{Correctness of algorithm}\label{app:algo}
 Given a trace $\sig$, assume
that we have updated the vector clocks as described in
Algorithm~\ref{algo:update} and times $C_e,H_e,P_e,N_e$
have been assigned to event $e$ as described in Section~\ref{sec:algo}.

\begin{lem}\label{lem:lastrelease}
At any point:
$\H_{\ell}=H_r,\P_\ell=P_r$ where $r$ is the last $\rel{\ell}$ event.
$\L_{\ell,x}^r = \mx_{r} H_r$ is where $r$ ranges over all $\rel{\ell}$
event such that $\cs(r)$ contains a \R{x} event. 
$\L_{\ell,x}^w = \mx_{r} H_r$ is where $r$ ranges over all $\rel{\ell}$
event such that $\cs(r)$ contains a \W{x} event. 
\end{lem}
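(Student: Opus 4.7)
The plan is to prove this by induction on the length of the prefix of $\sig$ processed by Algorithm~\ref{algo:update}. The invariants in question concern four vector-clock-valued state components, all of which are only written in the \texttt{release} procedure (lines 7, 8, and 9), so the inductive step will reduce to a case analysis that is trivial for \texttt{acquire}, \texttt{read}, and \texttt{write} events, and requires a short verification for \texttt{release} events.

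For the base case, the initialization sets $\H_\ell, \P_\ell, \L^r_{\ell,x}, \L^w_{\ell,x}$ all to $\bot$. Since no release events have been processed, the sets over which the right-hand side joins range are empty, and the empty join is $\bot$, so the invariants hold. For the inductive step, suppose the invariants hold after processing the first $k$ events; I need to show they continue to hold after the $(k+1)$-st event $e$. If $e$ is an $\texttt{acquire}(t,\ell')$, $\texttt{read}(t,x',L)$, or $\texttt{write}(t,x',L)$ event, then the algorithm only modifies $\H_t$, $\P_t$, or the $Acq$/$Rel$ queues; none of $\H_\ell, \P_\ell, \L^r_{\ell,x}, \L^w_{\ell,x}$ change, and neither does the set of release events (or their critical-section contents) seen so far, so the invariants are preserved.

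The interesting case is $e = \texttt{release}(t, \ell', R, W)$. For $\ell' \neq \ell$ none of the four clocks for lock $\ell$ are touched, and $e$ is not a release on $\ell$, so both sides of each invariant are unchanged. For $\ell' = \ell$, line 9 sets $\H_\ell := \H_t$ and $\P_\ell := \P_t$. Here I will invoke the companion invariants (maintained elsewhere in the algorithm and needed in the global correctness argument, Theorem~\ref{thm:algorithm}) that immediately after processing any event $e'$ of thread $t$, $\H_t = H_{e'}$ and $\P_t = P_{e'}$. Applied to $e' = e$ (the release under consideration, which is now the last $\rel{\ell}$ event), this gives $\H_\ell = H_e$ and $\P_\ell = P_e$, as required. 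For lines 7 and 8, the parameters $R$ and $W$ are exactly the sets of variables read and written in $\cs(e)$, so for $x \in R$ the update $\L^r_{\ell,x} := \L^r_{\ell,x} \sqcup \H_t$ extends the previous join by precisely the term $H_e$ corresponding to the newly completed release $e$, matching the new right-hand side; for $x \notin R$ the left-hand side is unchanged and $e$ contributes nothing to the right-hand side, so the invariant is preserved. The argument for $\L^w_{\ell,x}$ at line 8 is identical.

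The only real obstacle is the reliance on the auxiliary invariants $\H_t = H_{e'}$ and $\P_t = P_{e'}$ after processing any event $e'$ of thread $t$; these must be established in tandem, and their proofs depend on a symmetric structural induction tracking how the \texttt{acquire} and \texttt{release} procedures update $\H_t$ and $\P_t$ using $\H_\ell$ and $\P_\ell$. In practice this means Lemma~\ref{lem:lastrelease} should be proved jointly with those invariants as a single mutual induction on trace prefixes, each case feeding the other: the $\ell$-indexed clocks provide the correct ``communication channel'' for thread-local $\H_t, \P_t$ updates at acquires, and the thread-local clocks in turn provide the correct values to be stored in $\H_\ell, \P_\ell$ at releases.
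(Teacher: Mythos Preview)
Your induction on trace prefixes is correct and matches the paper's one-sentence argument in spirit: the four clocks are only written in the \texttt{release} handler (lines 7, 8, 9), so the invariants reduce to checking those lines.

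Where you diverge is in treating ``$\H_t = H_{e'}$ and $\P_t = P_{e'}$ after processing $e'$'' as a nontrivial companion invariant requiring mutual induction. In the paper's Appendix~C, the symbols $H_e$, $P_e$, $C_e$ are used \emph{operationally}: $H_e$ is by definition the value held in $\H_{t(e)}$ immediately after $e$ is processed (see the paragraph ``HB clocks $\H_t,\H_\ell$''), and likewise for $P_e$ and $C_e$. The semantic characterizations $H_e = \bigsqcup\{C_{e'} \mid e' \hb e\}$ and $P_e = \bigsqcup\{C_{e'} \mid e' \lwcp e\}$ stated in the main text are the \emph{intended} meanings, and the correspondence between the operational values and these intended meanings is exactly what the remaining lemmas (\ref{lem:hbimpliesclock}, \ref{lem:hbp}, \ref{lem:clockimplieswcp}, \ref{lem:wcpimpliesclock}) establish piecewise. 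With the operational reading, your ``companion invariants'' are tautologies, and Lemma~\ref{lem:lastrelease} follows immediately: at line~9, $\H_t$ and $\P_t$ already hold their post-processing values (the release handler never touches $\H_t$, and $\P_t$ is fully updated by line~6), so the assignments $\H_\ell := \H_t$ and $\P_\ell := \P_t$ literally store $H_r$ and $P_r$. Lines~7--8 are handled exactly as you describe.

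So your proof is not wrong, but the mutual-induction machinery you propose is unnecessary for this lemma. If you insist on the semantic definitions of $H_e,P_e$ throughout, then yes, you would be folding Lemmas~\ref{lem:lastrelease}, \ref{lem:hbimpliesclock}, and parts of \ref{lem:clockimplieswcp}/\ref{lem:wcpimpliesclock} into one large simultaneous induction---which works, but obscures the simple observation the paper is making here.
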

\begin{proof}
Can be observed by the fact that these clocks are only updated during
a release. And the respective invariants can be easily checked from Lines 9,7,8.
\end{proof}

\begin{lem}\label{lem:threadmonotone} 
If $a \toe b$ then $C_a \cle C_b$, $P_a \cle P_b$, $H_a \cle H_b$
\end{lem}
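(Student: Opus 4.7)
}

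The plan is to reduce the statement to the case where $a$ and $b$ are immediate successors in the same thread and then argue that every relevant clock update is a join (or an increment), so values cannot decrease. Formally, I would first note that $\toe$ is the reflexive closure of $\tos$, so the case $a = b$ is trivial, and by transitivity of $\cle$ it suffices to prove the inequalities when $b$ is the very next event of thread $t = t(a) = t(b)$ after $a$ in $\sig$.

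Next, I would establish a locality observation: between the moment $a$ is processed and the moment $b$ is processed, all intervening events belong to threads $t' \ne t$. By a straightforward case analysis on Algorithm~\ref{algo:update}, none of the four procedures \texttt{acquire}, \texttt{release}, \texttt{read}, \texttt{write}, when invoked with first argument $t' \ne t$, ever writes to the clocks $\P_t$, $\H_t$, or to the counter $\cN_t$. The only state associated with thread $t$ that other threads' events can modify is the queues $Acq_\ell(t)$ and $Rel_\ell(t)$, which are irrelevant to the three quantities we care about. So at the moment just before $b$ is processed, $\P_t$, $\H_t$, and $\cN_t$ still hold exactly the values they had immediately after $a$ was processed, namely $P_a$, $H_a$, $\cN_a\text{-value} = N_a$.

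Now I would examine what happens while $b$ is being processed. The possible increment of $\cN_t$ right before $b$ can only make $\cN_t$ larger or leave it unchanged, so $N_a \le N_b$. Each assignment to $\P_t$ inside the procedures for $t$ has the form $\P_t := \P_t \mx X$ for some vector time $X$ (Lines 2, 6, 11, 12), and each assignment to $\H_t$ has the form $\H_t := \H_t \mx \H_\ell$ (Line 1). Since $V \cle V \mx X$ holds pointwise for any vector times $V, X$, after these updates we have $P_a \cle P_b$ and $H_a \cle H_b$. Finally, using the identity $\C_t = \P_t[t := \cN_t]$ stated in Section~\ref{sec:algo}, we get $C_b(u) = P_b(u) \ge P_a(u) = C_a(u)$ for $u \ne t$, and $C_b(t) = N_b \ge N_a = C_a(t)$, which together give $C_a \cle C_b$.

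The only place that requires real care is the locality observation — one has to enumerate every line of Algorithm~\ref{algo:update} and confirm that no event of another thread touches $\P_t$, $\H_t$, or $\cN_t$. In particular, one must remember that the convention $\H_t(t) = \cN_t$ and $\C_t(t) = \cN_t$ is a notational identification with the local counter rather than an independent write, so the only ``external'' effect on thread $t$'s state comes from enqueueing into $Acq_\ell(t)$ and $Rel_\ell(t)$, which play no role here. Once that bookkeeping is in hand, the rest is immediate from the fact that $\mx$ is monotone and $\cN_t$ is never decremented.
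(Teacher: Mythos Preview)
Your proposal is correct and follows essentially the same approach as the paper: both arguments rest on the observation that every update to $\P_t$ and $\H_t$ is a join with the previous value (Lines 1, 2, 6, 11, 12) and that $\cN_t$ is only incremented. The paper states this in a single sentence, whereas you spell out the reduction to immediate successors, the locality observation that other threads' events do not write to $\P_t$, $\H_t$, $\cN_t$, and the derivation of $C_a \cle C_b$ from $P_a \cle P_b$ and $N_a \le N_b$; this is a more careful elaboration of the same idea rather than a different route.
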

\begin{proof}
Follows from the fact that each time clocks $\P_t,\H_t$ are 
updated (Lines 1,2,6,11,12) they are assigned clocks which take maximum of their previous value with some other clock. Also the counter $N_t$ is only
incremented.
\end{proof}


\begin{lem}\label{lem:hbimpliesclock} 
For any two events $a,b$ if $a \hb b$ then $H_a \cle H_b$ and $P_a \cle P_b$
\end{lem}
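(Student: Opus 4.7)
My plan is to prove this by induction on the position of $b$ in the trace $\sig$, using the characterization of $\hb$ as the transitive closure of thread order ($\toe$) together with release-to-acquire edges on the same lock. The inductive hypothesis will be: for every event $b'$ processed strictly before $b$ and every $a'$ with $a' \hb b'$, both $H_{a'} \cle H_{b'}$ and $P_{a'} \cle P_{b'}$ hold. Lemmas~\ref{lem:lastrelease} and~\ref{lem:threadmonotone} will do most of the actual work; the induction is just the scaffolding that lets us lift a single-step HB edge into the full transitive chain.

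For the inductive step, I fix $a$ with $a \hb b$ and $a \neq b$. Any derivation of $a \hb b$ ends with some intermediate event $e$ such that $a \hb e$ and $e \to b$ is a single HB step, where $e$ is processed before $b$. If $e \toe b$, then Lemma~\ref{lem:threadmonotone} gives $H_e \cle H_b$ and $P_e \cle P_b$, and the IH applied to $(a,e)$ gives $H_a \cle H_e$ and $P_a \cle P_e$; compose. If instead $e = \rel{\ell}$ and $b = \acq{\ell}$ with $e \trs b$, I let $r^*$ be the last $\rel{\ell}$ event before $b$. By Lemma~\ref{lem:lastrelease}, at the moment $b$ is processed we have $\H_\ell = H_{r^*}$ and $\P_\ell = P_{r^*}$, so lines~1 and~2 of the acquire procedure force $H_b \sqsupseteq H_{r^*}$ and $P_b \sqsupseteq P_{r^*}$. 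It remains to transport the clocks from $e$ to $r^*$: either $e = r^*$ directly, or $e \trs r^*$ with both being releases on $\ell$, in which case lock semantics supplies an acquire $a^*$ matching $r^*$ with $e \trs a^*$ and $a^* \toe r^*$, yielding $e \hb r^*$. Since $r^*$ is processed before $b$, the IH applied to $(e, r^*)$ gives $H_e \cle H_{r^*}$ and $P_e \cle P_{r^*}$; then the IH applied to $(a, e)$ gives $H_a \cle H_e$ and $P_a \cle P_e$; composing yields the desired inequalities.

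The main thing to be careful about is the well-foundedness of the induction: every application of the IH has to be at an event strictly earlier in the trace than $b$. This is why I choose to induct on the position of $b$ rather than on the length of an HB derivation. Both events to which I apply IH, namely $e$ and $r^*$, are $\trs$-earlier than $b$ by construction (the release-acquire rule forces $e \trs b$, and $r^*$ is the last release before $b$), so the recursion is grounded. No other subtlety arises: Lemma~\ref{lem:lastrelease} gives the precise snapshot of $\H_\ell$ and $\P_\ell$ at the acquire, Lemma~\ref{lem:threadmonotone} handles thread-order edges uniformly for both $H$ and $P$, and the two clocks are treated in complete parallel throughout.
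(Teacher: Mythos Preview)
Your proof is correct and follows essentially the same approach as the paper: induction on the trace position of $b$, invoking Lemma~\ref{lem:threadmonotone} for thread-order steps and Lemma~\ref{lem:lastrelease} together with lines~1--2 of the \texttt{acquire} procedure for the release--acquire case. The only cosmetic difference is that you case-split on the last edge of an HB chain, whereas the paper case-splits on whether $a \hb c$ for $c$ the thread-predecessor of $b$; both lead to the same two cases and the same use of the last $\rel{\ell}$ event before $b$.
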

\begin{proof}

If $a \toe b$ then by Lemma~\ref{lem:threadmonotone} gives us the required
result. Now suppose $a,b$ are in different threads. We perform induction on
the position of $b$ in the trace. In the base case when $b$ is in the first
position in the trace we have $a = b$ in which case the result trivially
follows. Suppose $b$ is not in the first position in the trace. Consider the
event $c$ preceding $b$ in its thread (it may not exist). If $a \hb c$ we use
the induction hypothesis along with the Lemma~\ref{lem:threadmonotone} to get
our result. Otherwise $b$ has to be an $\acq{l}$ event, and there has to be a
previous $\rel{l}$ event. Call the last such release before $b$ as $r$. Now $a
\hb r$, and by the induction hypothesis we get
$H_a \cle H_r$ and
 $P_a \cle P_r$. Note that when $r$ updated the state, $\H_l$ and
$\P_l$ were assigned $H_r$ and $P_r$ respectively in Line 9. And when
$b$ is used to update the state $H_r$ was joined into $\H_t$ in Line 1
and $P_r$ was joined into $P_b$ in Line 2.
Hence $H_r \cle H_b$ and $P_r \cle P_b$, combining this with inequalities
obtained from applying the induction hypothesis on $a$ and $r$ gives us our
required result.
\end{proof}

\begin{lem}\label{lem:ph}
For any event $e$, $P_e \cle C_e \cle H_e$
\end{lem}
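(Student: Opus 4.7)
The plan is to exploit the set-containment $\lwcp \subseteq \hb$, which follows from the paper's earlier remarks that $\wcp \subseteq \cp$ (stated just before the soundness theorem) together with $\cp \subseteq \hb$. Given these inclusions, the formal definitions $P_e = \bigsqcup\{C_{e'} : e' \lwcp e\}$ and $H_e = \bigsqcup\{C_{e'} : e' \hb e\}$ make $P_e$ a join over a subfamily of the vectors joined to form $H_e$, yielding $P_e \cle H_e$ for free.

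To upgrade $P_e \cle H_e$ to the two-sided inequality $P_e \cle C_e \cle H_e$, I would separately verify the $t(e)$-component and defer the remaining components to the bound just proved. The algorithm gives the structural identity $C_e = P_e[t(e) := N_e]$ (since the paper defines $\C_t = \P_t[t := \cN_t]$), and also maintains $\H_{t(e)}(t(e)) = \cN_{t(e)} = N_e$. Thus on any component $u \ne t(e)$ we immediately have $P_e(u) = C_e(u) \le H_e(u)$. On the $t(e)$-component, note first that $e \hb e$, so $C_e$ itself is one of the vectors joined to form $H_e$, giving $H_e(t(e)) \ge N_e$; the reverse inequality holds because any event $e'$ contributing to $H_e(t(e))$ is either in thread $t(e)$ with $e' \toe e$, or in a different thread whose contribution to component $t(e)$ unwinds to the local time of some $e^*$ in thread $t(e)$ that WCP-precedes $e'$ and hence WCP-precedes $e$ (via rule (c) applied to $e^* \wcp e' \hb e$), so $e^* \toe e$ and $N_{e^*} \le N_e$. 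Hence $H_e(t(e)) = N_e = C_e(t(e))$.

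The only remaining obligation is $P_e(t(e)) \le N_e$, for which I would argue symmetrically: any $e'$ with $e' \lwcp e$ contributes $C_{e'}(t(e))$, which in turn is realized as $N_{e^*}$ for some $e^* \in t(e)$ with $e^* \wcp e'$. Composing using rule (c) of WCP ($\lwcp \circ \lwcp \subseteq \lwcp$ and $\hb \circ \lwcp \subseteq \lwcp$, both of which follow from the paper's stated closure property), one collapses $e^* \wcp e' \lwcp e$ to $e^* \lwcp e$. Since $e^*$ and $e$ are in the same thread, $\lwcp \subseteq \hb$ and $\hb$ restricted to a single thread is $\toe$, so $e^* \toe e$ and $N_{e^*} \le N_e$ by monotonicity of the local counter.

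The main delicacy lies in justifying the ``realization'' used above: namely, that every non-self component $C_{e'}(u)$ is the local time of some genuine event in thread $u$ that WCP-precedes $e'$. This is not spelled out in the lemmas already proved (Lemmas C.1--C.3), so I would either establish it as a separate structural invariant by induction on trace prefix, or rephrase the argument to use only the monotonicity facts of Lemma C.2 together with Lemma C.1 on the release/lock clocks. Care is needed to avoid circularity with the eventual correctness Theorem 2, but since both inequalities ultimately reduce to monotonicity plus the set containment $\lwcp \subseteq \hb$, no appeal to the converse direction of the correctness theorem is required.
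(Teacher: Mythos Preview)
Your approach has a circularity problem. In Appendix~C, the symbols $P_e$ and $H_e$ denote the \emph{algorithm-computed} values --- the contents of the clocks $\P_{t(e)}$ and $\H_{t(e)}$ right after $e$ is processed. The identities $P_e = \bigsqcup\{C_{e'} : e'\lwcp e\}$ and $H_e = \bigsqcup\{C_{e'} : e'\hb e\}$ you invoke are the \emph{intended semantics} of those clocks, and establishing that the algorithm actually realizes them is precisely the content of Theorem~2, whose proof (via Lemmas~C.8 and~C.9) \emph{uses} Lemma~C.4. So your very first step, reading $P_e \cle H_e$ off the set containment $\lwcp \subseteq \hb$, presupposes what the surrounding section is trying to prove. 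You flag the circularity risk in your last paragraph but then dismiss it; in fact the appeal to the formal join-definitions \emph{is} an appeal to (both directions of) the correctness theorem. The ``realization'' invariant you say you would need to prove separately is essentially the full soundness direction of Theorem~2.

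The paper avoids this by arguing operationally: it maintains the invariant $\P_t \cle \H_t$ directly, by inspecting each line of Algorithm~1 that updates $\P_t$ (Lines~2, 6, 11, 12) and showing that in every case the value joined into $\P_t$ is dominated by something already joined into $\H_t$ (at Line~1, when the enclosing acquire was processed). The key tool is Lemma~C.3 ($a\hb b \Rightarrow H_a \cle H_b$), applied to the relevant release events: the release $r$ whose $H_r$ flows into $\P_t$ is always HB-before the last release $r'$ on the same lock, and $H_{r'}$ has already been absorbed into $\H_t$. This line-by-line analysis is what you should replace your semantic argument with; your fallback suggestion to ``rephrase the argument to use only the monotonicity facts of Lemma~C.2 together with Lemma~C.1'' is the right instinct, and carrying it out leads to the paper's proof.
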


\begin{proof}
For $P_e \cle C_e$ note that $C_e$ is just a short hand for $P_e[t(e) := N_e]$
and the fact that no vector clock until $e$ can have it's $t(e)$ component's
value bigger than $N_e$.

For $C_e \cle H_e$ it is sufficient to prove $P_e \cle H_e$ 
because $P_e$ and $C_e$ match on every co-ordinate except possibly
$t(e)$, but for $t(e)$ we know $P_e(t(e)) = H_e(t(e)) = N_e$.
 For $P_e \cle H_e$ we do the following: Any thread begins with
$\P_t$ and $\H_t$ being equal to the 0 vector-clock. As each event is
processed we prove that the $\P_t \cle \H_t$ is maintained inductively. We do
a case analysis on the type of events. If $e$ is a an acquire, the from Lines
1 and 2 and the induction hypothesis we get that $\P_t \cle \H_t$ is
maintained. 

When $e$ is a $\rel{l}$ event: $\P_t$ is updated at Line 6, which has the time
$H_r$ of a release event $\rel{l}$, say $r$ that is performed by thread
$t'$. But note that when  $a = \mtc(e)$ was performed $\H_t$ received the
time from clock $\H_{l}$ (Line 1) which corresponds to the last release on lock $l$, say
$r'$. Clearly $r'$ is later than $r$ in the trace and since they operate over
the same lock we have $r \hb r'$ and applying Lemma~\ref{lem:hbimpliesclock}
we get $H_{r} \cle H_{r'}$. In Line 1 when $a$ is being
processed $\H_t$ is updated with $\H_l$ which is nothing but $H_{r'}$. Which means by the time Line 6 updates $\P_t$, the value of
$\H_t$ is already updated with a larger time.

Next consider the case when $e$ is either a \R{x}/\W{x} in which case $\P_t$
gets updated in Lines 11/12, note that in both cases it gets the time
$\L_{\l,x}^{r/w}$ of a certain $\rel{l}$, say $r$, only if the event $e$ is being
performed inside a critical section of $\l$, let the acquire event of that
critical section be $a$. Now we know that when $a$ is performed it receives
the the time $H_{r'}$ (through $\H_l$) of last $\rel{l}$ event
$r'$ before $a$. Now once again we have $r \hb r'$ and using the same argument
as in the previous paragraph we have our inequality maintained.

\end{proof}

\begin{lem}\label{lem:releasecounter}
For any two events $e,r$ in a thread $u$ where $r$ is a release,
if $N_a \le N_r$ then $a \toe r$.
\end{lem}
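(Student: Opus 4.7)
The plan is to prove the contrapositive: if $r \tho a$ strictly (so $r$ is thread-ordered strictly before $a$), then $N_r < N_a$. Contrapositive will then say that if $N_a \le N_r$, then $\lnot (r \tho a)$, and since $a, r$ are in the same thread $u$, this forces $a \toe r$.

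The key fact to establish first is a monotonicity statement about $\cN_t$ along thread order, together with a strict-increment observation at releases. Recall from the description of the algorithm that $\cN_t$ is incremented just before an event of $t$ is processed if and only if the preceding event in $t$ was a release. Thus, for any two consecutive events $c \tho c'$ in the same thread, we have $N_{c'} = N_c + 1$ if $c$ is a release, and $N_{c'} = N_c$ otherwise. In either case $N_c \le N_{c'}$, so $N$ is non-decreasing along thread order.

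Now suppose $r \tho a$ strictly, with $r$ a release. Let $b$ be the immediate thread successor of $r$ in $u$ (it exists because $r \tho a$ strictly means there is at least one event after $r$ in $u$). By the increment rule applied to the pair $(r, b)$, we have $N_b = N_r + 1$. Since $b \toe a$, the non-decreasing property gives $N_b \le N_a$, and hence $N_r < N_b \le N_a$. This contradicts $N_a \le N_r$, completing the contrapositive argument.

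I do not expect any real obstacle here; the entire argument hinges on correctly reading off the update rule for $\cN_t$ stated just before Algorithm~\ref{algo:update}. The only subtle point is that the increment happens \emph{just before} processing the next event of the same thread, which is precisely why the jump $N_r \mapsto N_r + 1$ appears at the immediate successor $b$ of $r$ rather than at $r$ itself; any mis-reading of when the increment fires would break the strict inequality.
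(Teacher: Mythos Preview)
Your proof is correct and follows essentially the same approach as the paper's: the paper's one-sentence proof simply observes that the local counter is incremented immediately after each release, so all thread-later events carry a strictly larger local time. Your contrapositive formulation and explicit use of the immediate successor $b$ spell this out more carefully, but the underlying idea is identical.
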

\begin{proof}
Each time a release is processed the counter is incremented prior to the next 
event, this assigns a strictly greater counter for all subsequent events 
in that thread.
\end{proof}


\begin{lem}\label{lem:hbp}
For any two events $a,b$ in different threads $u,v$, if $N_a \le H_b(u)$
then $a \hb b$
\end{lem}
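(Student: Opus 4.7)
The plan is to establish a structural characterization of $H_b(u)$ and then conclude using Lemma~\ref{lem:releasecounter}.

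\textbf{Structural claim.} For every event $b$ and every thread $u \neq t(b)$, if $H_b(u) = n > 0$ then there exists a release event $r$ in thread $u$ with $r \hb b$ and $N_r = n$. I will prove this by induction on the position of $b$ in $\sig$. The only way the $u$-component of $\H_{t(b)}$ can become nonzero is via the join at Line~1 when processing an acquire event, namely $\H_v := \H_v \mx \H_\ell$, so the inductive step splits on the type of $b$. If $b$ is not an acquire, then $H_b(u) = H_c(u)$ for the previous event $c$ in thread $t(b)$ (and if no such $c$ exists then $H_b(u)=0$, contradicting $n>0$); the inductive hypothesis applied to $c$ yields the desired release $r$ with $r \hb c \hb b$. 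If $b$ is an acquire on lock $\ell$, then $H_b(u) = \max(H_c(u), \H_\ell(u))$, and by Lemma~\ref{lem:lastrelease} $\H_\ell = H_{r'}$ for the last $\rel{\ell}$ event $r'$. If the max is attained by $H_c(u)$, the inductive hypothesis on $c$ suffices. Otherwise it is attained by $H_{r'}(u)$: when $u = t(r')$, $r'$ is itself a release in thread $u$ with $H_{r'}(u) = N_{r'}$ and $r' \hb b$ by the HB rule for locks; when $u \neq t(r')$, the inductive hypothesis on $r'$ yields a release $r$ in thread $u$ with $N_r = H_{r'}(u) = n$ and $r \hb r' \hb b$.

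\textbf{Concluding the lemma.} Given the structural claim, assume $N_a \le H_b(u)$. Since every local counter is initialized to $1$ and only ever incremented, $N_a \ge 1$, hence $H_b(u) \ge 1$. The structural claim then furnishes a release event $r$ in thread $u$ with $r \hb b$ and $N_r = H_b(u) \ge N_a$. Because $r$ is a release and $a$ lies in the same thread $u$, Lemma~\ref{lem:releasecounter} gives $a \toe r$, and combining with $r \hb b$ (together with $\toe \subseteq \hb$) yields $a \hb b$, as required.

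The main obstacle will be the inductive step for acquires: one must check that the value of $\H_\ell$ joined into $\H_v$ is exactly the HB-time of the latest matching release $r'$, and that $r' \hb b$ under the HB rule for locks on the same $\ell$. Both pieces are available from Lemma~\ref{lem:lastrelease} and the definition of $\hb$, so the argument becomes mostly bookkeeping once the invariant is framed correctly --- namely that each nonzero $u$-component of any HB-clock produced by the algorithm is always realized as the local counter $N_r$ of some release $r$ in thread $u$ that is happens-before the current event.
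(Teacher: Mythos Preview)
Your proof is correct. Both your argument and the paper's proceed by induction on the position of $b$, split on whether $b$ is an acquire, use Lemma~\ref{lem:lastrelease} to identify the join at Line~1 with $H_{r'}$ for the last $\rel{\ell}$ event $r'$, and invoke Lemma~\ref{lem:releasecounter} for the case where the relevant release lies in thread $u$.

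The difference is purely in decomposition. The paper inducts directly on the implication ``$N_a \le H_b(u) \Rightarrow a \hb b$'': at the acquire step it observes $N_a \le H_{r'}(u)$, then branches on whether $t(r') = u$ (apply Lemma~\ref{lem:releasecounter}) or not (apply the induction hypothesis to $a,r'$), and finishes with $r' \hb b$. You instead isolate a stronger structural invariant --- that every nonzero off-thread component $H_b(u)$ is \emph{exactly} $N_r$ for some release $r$ in thread $u$ with $r \hb b$ --- prove that once, and then the lemma itself becomes a one-line corollary via Lemma~\ref{lem:releasecounter}. Your invariant is reusable (it characterizes the HB-clock components rather than just bounding them) and makes the single use of Lemma~\ref{lem:releasecounter} cleaner; the paper's direct induction is slightly shorter on the page since it avoids stating the auxiliary claim. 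Neither approach requires any idea the other lacks.
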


\begin{proof}

 We prove this by induction on the position of the event $b$ in the trace. In
the base case we have $b$ in the first position in trace and hence in $v$, so
we get $H_b(u) = 0$. We also have $N_a \ge 1$. Therefore the implication is
vacuously true. Consider the inductive step: let $b$ be not the first position
in the trace . We only need to consider the case where $b$ is not in the first
position in the threads as well (otherwise handled like base case). Let $c$ be
the event just before $b$ in $v$. If $N_a \le H_c(u)$ then we apply induction
hypothesis on $a,c$ to obtain $a \hb c$ and then use $c \tho b$ to get $a \hb
b$. Now suppose event $c$ does not satisfy the requirements of the induction
hypothesis $N_a \nleq H_c(u)$, then we get that $b$ is an acquire event,
because those are the only events at which $\H_t$ changes. The way $\H_t$ is
updated in Line 1 tells us that either $N_a \le H_r(u)$ where $r$ is the last
$\rel{l}$ event. If $a$ and $r$ are in different threads then  we can apply
the induction hypothesis on $r$ and use the fact that $r \hb b$ (definition of
HB) to obtain $a \hb b$. If $a$ and $r$ are in the same thread then we have
$N_a \le N_r$ and we apply Lemma~\ref{lem:releasecounter} to get
$a \tho r$ which implies $a \hb r$.
\end{proof}

\begin{lem}\label{lem:release}
For any event $a$ in thread $u$ and a release event $r$
if $N_a \le H_r(u)$ then $a \hb r$
\end{lem}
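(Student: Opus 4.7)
The plan is to handle this as a short case split on whether $a$ and $r$ live in the same thread, reducing each branch to a lemma already established in the excerpt.

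First I would dispose of the cross-thread case. If $t(a) \neq t(r)$, then the hypothesis $N_a \le H_r(u)$ (with $u = t(a)$) matches exactly the premise of Lemma~\ref{lem:hbp}, which immediately yields $a \hb r$. So the only nontrivial content is the same-thread case.

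For the same-thread case, suppose $t(a) = u = t(r)$. The key observation, stated informally in the algorithm description (``the clock $\H_t(t)$ is made to refer to $\cN_t$''), is the invariant $\H_t(t) = \cN_t$ at all times, which specializes after processing $r$ to $H_r(u) = N_r$. I would state this invariant explicitly (it follows by induction on events processed, since none of Lines~1, 2, 6, 11, 12 alter the $u$-component of $\H_u$; only the implicit local-clock increment changes it, and it is initialized to $\cN_u$). With this in hand, the hypothesis $N_a \le H_r(u)$ becomes $N_a \le N_r$, and Lemma~\ref{lem:releasecounter} (applied with the release $r$) gives $a \toe r$, which in turn implies $a \hb r$ since thread order is contained in $\hb$.

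The argument is essentially mechanical, and I do not expect any real obstacle; the only point where one must be slightly careful is justifying the ``$\H_t(t) = \cN_t$'' invariant, which the paper takes as a notational convention rather than a theorem. If a referee required a formal treatment, I would add a one-line inductive remark before the case split, noting that every update in Algorithm~\ref{algo:update} to $\H_u$ is either a join (which cannot decrease any coordinate) or leaves the $u$-coordinate pinned to $\cN_u$ by construction.
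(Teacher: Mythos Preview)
Your proposal is correct and follows essentially the same approach as the paper's proof: a case split on whether $r$ lies in thread $u$, invoking Lemma~\ref{lem:hbp} in the cross-thread case and combining $H_r(u)=N_r$ with Lemma~\ref{lem:releasecounter} in the same-thread case. Your extra care in justifying the invariant $\H_t(t)=\cN_t$ is fine but unnecessary here, since the paper treats this as a definitional convention rather than a derived fact.
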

\begin{proof}
If $r$ is in thread $u$ we have $H_r(u) = N_r$ in which
case we can apply Lemma~\ref{lem:releasecounter} to get $a \toe r$ and hence
$a \hb r$. If not then we apply Lemma~\ref{lem:hbp} to get $a \hb r$.
\end{proof}

\begin{lem}\label{lem:clockimplieswcp}
For any two events $a,b$ in threads $u,v$ if $N_a \le P_b(u)$
then $a \lws b$. 
\end{lem}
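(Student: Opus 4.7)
The plan is to proceed by strong induction on the position of the event $b$ in the trace $\sig$. Let $c$ denote the event immediately preceding $b$ in thread $v$, if such exists. The central dichotomy I would use is whether $\P_v(u) \ge N_a$ already held just after $c$ was processed, or whether an update during the processing of $b$ itself is what raised the $u$-component above $N_a$. In the former case, the induction hypothesis applied to $(a,c)$ yields $a \lws c$; since $c \toe b$ and $\toe \subseteq \hb$, Rule (c) of WCP then closes the case with $a \lws b$. The base case is vacuous, since $\P_v$ is initialized to $\bot$ and $N_a \ge 1$.

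In the case where the update during $b$ raised $\P_v(u)$, I would split on the type of $b$ and identify which join in Algorithm~\ref{algo:update} was responsible. If $b$ is an acquire on lock $\ell$, the join is with $\P_\ell = P_{r'}$ for the last $\rel{\ell}$ event $r'$ before $b$; invoking the induction hypothesis on $(a,r')$ (when $t(r') \neq u$) yields $a \lws r'$, and the HB edge $r' \hb b$ together with Rule (c) produces $a \lws b$. If $b$ is a release, the join is with the value $H_{r''}$ dequeued from $Rel_\ell(v)$ for some earlier $\rel{\ell}$ event $r''$ in a thread distinct from $v$; the Line~4 dequeue condition ensures, via Rule (b), that $r'' \lws b$, while Lemma~\ref{lem:release} applied to $N_a \le H_{r''}(u)$ gives $a \hb r''$, and Rule (c) concludes. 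If $b$ is a read or write, the contributing clock is $\L_{\ell,x}^r$ or $\L_{\ell,x}^w$, whose contents are joins of $H_{r''}$ for earlier releases $r''$ with a conflicting access in their critical section; Rule (a) furnishes $r'' \lws b$, Lemma~\ref{lem:release} furnishes $a \hb r''$, and Rule (c) closes.

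The step I expect to be the main obstacle is the sub-case in the acquire branch where $t(r') = u$, since the induction hypothesis does not directly apply when $a$ and $r'$ share a thread. Here I would combine Lemma~\ref{lem:ph}, which bounds $P_{r'}(u) \le N_{r'}$, with Lemma~\ref{lem:releasecounter} to obtain $a \toe r'$, but this only yields $a \hb r' \hb b$ rather than a $\lws$-edge. Closing the gap requires a finer trace of how $\P_u(u)$ was boosted: any such boost at an earlier event of $u$ must ultimately be backed by a cross-thread WCP-predecessor (reached through Rules (a), (b), or (c) applied elsewhere), from which an explicit WCP chain from $a$ to $b$ can be reconstructed. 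By contrast, the analogous same-thread sub-cases for the release and read/write branches are much simpler, because when $r'' \in u$ we have $H_{r''}(u) = N_{r''}$ by the convention $\H_t(t) = \cN_t$, so Lemma~\ref{lem:releasecounter} immediately gives $a \toe r''$, and $a \hb r'' \lws b$ closes via Rule (c).
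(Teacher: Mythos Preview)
Your overall strategy---strong induction on the trace position of $b$, splitting on whether $\P_v(u)$ was already at least $N_a$ after the predecessor $c$ or was raised while processing $b$, and then case-analysing on the kind of $b$---is exactly the paper's approach. Two points deserve correction.

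First, your ``main obstacle'' is not an obstacle at all. The lemma carries no requirement that $u \neq v$, so the induction hypothesis applies to the pair $(a,r')$ regardless of whether $t(r')=u$, and yields $a \lws r'$ directly (the relation $\lws$ can relate same-thread events via Rule~(c) composed through other threads). The paper handles the acquire case in one line by applying the outer induction hypothesis to $(a,r')$ with no thread distinction; your proposed detour through Lemma~\ref{lem:ph} and Lemma~\ref{lem:releasecounter} is unnecessary and, as you yourself noticed, only produces an HB edge.

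Second, your release case has a genuine gap, and this is where the actual subtlety lies. Writing that ``the Line~4 dequeue condition ensures, via Rule~(b), that $r'' \lws b$'' skips the essential step: Rule~(b) needs some event of $\cs(r'')$ to be $\lws$-ordered before some event of $\cs(b)$. Taking $e=\mtc(r'')$, the dequeue test gives $C_e \cle \C_v$, but here $\C_v$ is the \emph{intermediate} clock value part-way through the while loop, not the final $C_b$, so neither the outer induction hypothesis nor the lemma itself can be invoked to conclude $e \lws b$. The paper closes this with an \emph{inner} induction on the number $i$ of completed loop iterations, proving: if $N_{a'} \le \P_v^{(i)}(t(a'))$ then $a' \lws b$. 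The inner base case ($i=0$) reduces to the outer hypothesis at $c$; the inner step uses the dequeue test to obtain $N_e \le \P_v^{(i-1)}(t(e))$ (here $t(e)\neq v$ holds because the queues $Acq_\ell(v)$ store only other threads' acquire times), invokes the inner hypothesis at level $i{-}1$ to get $e \lws b$, and then Rule~(b) delivers $r'' \lws b$.
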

\begin{proof}

We prove this by induction on position of $b$ in the trace. In
the base case since $b$ is in the first position in $v$ we get $P_b(u) = 0$. We
also have $N_a \ge 1$. This makes the required implication vacuously true.

Consider the inductive step. Let $c$ be the event just before $b$ in $u$. If
$N_a \le P_c(u)$ then applying induction hypothesis on $a,c$ we get the required
result. Suppose $P_c(u) < N_a \le P_b(u)$ then we get that $b$ is an event where 
$\P_t$ is updated. We do case analysis on different kinds of events for $b$

\begin{itemize}

\item Suppose $b$ is $\acq{l}$ then Line 2 tells us that $N_a \le P_r(u)$ where
$r$ is the last $\rel{l}$ before $b$. Then we apply induction on $a,r$ to get $a
\lwcp r$. Using $r \hb b$
(Definition of HB) and applying Rule (c) of WCP
we get $a\lwcp b$.

\item Suppose $b$ is a \R{x} then Line 11 tells us that for some $\ell$,  $N_a
\le H_r(u)$ where $r$ is some $\rel{\ell}$ event such that $b \in \ell$
and $\cs(r)$ contains a \W{x}. Applying
Lemma~\ref{lem:release} on  $N_a \le H_r(u)$ we get $a \hb r$. Combining this with $r \lwcp b$ (Rule
(b)) we get $a \lwcp b$ using Rule (c). 

\item If $b$ is a \W{x} the argument is similar to the case of \R{x} above.

\item Suppose $b$ is a $\rel{l}$ event. Note that $\P_t$ is updated in Line 6
iteratively. Given $b$ we prove that for
events $a$ for which $N_a \le P_b(t(a))$ after $i$ iterations ($i$ times the 
while loop has been executed) $a\lwcp b$ holds by inducting on $i$. 
The base case when
$i = 0$ implies the previous event $c$ is such that $N_a \le P_c(u)$ and by
the outer induction (on position of $b$) gives us that $a \lwcp c$ and from
thread order we have $c \hb b$, composing the two we get $a\lwcp b$.

 Suppose $N_a\le P_b(u)$ after $i$ but not $i-1$ iterations. This implies
there exists an acquire event $e$ and it's matching release $r = \mtc(e)$ such
that $N_a \le H_r(u)$ (Line 6 and Lemma~\ref{lem:lastrelease}). Once again if
$N_a \le H_r(u)$ then we have $a \hb r$  by Lemma~\ref{lem:release}.   We also
have that value of $C_b$ at the end of $i-1$ iterations is such that $C_e \cle
C_b$ (Line 4). Which implies $N_e \le C_b(t(e))$. But note that the queue
$Acq_\ell(t)$ only contains acquire times  of threads other than $t$ (from
Line 3). This implies $C_b(t(e)) = P_b(t(e))$ and we have $N_e \le P_b(t(e))$
and by  the inner induction hypothesis we have $e \lwcp b$. Using Rule (b) of
WCP we get $r \lwcp b$ and composing the two using Rule (c) we get $a \lwcp
b$. \qedhere \end{itemize} \end{proof}

\begin{cor}\label{cor:clockimplieswcp}
For any two events $a,b$ such that $a \tr b$ if $C_a \cle C_b$ then $a \wcp b$.
\end{cor}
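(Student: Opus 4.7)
The plan is to derive the corollary as a short case analysis that reduces, in the cross-thread case, to Lemma~\ref{lem:clockimplieswcp}, and handles the same-thread case via thread order. Throughout, let $u = t(a)$ and $v = t(b)$, and recall from Section~\ref{sec:algo} that $\wcp = (\lws \cup \toes)$ and that $C_e$ is defined to be $P_e[t(e) := N_e]$, i.e., $C_e$ agrees with $P_e$ on every coordinate except the $t(e)$-coordinate, which is $N_e$.

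First I would dispatch the case $u = v$. Here $a \tr b$ together with $t(a) = t(b)$ gives $a \tos b$ directly from the definition of thread order, so $a \toes b$ and in particular $a \wcp b$. No use of the hypothesis $C_a \cle C_b$ is needed in this case.

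The interesting case is $u \neq v$. I would extract the $u$-th coordinate of the hypothesis $C_a \cle C_b$. On the left side, $C_a(u) = C_a(t(a)) = N_a$ by the definition of $C_e$. On the right side, since $u \neq v = t(b)$, the $u$-th coordinate of $C_b$ was not overwritten, so $C_b(u) = P_b(u)$. Combining these, the hypothesis specializes to $N_a \le P_b(u)$, which is precisely the antecedent of Lemma~\ref{lem:clockimplieswcp}. Applying that lemma yields $a \lws b$, and hence $a \wcp b$.

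I do not expect any serious obstacle: the only thing to verify with care is the coordinate-by-coordinate reading of the vector-clock inequality, in particular that $C_b(u) = P_b(u)$ holds exactly because $u \neq t(b)$ (which is where the cross-thread assumption is used). Lemma~\ref{lem:clockimplieswcp} does all the real work; the corollary is a packaging step that absorbs thread order into the stronger relation $\wcp$ and rewrites the clock inequality into the counter inequality expected by that lemma.
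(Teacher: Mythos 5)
Your proof is correct and follows essentially the same route as the paper's: a case split on whether $a$ and $b$ share a thread, with the same-thread case handled by thread order and the cross-thread case reduced to $N_a \le P_b(u)$ via the coordinate identities $C_a(u) = N_a$ and $C_b(u) = P_b(u)$, then closed by Lemma~\ref{lem:clockimplieswcp}. Your explicit justification that $C_b(u) = P_b(u)$ holds precisely because $u \neq t(b)$ is a small clarity improvement over the paper's terser statement of the same step.
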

\begin{proof}
If $a$ and $b$ are in the same thread then we have $a \tho b$ and so
 we get $a \wcp b$.
If $a$ and $b$ are in different threads $u$ and $v$, then $N_a = C_a(u) \le C_b(u) = P_b(u)$ or $N_a \le P_b(u)$ and then by using Lemma~\ref{lem:clockimplieswcp}
we get $a \lwcp b$.
\end{proof}

Next we see how to prove the other side of the correspondence.

\begin{lem}\label{lem:wcpimpliesclock}
For any two events $a,b$ if $a \lwcp b$ then $H_a \cle P_b$
\end{lem}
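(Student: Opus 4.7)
The plan is to proceed by well-founded induction on the derivation of $a \lwcp b$ in the inductive construction of WCP, with a case analysis on the last rule applied. Throughout, I lean on Lemmas~\ref{lem:lastrelease}--\ref{lem:release} and on the auxiliary invariant $\P_t(t) \le \cN_t$ at every instant (a direct induction on trace length), which guarantees that $\C_t(t) = \cN_t$ always dominates the $t$-th component of any clock value previously communicated from another thread.

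For Rule~(a), $a$ is a release $r$ on some lock $\ell$ and $b$ is a read/write $e$ on a variable $x$ with $e \in \ell$ and $\cs(r)$ containing an event $e' \cnflct e$. When $r$ is processed, Lines~7--8 join $\H_t = H_r$ into $\L^r_{\ell,x}$ (if $e'$ is a read) or into $\L^w_{\ell,x}$ (if $e'$ is a write). When $e$ is later processed, Lines~11--12 join the appropriate $\L^{r/w}_{\ell,x}$ into $\P_{t(e)}$ for every enclosing lock (in particular $\ell$); a short case analysis on whether $e$ is a read or a write shows $H_r \cle P_e$. For Rule~(c), either $a \hb c \lwcp b$ or $a \lwcp c \hb b$; applying the IH to the shorter subderivation and combining with Lemma~\ref{lem:hbimpliesclock} yields $H_a \cle H_c \cle P_b$ or $H_a \cle P_c \cle P_b$, respectively.

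The main work is in Rule~(b), where $a = r_1$ and $b = r_2$ are $\rel{\ell}$ events with $e_1 \in \cs(r_1)$, $e_2 \in \cs(r_2)$, and $e_1 \lwcp e_2$. Let $a_1 = \mtc^{-1}(r_1)$ and $t_2 = t(r_2)$. The IH on $e_1 \lwcp e_2$ (a strict subderivation) gives $H_{e_1} \cle P_{e_2}$. Combining with $a_1 \hb e_1$ and $e_2 \hb r_2$ via Lemmas~\ref{lem:hbimpliesclock} and~\ref{lem:ph}, and using $\P_t(t) \le \cN_t$, one checks component-by-component (the components for $t_1$, $t_2$, and $t'' \notin \{t_1,t_2\}$ each require a different small argument) that $\C_{a_1} \cle \C_{t_2}$ at the top of the \texttt{while} loop in \code{release}$(t_2, \ell, R, W)$. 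Hence once the front of $Acq_\ell(t_2)$ reaches $\C_{a_1}$, the loop guard passes; moreover, each entry enqueued ahead of $\C_{a_1}$ comes from some acquire $a' \tr a_1$ on $\ell$, for which lock semantics forces $\mtc(a') \tr a_1$, so $a' \hb a_1$ and the same inequality gives $\C_{a'} \cle \C_{t_2}$, so those entries are dequeued first. The matching dequeue from $Rel_\ell(t_2)$ returns the value enqueued at Line~10 when $r_1 = \mtc(a_1)$ was processed, namely $H_{r_1}$, and Line~6 joins it into $\P_{t_2}$, yielding $H_{r_1} \cle P_{r_2}$.

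The main obstacle is coordinating the inductive hypothesis with the FIFO discipline in Rule~(b): one must ensure that nothing enqueued ahead of $\C_{a_1}$ stalls the loop, and that the paired $Rel_\ell(t_2)$ entry dequeued alongside $\C_{a_1}$ is precisely $H_{r_1}$. The former reduces to the $\C_{a'} \cle \C_{t_2}$ argument applied to every earlier acquire; the latter follows from an alignment invariant on the two queues, namely that whenever a thread $u \neq t_2$ performs an $\acq{\ell}$ followed (by well-nestedness) by its matching $\rel{\ell}$, the enqueues at Lines~3 and~10 occur in the same relative order, keeping $Acq_\ell(t_2)$ and $Rel_\ell(t_2)$ in lockstep.
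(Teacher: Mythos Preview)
Your proof follows the same induction on derivations as the paper, and Rules~(a) and~(c) are handled identically. For Rule~(b) the overall strategy also matches the paper's, but there is a technical slip and a small omission.

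The slip is the claim that $C_{a_1} \cle \C_{t_2}$ holds \emph{at the top of the \texttt{while} loop}. Rule~(b) allows $e_2 = r_2$, and in that case the IH only yields $H_{e_1}\cle P_{e_2}=P_b$, a statement about the value of $\P_{t_2}$ \emph{after} the loop has run; you cannot use it to bound $\C_{t_2}$ at loop entry, so your component-by-component check does not go through for the $t_1$-component. The paper avoids this by establishing only the weaker (and sufficient) inequality $C_{a_1}\cle C_b$ against the \emph{final} value, via the direct chain $C_{a_1}\cle C_{e_1}\cle H_{e_1}\cle P_{e_2}\cle C_{e_2}\cle C_b$ from Lemmas~\ref{lem:threadmonotone} and~\ref{lem:ph} and the IH (no component-wise argument needed). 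From $C_{a_1}\cle C_b$ one argues: the loop terminates with front $\ncle C_b$ (nothing after the loop in \code{release} touches $\P_{t_2}$), and every entry ahead of and including $C_{a_1}$ satisfies $\cle C_b$ by your $a'\hb a_1$ observation together with $C_{a'}\cle H_{a'}\cle H_{a_1}\cle H_{e_1}\cle P_b\cle C_b$; hence none of them can be the stuck front, and $C_{a_1}$ is dequeued. The omission is that $C_{a_1}$ may already have been dequeued during an earlier $\rel{\ell}$ event $r'\tho r_2$ in $t_2$; then $H_{r_1}$ was joined into $\P_{t_2}$ at that time, giving $H_{r_1}\cle P_{r'}\cle P_b$ by Lemma~\ref{lem:threadmonotone}. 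With these two adjustments your Rule~(b) case coincides with the paper's.
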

\begin{proof}
We look at how $a \lwcp b$ is derived from the Rules of WCP and
perform induction on this derivation/proof. The base case is when
the derivation is of size one i.e., it is derived from Rule (a)
of WCP, in which case $a$ is a $\rel{l}$ such that $\cs(a)$ contains
an event $e$ which conflicts with $b$ which occurs later inside
a different critical section of $\tt{l}$. When $a$ updates the state,
the clocks $\L_{l,x}^w/\L_{l,x}^r$ (depending upon $e$ being \R{x}/\W{x})
are joined with $H_a$ in Lines 7/8. Later when $b$ updates the state it
reads from $\L_{l,x}^w/\L_{l,x}^r$ (Line 11/12) and assigns it to $P_b$
 and therefore $H_a \cle P_b$

The inductive case involves considering the last step of the derivation which
could be the use of Rule (b) or (c). Let us consider Rule (b) first. Here $a$
and $b$ are two $\rel{l}$ events such that their critical sections contain
events $e_1,e_2$ such that $e_1 \lwcp e_2$. By the induction hypothesis we
have $H_{e_1} \cle P_{e_2}$. Let $d$ be the acquire event
$\mtc(a)$. Since $d \toe e_1$ applying Lemma~\ref{lem:threadmonotone} we get
$C_d \cle C_{e_1}$. Combining these with $P_{e_2} \cle C_{e_2} \cle C_{b}$
(from Lemmas \ref{lem:ph} and \ref{lem:threadmonotone}) we get $C_d \cle C_b$.

Before moving ahead with the above argument we elaborate how the queues are
manipulated. Each $\acq{l}$ event performed by a thread $t$ has its time $\C_t$
inserted into the stack $Acq_l(u)$ for every $u \neq t$ (Line 3) and
similarly for release the time $H_t$ is inserted into the the queue
$Rel_l(u)$ (Line 10). These times are inserted in chronological order
(earlier first) as these events are seen, and they are inserted exactly once.
The only event that prompts deque from the queue $Acq_l(u)$ and
$Rel_l(u)$ are $\rel{l}$ events in thread $u$. Entries
from the two queues are always dequed together to ensure that if a acquire is
removed then its matching release is also removed. Note that if an
acquire/release pair $a'/r'$ has already been removed from the respective
queues during a previous release event $r_1$ in thread $u$ then we can
inductively obtain that $H_{r'} \cle P_{r_1}$ and use $P_{r_1} \cle
P_r$ (Lemma~\ref{lem:threadmonotone}) to get $H_{r'} \cle P_r$. The
base case of this induction involves considering the situation when a pair
$a'/r'$ is removed during the state update of the current release $r$ (and not
a previous release $r_1$ in $u$). Such an $a'$ is removed if and only if
$C_{a'} \cle C_r$ (Line 4), and if it is removed then Line~6 ensures $H_{r'} \cle P_r$.

Continuing our argument in the paragraph before now we get that $H_a
\cle P_b$ because the matching acquire of $a$ which is $d$ is such that $C_d
\cle C_b$. This completes the induction step corresponding to Rule (b) being
the last step of the derivation.

Now the final step is to consider the induction step corresponding to last
step of the derivation being Rule (c). First consider when there exists $c$
such that $a \lwcp c \hb b$ then applying induction hypothesis on $a\lwcp c$
we have $H_a \cle P_c$ and then applying Lemma~\ref{lem:hbimpliesclock}
on $c \hb b$ we get $P_c \cle P_b$ and using transitivity of $\cle$
we get $H_a \cle P_b$. Next consider the other case when there exists
$c$ such that $a \hb c \lwcp b$ then applying Lemma~\ref{lem:hbimpliesclock}
on $a \hb c$ we get $H_a \cle H_c$, and applying the induction
hypothesis on $c \lwcp b$ gives us $H_c \cle P_b$. Once again using
transitivity of $\cle$ we obtain $H_a \cle P_b$.
\end{proof}

\begin{cor}\label{cor:wcpimpliesclock}
For any two events $a,b$ if $a\wcp b$ then $C_a \cle C_b$
\end{cor}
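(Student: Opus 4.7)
The plan is to do a simple case split on the definition of $\wcps = \lws \cup \toes$ and chain together lemmas already established earlier in the appendix. The corollary is essentially the converse direction of Corollary~\ref{cor:clockimplieswcp}, and the heavy lifting has been done by Lemma~\ref{lem:wcpimpliesclock}; this corollary only has to patch the statement to use $C$'s on both sides instead of mixing $H$ and $P$, and to cover the reflexive/thread-order part of $\wcps$.

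First I would dispatch the thread-order case: if $a \toes b$, then Lemma~\ref{lem:threadmonotone} directly gives $C_a \cle C_b$, and we are done (this also covers $a = b$, where the inequality is trivial). Next I would handle the nontrivial case $a \lws b$. Lemma~\ref{lem:wcpimpliesclock} gives $H_a \cle P_b$. To convert this into an inequality between $C_a$ and $C_b$, I would sandwich both sides using Lemma~\ref{lem:ph}: on the left, $C_a \cle H_a$, and on the right, $P_b \cle C_b$. Chaining, $C_a \cle H_a \cle P_b \cle C_b$, so $C_a \cle C_b$ by transitivity of $\cle$.

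There is really no obstacle here; the proof is a one-line composition once Lemma~\ref{lem:wcpimpliesclock} and Lemma~\ref{lem:ph} are in hand. The only thing to be mildly careful about is that the statement's $\wcps$ relation bundles in $\toes$ (so the corollary is not literally a restatement of the prior lemma), which is why the case split on the defining union of $\wcps$ is the right way to organize the argument. If anything were to require thought, it would be making sure that in the $\lws$ case we do not implicitly assume $a$ and $b$ are in different threads; but Lemma~\ref{lem:wcpimpliesclock} is stated for arbitrary events and Lemma~\ref{lem:ph} is per-event, so the sandwich argument works uniformly.
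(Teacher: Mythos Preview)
Your proposal is correct and follows essentially the same approach as the paper: a case split on $\wcp = (\lwcp \cup \toe)$, invoking Lemma~\ref{lem:threadmonotone} for the thread-order case and chaining Lemma~\ref{lem:wcpimpliesclock} with Lemma~\ref{lem:ph} for the $\lwcp$ case.
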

\begin{proof}
If $a \toe b$ then we have $C_a \cle C_b$ from Lemma~\ref{lem:threadmonotone}.
Otherwise we have $a \lwcp b$ by definition of $\wcp = (\lwcp \cup \toe)$,
and applying Lemma~\ref{lem:wcpimpliesclock} we get $H_a \cle P_b$ but
we have $P_b \cle C_b$ and $C_a \cle H_a$ (Lemma~\ref{lem:ph}) which gives us $C_a \cle C_b$ (transitivity of $\cle$).
\end{proof}

Theorem~\ref{thm:algorithm} follows from Corollaries~\ref{cor:clockimplieswcp}
and \ref{cor:wcpimpliesclock}.

\section{Running time analysis of algorithm}\label{app:time}
First note that the join operation on two
vector clocks takes $O(\Nt)$ time (assuming arithmetic can be done in constant
time), as this amounts to taking the pointwise maximum across $\Nt$ different
components of the vector clock. Each call to the procedure $\facq$ takes
$O(\Nt^2)$ time (Lines 1 and 2 takes $O(\Nt)$, Line 3 takes $O(\Nt^2)$). For
the $\frel$ procedure, we analyze the total running time of all the invocation
of $\frel$ procedure. 
Note that the total running time of  the while loop  for all the $\frel$ calls
simply depends on the number of entries (acquire/release times) removed from
the queues, which can be bounded by the number of such entries ever added to
the queues. Note that an entry corresponding to a critical section is added to
$\Nt - 1$ queues (Line 3). Hence the while oopin total takes at most $\Ne\Nt$
(as there are at most $\Ne$ acquire/release events in the trace) steps and each
step takes $O(\Nt)$ time since we are dealing with vector clocks, therefore
the while loop takes $O(\Ne\Nt^2)$ time in total across all $\frel$ calls. A
similar aggregate analysis can be done for Lines 7 and 8 to conclude that they
take $O(\Ne\Nt)$ time in total. Line 9 takes $O(\Ne\Nt)$ in total. The for
loop on Line 10 takes $O(\Ne\Nt^2)$ in total.  The total time spent in
processing acquires is therefore $O(\Ne\Nt^2)$. Each $\read/\wr$  procedure
take time proportional to the number of critical sections it is contained in,
which can be at most $\Nl$ and hence the total time spend in them amounts to
$O(\Ne\Nl)$.  The local clock increment not mentioned in the pseudocode takes
takes $O(\Ne)$ time in total (each individual increment in assumed to be
constant) In total therefore the running time of the vector clock algorithm
comes to $O(\Ne(\Nt^2 + \Nl))$.

\section{Lower Bounds}\label{app:lowerbounds}

\begin{figure}[h]
\centering
\begin{tabularx}{0.45\textwidth}{r|XXX|}
\cline{2-4}
 & \quad $t_1$ & \quad $t_2$ & \quad $t_3$\\
\cline{2-4}
1 & $\acq{b_0}$ & & \\
2 & \W{x} & & \\
3 & & & $\acq{m}$ \\
4 & & & $\acrl{y}$ \\
5 & $\acrl{y}$ & & \\
6 & $\rel{b_0}$\tikzmark{b0} & & \\
7 & $\acq{b_1}$ & & \\
8 & $\acrl{y}$ & & \\
9 & & & $\acrl{y}$ \\
10 & & & \tikzmark{m0}$\rel{m}$ \\
11 & & & $\acq{m}$ \\
12 & & & $\acrl{y}$\\
13 & $\acrl{y}$ & & \\
14 & $\rel{b_1}$\tikzmark{b1}& & \\
15 & $\acq{b_2}$& & \\
16 & $\acrl{y}$ & & \\
17 & & & $\acrl{y}$ \\
18 & & & \tikzmark{m1}$\rel{m}$ \\
19 & & & $\acq{m}$ \\
20 & & & $\acrl{y}$\\
21 & $\acrl{y}$ & & \\
22 & $\rel{b_2}$\tikzmark{b2}& & \\
23 & & & \W{z}\\
24 & & & \tikzmark{m2}$\rel{m}$ \\
25 & & $\acq{c_0} $& \\
26 & & \tikzmark{wx}\W{x} & \\
27 & & $\rel{c_0}$ & \\
28 & & $\acq{m}$ & \\
29 & & $\rel{m}$\tikzmark{em0} &  \\
30 & & $\acq{c_1}$ & \\
31 & & \tikzmark{c1}$\rel{c_1}$ & \\
32 & & $\acq{m}$ & \\
33 & & $\rel{m}$\tikzmark{em1} & \\
34 & & $\acq{c_2}$ & \\
35 & & \tikzmark{c2}$\rel{c_2}$ & \\
36 & & $\acq{m}$ & \\
37 & & $\rel{m}$\tikzmark{em2} & \\
38 & & \W{z} & \\
\cline{2-4}
\end{tabularx}
\caption{Example trace for showing linear space lower bound.}\label{fig:lowerbound}
\begin{tikzpicture}[overlay, remember picture, >=stealth, shorten >= 1pt, shorten <= 1pt, thick]
    \draw [->] ({pic cs:b0}) to ({pic cs:wx});
    \draw [->] ({pic cs:m0}) to ({pic cs:em0});
    \draw [->] ({pic cs:b1}) to ({pic cs:c1});
    \draw [->] ({pic cs:m1}) to ({pic cs:em1});
    \draw [->] ({pic cs:b2}) to ({pic cs:c2});
    \draw [->] ({pic cs:m2}) to ({pic cs:em2});
\end{tikzpicture}
\end{figure}

Proof of Theorem~\ref{thm:lb1}: Consider the language $L_n = \{uv\: |\: u,v \in \{0,1\}^n \mbox{ and }
u = v\}$.  Observe that any (finite) automaton recognizing $L_n$ must
have $2^n$ states. This is because if there is an automaton $M$ with
$< 2^n$ states tha{}t recognizes $L_n$, then there are two strings $u_1
\neq u_2 \in \{0,1\}^n$ such that $M$ is in the same state after
reading $u_1$ and $u_2$. This means either $M$ accepts both $u_1u_1$
and $u_1u_2$ or rejects both $u_1u_1$ and $u_1u_2$, which contradicts
the fact that $M$ recognizes $L_n$. Thus, any one pass TM for $L_n$
must use space $n$.

We will essentially show how to ``reduce'' checking membership in $L_n$ to
checking WCP. The reduction uses constantly many threads, locks and variables.
Let us consider the special case of $n = 3$. Suppose the input to $L_3$ is $w
= b_0b_1b_2c_0c_1c_2$. For this input $w$, we will construct the trace shown
in Figure~\ref{fig:lowerbound} (which is a parameterized and extended version
of the trace in Figure~\ref{fig:algoex}), where in the trace, the lock
$b_i,c_j \in \{\ell_0,\ell_1\}$, depending on what the corresponding bit in
$w$ is. Once again the edges shown in the graph correspond to edges produced
by Rules (a) or (b), but in this case the edges are contingent on the values
of $b_i$s and $c_i$s. For example the edge from $\rel{b_0}$ to $\rel{c_0}$ and
the edge between $\rel{m}$ (call this edge $m_0$) on Lines 10 and 29 are
dependent on $b_0 = c_0$. Going forward the the edge from $\rel{b_1}$ to
$\rel{c_1}$ depends on the edge $m_0$ ($b_0 = c_0$) and $b_1 = c_1$. The
argument continues till you reach the edge between $\rel{m}$  between Lines 24
and 37 which requires $b_0b_1b_2 = c_0c_1c_2$. This implies the two \W{z}
events are WCP ordered iff $b_i = c_i$ for all $i$. Therefore checking whether
the word $w = b_0b_1b_2c_0c_1c_2$ boils down to checking whether there is a
WCP-race. 

Proof of Theorem~\ref{thm:lb2}: Note that the lower bound we initially proved in Section~\ref{sec:algo}  applies to algorithms that do a
single pass over the trace. But the above argument can in fact be generalized
to prove a combined time and space trade-off on any algorithm as follows. Note
that the communication complexity of checking if two $n$-bit strings are equal
is $\Omega(n)$. Consider the language $L_n = \{u\#^nv\: |\: u,v \in
\{0,1\}^n \mbox{ and } u = v\}$ and its membership problem. Now consider any
Turing Machine $M$ that is allowed to solve $L_n$ by going back and forth. If
$M$ takes $T(n)$ time then we know it has to make at most $\frac{T(n)}{n}$
``rounds'' of $\#^n$. If the space requirement of $M$ is $S(n)$ then we know
it can carry at most $S(n)$ bits in each rounds, implying it communicated
$\frac{T(n)S(n)}{n}$ bits across the channel of $\#^n$. The total number of
bits it needs to communicate in the end is $n$, which means $T(n)S(n) \in
\Omega(n^2)$. The words $u\#^nv$ can once again be modeled in a trace as
before with $\#^n$ being junk events in the trace and correspondence between
membership in $L_n$ and WCP race detection can be shown exactly in the same
way as before.

\end{document}